 \definecolor{darkgreen}{rgb}{0,0.5,0}
\tikzset{every node/.style={font=\small}}
\DeclareMathOperator\erfc{erfc}
\DeclareMathOperator\tr{tr}
\newcommand{\real}{{\mathbb{R}}}
\newcommand{\R}{{\mathbb{R}}}
\newcommand{\PP}{{\mathbb{P}}}
\newcommand{\E}{{\mathbb{E}}}
\newcommand{\G}{{\mathcal{G}}}
\newcommand{\D}{{\mathcal{D}}}
\newcommand{\F}{{\mathcal{F}}}
\newcommand{\C}{{\mathcal{C}}}
\newcommand{\K}{{\mathcal{K}}}
\newcommand{\Cinit}{\subscr{\C}{init}}
\newcommand{\Cund}{\subscr{\C}{und}}
\newcommand{\Cdyn}{\subscr{\C}{dyn}}
\newtheorem{theorem}{Theorem}[section]
\newtheorem{corollary}{Corollary}[section]
\newtheorem{proposition}{Proposition}[section]
\newtheorem{lemma}{Lemma}[section]
\newtheorem{sublemma}{Sublemma}[section]
\theoremstyle{definition}
\newtheorem{remark}{Remark}[section]
\newcommand{\eps}{\varepsilon}
\newcommand{\map}[3]{#1:#2 \rightarrow #3}
\newcommand\subscr[2]{#1_{\textup{#2}}}
\newcommand{\longthmtitle}[1]{\mbox{}\textup{\textbf{(#1):}}}
\newcommand{\N}{\mathcal{N}}
\newcommand{\oprocendsymbol}{\hbox{$\diamond$}}
\newcommand{\oprocend}{\relax\ifmmode\else\unskip\hfill\fi\oprocendsymbol}
\begin{document}

\begin{abstract} 
We provide the first known algorithm that provably achieves $\varepsilon$-optimality within $\widetilde{\mathcal{O}}(1/\varepsilon)$ function evaluations for the discounted discrete-time LQR problem with unknown parameters, without relying on two-point gradient estimates. These estimates are known to be unrealistic in many settings, as they depend on using the exact same initialization, which is to be selected randomly, for two different policies. Our results substantially improve upon the existing literature outside the realm of two-point gradient estimates, which either leads to $\widetilde{\mathcal{O}}(1/\varepsilon^2)$ rates or heavily relies on stability assumptions. \end{abstract}

\title{Sample Complexity of the Linear Quadratic Regulator: A Reinforcement Learning Lens}

\author[Amirreza Neshaei Moghaddam]{Amirreza Neshaei Moghaddam}
\address{Department of Electrical and Computer Engineering\\
University of California at Los Angeles,
Los Angeles}
\email{amirnesha@ucla.edu}
\author[Alex Olshevsky]{Alex Olshevsky}
\address{Department of Electrical and Computer Engineering\\
       Boston University}
\email{alexols@bu.edu}
\author[Bahman Gharesifard]{Bahman Gharesifard}
\address{Department of Electrical and Computer Engineering\\
       University of California, Los Angeles}
\email{gharesifard@ucla.edu}

\maketitle

\section{Introduction}

The Linear-Quadratic Regulator (LQR) has been used as a benchmark in optimal control theory since the sixties, see~\cite{EBL-LM:67}. The key distinguishing property of LQR problems is that the optimal controller is linear and can be fully characterized by the celebrated Riccati equation~\cite{DPB:95}. Naturally, with the recent increase in interest in model-free and data-driven methods, the study of LQR problems has resurfaced in the literature in scenarios where the model parameters are unknown and either need to be estimated, or model-free strategies need to be used. Even though such settings fall within the realm of adaptive control, the majority of classical studies addressing this issue have concentrated on system identification or examining asymptotic outcomes~\cite{TLL-CZW:82, HFC-LG:87, HFC-JFZ:90, MCC-PRK:98, SB-MCC:06}.

Recently, the problem has been examined from a machine learning standpoint in both online and offline contexts. In online settings, least-square estimators have been demonstrated to achieve sublinear regret. This area has seen extensive research focusing on the details of these estimations~\cite{YA-CS:11, AC-TK-YM:19,HM-ST-BR:19,MA-BG-TL:22-jmlr,MS-DF:20}. This paper focuses on the offline setting and builds on a sequence of breakthrough results through a reinforcement learning lens, starting with~\cite{MF-RG-SK-MM:18}. By establishing a gradient domination/Polyak-Lojasiewicz property, the results of~\cite{MF-RG-SK-MM:18} first demonstrate that exact gradient descent, in the model-based case, converges to the global optimal solution, despite the non-convex landscape of the LQR problem under study. Using this and in the model-free settings,  gradient estimations are derived from samples of the cost function value, leading to policy gradient methods. For the undiscounted discrete-time LQR under the random initialization setting, global convergence guarantees are provided using so-called one-point gradient estimates. As also explicitly pointed out in later work~\cite{DM-AP-KB-KK-PLB-MJW:18}, the convergence rate for obtaining an $\eps$-optimal policy established in~\cite{MF-RG-SK-MM:18} is only of the order $\widetilde{\mathcal{O}} (1/\eps^4)$ in zero-order evaluations. Note that by zero-order methods, we mean a setup where gradients are not available and can only be approximated using samples of the function value. The two most common such methods in the LQR problem are the one-point and two-point estimates where the former is obtained from a single function evaluation and the latter from two different such evaluations\footnote{We give the formal definitions of these estimates in equations~\eqref{eq: one-point} and~\eqref{eq: two-point}.}.

The next significant development related to our work is presented in~\cite{DM-AP-KB-KK-PLB-MJW:18}, which considers the discounted discrete-time LQR and employs zero-order methods for gradient estimation. For the essential case of one-point gradient estimation, an enhanced analysis is proposed. This analysis does not rely on stability assumptions (i.e., it does not assume a priori that the policies remain stable throughout the algorithm), yet improves the convergence rate reported in~\cite{MF-RG-SK-MM:18} from $\widetilde{\mathcal{O}} (1/\eps^4)$ to $\widetilde{\mathcal{O}} (1/\eps^2)$. Remarkably, with a two-point gradient estimate, $\eps$-optimality can be achieved using only $\widetilde{\mathcal{O}} (1/\eps)$ function evaluations. Similar findings are reported in~\cite{HM-AZ-MS-MRJ:22}, which are somewhat restrictive in terms of scaling of probability bounds with respect to dimensions. The substantial improvement in~\cite{DM-AP-KB-KK-PLB-MJW:18} stems from the application of sharp probabilistic estimates on stability regions using martingale techniques, a method we also heavily rely on. It should be noted that in both mentioned works, a constant learning rate is employed for the policy update. Interestingly, it is not difficult to observe that there is no advantage in using time-varying learning rates when the technique developed in~\cite{DM-AP-KB-KK-PLB-MJW:18} is applied directly.

It is worth pointing out the literature related to the discrete-time LQR problem with time-average cost. 
For instance, \cite{ZY-YC-MH-ZW:19} employs an actor-critic approach to achieve a sample complexity of $\widetilde{\mathcal{O}}(1/\eps^5)$. Similarly, using actor-critic methods, \cite{MZ-JL:23} demonstrates that a sample complexity of $\widetilde{\mathcal{O}}(1/\eps)$ is achievable, assuming almost sure stability and boundedness of the policy size throughout the algorithm.
However, the assumption of boundedness may not always be realistic, and more so is the assumption on stability, considering the inherently noisy dynamics. For example, this issue is echoed in the recent work~\cite{CJ-GK-GL:23}, which presupposes the boundedness of policies at every iteration.

As part of our contributions, 
and somewhat inspired by REINFORCE \cite{RJW:92, RSS-DM-SS-YM:99}, we propose a different gradient estimate scheme. Our approach relies on a new take on using policy gradient for gradient estimation based on appropriate sampling of deterministic policies, and only requires a single noisy cost evaluation, unlike two-point methods that require two evaluations under an identical noise realization~\cite{DM-AP-KB-KK-PLB-MJW:18}.
We are able to achieve high-probability upper bounds on our gradient estimations using moment concentration inequalities. Coupled with the adoption of time-varying learning rates, our methodology enables us to reach a $\widetilde{\mathcal{O}}(1/\eps)$ convergence rate, circumventing the need for two-point gradient estimations. 

Similar to~\cite{DM-AP-KB-KK-PLB-MJW:18}, our gradient estimate relies on an oracle that returns noisy zero-order evaluations of the cost function. Moreover, we assume access to a single state observation drawn randomly from the discounted state distribution. We consider this assumption milder than that of~\cite{SD-HM-NM-BR-ST:20}, which requires access to an entire state trajectory, or~\cite{DM-AP-KB-KK-PLB-MJW:18}, whose two-point method implicitly assumes the ability to both observe and \emph{select} a specific random initial state for a second policy rollout-something that is rarely feasible in realistic systems.

\section{Problem statement}
We start with a few mathematical notations that will be used throughout. 
For arbitrary matrix $M \in \R^{m \times n}$, we use $\lVert M \rVert$, $\lVert M \rVert_F$, and $\sigma_{\mathrm{min}} (M)$ to denote the 2-norm, Frobenius norm, and the minimum singular value of $M$ respectively. In addition, for a square matrix $\tilde{M} \in \R^{n \times n}$, $\rho (\tilde{M})$ denotes the spectral radius of $\tilde{M}$, $\tr(\tilde{M})$ the trace of $\tilde{M}$, and $\mathcal{K} (\tilde{M})$ the Kreiss constant of $\tilde{M}$:
\begin{equation}
    \mathcal{K} (\tilde{M}) := \sup_{\substack{\vert z \vert > 1, z \in \mathbb{C}}} (\vert z \vert - 1) \lVert (z I - \tilde{M})^{-1} \rVert.
\end{equation}
We also use $\langle M_1, M_2 \rangle := \tr (M_1^\top M_2)$ to denote the inner product of the matrices $M_1, M_2 \in \R^{m \times n}$.

Let us now define the problem under study. We consider the discrete-time infinite-horizon discounted LQR problem
\begin{equation}\label{eq: main-LQR}
    \min \E \left[ \sum_{t \geq 0} \gamma^t c_t \right] \quad \text{s.t.} \quad x_{t+1} = A x_t + B u_t + z_t,
\end{equation}
where $x_t \in \R^{n}$ is the system state at time $t$, initialized (deterministically or randomly) at $x_0$; $u_t \in \R^m$ is the control input at time $t$; and $z_t \in \R^{n}$ is the additive noise of the system at time $t$. The stage cost is defined as
\[
c_t := x_t^\top Q x_t + u_t^\top R u_t,
\]
where $Q \in \R^{n \times n}$ and $R \in \R^{m \times m}$ are positive-definite matrices that parameterize the quadratic costs. The system matrices are $A \in \R^{n \times n}$ and $B \in \R^{n \times m}$. In most  of what follows, we assume that the pair $(A,B) $ is controllable. 

As noted above, randomness is introduced in two different ways in the above problem formulation: through the initialization or as an added disturbance to the dynamics. This has led to two separate scenarios considered in the literature: 
\begin{itemize}
    \item \textbf{Random initialization:} where it is assumed that the additive noise $z_t$ is zero for all $t \geq 0$, and that the initial state $x_0$ is randomly chosen from an initial distribution $\D_0$. Given the initial state $x_0$, we let $\C_{\text{init},\gamma} (K;x_0)$ be the random variable representing the cost of implementing the linear policy $K \in \R^{m \times n}$, i.e., choosing $u_t = -K x_t$ for $t \geq 0$, from the initial state $x_0$:
    \begin{equation}
        \Cinit (K;x_0) := \sum_{t = 0}^{\infty} \gamma^t (x_t^\top Q x_t + u_t^\top R u_t), \label{eq: initial_cost_wrt_state}
    \end{equation}
    where $0 < \gamma \leq 1$ is the discount factor, and the dynamics is given by~\eqref{eq: main-LQR} with $z_t=0$. That is, in this case the trajectories satisfy the dynamics 
     \begin{align}
        x_{t+1} =& A x_t + B u_t, \cr
        u_t =& -K x_t.\label{eq:sys-no-noise}
    \end{align}
    Naturally, the objective is to minimize the population cost defined as
    \begin{equation}
        \Cinit(K) := \E_{x_0 \sim \D_0} [\Cinit(K; x_0)]
    \end{equation}
    over choices of the policy $K$.
    \item \textbf{Noisy dynamics:} where it is assumed $z_t$ is drawn i.i.d. for each $t$ from a distribution $\D_{\text{add}}$, and that the initial state $x_0$ is set deterministically to zero. Given a sequence of random variables $\mathcal{Z} = \{ z_t \}_{t \geq 0}$, we let $\Cdyn(K; \mathcal{Z})$ be the random variable representing the cost of implementing the linear policy $K$ on a system where the additive noise is drawn from $\mathcal{Z}$, i.e.,
    \begin{equation}
    \Cdyn(K; \mathcal{Z}) := \sum_{t = 0}^{\infty} \gamma^t (x_t^\top Q x_t + u_t^\top R u_t),
    \end{equation}
    where we have set $x_0 = 0$, the dynamics is given by~\eqref{eq: main-LQR} with $u_t = -K x_t$ for each $t \geq 0$, and $0 < \gamma < 1$ is the discount factor. 
    In contrast to the random initialization setting, the discount factor in this setting obeys $\gamma < 1$ to prevent the cost from diverging to infinity for all $K$ due to the accumulation of noise over time.
    Once again, the objective is to minimize the population cost
    \begin{equation}
    \Cdyn(K) := \E_{\mathcal{Z} \sim
      \D_{\text{add}}^{\mathbb{N}}} [\Cdyn(K; \mathcal{Z})].
    \end{equation}
\end{itemize}

By classical results in optimal control theory, see e.g.,~\cite{REK-etal:60, EBL-LM:67}, the optimal controller in both cases is linear and can be expressed as~$u_t = - K^* x_t $
where $t \geq 0$ and $K^* \in \mathbb{R}^{m \times n}$ is the controller gain, and can be explicitly computed. When the system matrices are known, which is not the case in this paper, the policy $K^*$ can be derived as follows 
\begin{equation}
    K^* = \gamma (R + \gamma B^\top P B)^{-1} B^\top P A, \label{eq: K* explicit formulation}
\end{equation}
where $P$ denotes the unique positive definite solution to the discounted discrete-time algebraic Riccati equation~\cite{DPB:95}:
\begin{equation}
    P = \gamma A^\top P A - \gamma^2 A^\top P B (R + \gamma B^\top P B)^{-1} B^\top P A + Q. \label{eq: discounted discrete-time ARE}
\end{equation}
Throughout this paper, we closely follow the notation and terminology that is introduced in the seminal work~\cite{DM-AP-KB-KK-PLB-MJW:18}. To start, for a random variable $v \sim \D$ where $\D \in \left\{ \D_0, \D_{\text{add}} \right\}$, we assume that
\begin{equation}
    \E[v] = 0, \quad \E[v v^\top] = I, \ \text{and} \ \lVert v \rVert^2 \leq C_m \ \ \text{a.s.} \label{eq: noise_assumption}
\end{equation}
where as per usual, ``a.s.'' refers to almost surely.  
The assumption on the covariance being identity is without loss of generality, see~\cite{DM-AP-KB-KK-PLB-MJW:18}.
Moreover, it is noteworthy to mention that using the definition \eqref{eq: initial_cost_wrt_state} with the trajectories following \eqref{eq:sys-no-noise}, the cost for the random initialization setting can be rewritten as
\begin{equation}
    \Cinit (K; x_0) = x_0^\top P_K x_0, \label{eq: cost wrt state via P_K}
\end{equation}
where $P_K$ is the symmetric positive semi-definite solution to the fixed point equation:
\begin{equation}
    P_K = Q + K^\top R K + \gamma (A-BK)^\top P_K (A-BK).
\end{equation}
Consequently, it also holds that
\begin{align}
    \Cinit(K) &= \E_{x_0 \sim \D_0} [\Cinit(K; x_0)] \cr
    &= \E_{x_0 \sim \D_0} [x_0^\top P_K x_0] \cr
    &= \E_{x_0 \sim \D_0} [\tr (P_K x_0 x_0^\top)] \cr
    &= \tr (P_K \E_{x_0 \sim \D_0} [x_0 x_0^\top]) \cr
    &\overset{\mathrm{(i)}}{=} \tr (P_K), \label{eq: cost defined via P_K}
\end{align}
where (i) follows from assumption~\eqref{eq: noise_assumption} on the randomness. Although this formulation is stated for the cost under the random initialization setting, it turns out that the two costs are essentially equivalent when the respective systems are driven by noise with the same first two moments, in the sense that is shown in Lemma~\ref{lem: noisy-random cost equivalence} to follow. For this reason, we focus on the random initialization scenario henceforth.

Let us now state the problem that we consider throughout this paper. We recall here that we assume that the pair $(A,B)$ is controllable, however, \emph{unknown}. A policy $K$ is said to stabilize the system $(A,B)$ if we have $\rho(A-BK) < 1$. Note that by the controllability assumption, there exists some policy $K$ satisfying the condition $\rho(A-BK) < 1$. Furthermore, we assume access to~\emph{some} stable policy $K_0$; this is a mild assumption that can be satisfied in a variety of ways; we refer the reader to~\cite{MF-RG-SK-MM:18, SD-HM-NM-BR-ST:18}. We use $K_0$ to initialize our algorithms, which we shortly introduce. 

With this in mind, the main objective of this paper is to find an $\eps$-optimal policy $\hat{K}$, i.e., one satisfying
\[
\Cinit(\hat{K}) - \Cinit(K^*) \leq \eps,
\]
where $K^*$ is an optimal policy. The proposed scheme in the literature crucially involves forming an estimation of the gradient of the cost function~\eqref{eq: initial_cost_wrt_state}, which is then used for a gradient update with an appropriate learning rate. 

To make our later comparisons precise and to clarify the discussions emphasized earlier, we now recall the standard forms of the one-point and two-point estimates. The one-point estimate at a policy $K \in \mathbb{R}^{m \times n}$ is computed as
\begin{equation}
    \textsl{g}_r^1 (K) := \Cinit (K + r U;x_0) \cdot \frac{m n}{r} \, U, \label{eq: one-point}
\end{equation}
for a smoothing radius $r \in \mathbb{R}$ and a random matrix $U \in \mathbb{R}^{m \times n}$ drawn uniformly over matrices with unit Frobenius norm. The two-point estimate instead uses
\begin{equation}
    \textsl{g}_r^2 (K) := \left[ \Cinit (K + r U;x_0) - \Cinit (K - r U;x_0) \right] \cdot \frac{m n}{2r} \, U, \label{eq: two-point}
\end{equation}
which requires cost evaluations under two different policies, $K + rU$ and $K - rU$, with respect to the \emph{same} initial condition $x_0$. This is often unrealistic in practice, since $x_0$ is typically random and not something the algorithm can choose or reproduce across rollouts. The estimator we propose later avoids this assumption and instead works by just using a single noisy cost evaluation along one perturbed trajectory.

In accordance with this, we present an algorithm here, displayed as Algorithm~\ref{alg: LQR_policy_gradient}, where we use an estimate inspired by the REINFORCE method \cite{RJW:92, RSS-DM-SS-YM:99} with a time-varying learning rate to achieve $\eps$-optimality. Below, we present a brief roadmap of the key contributions and supporting arguments developed in this paper.

\begin{figure}[H]
\centering
\begin{tikzpicture}[
  node distance=1.2cm and 1.2cm,
  box/.style = {draw, rounded corners, text width=4cm, align=center, font=\small}
]

\node[box] (lem_stability) {Lemma~\ref{lem: gamma condition} \\ (stability guarantees)};
\node[box, right=of lem_stability] (lem_size_bounds) {\textbf{Lemma~\ref{lem: gradient estimate bounds} \\ (probabilistic bounds on the estimate)}};
\node[box, below=of lem_stability] (prop_unbiased_estimate) {\textbf{Proposition~\ref{prop: gradient estimate expectation} \\ (unbiased estimate)}};
\node[box, right=of prop_unbiased_estimate] (lem_estimate_bias) {Lemma~\ref{lem: gradient estimate conditioned} \\ (estimate bias conditioned on bounded size)};
\node[box, below=of lem_estimate_bias] (lem_regularity) {\textbf{Lemmas~\ref{lem:lipschitz_gradient_lqr},~\ref{lem: PL_lqr} \\ (regularity properties)}};
\node[box, right=of lem_estimate_bias] (thm_convergence) {Theorem~\ref{thm: lqr policy gradient} \\ (main convergence statement)};

\draw[->] (lem_stability) -- (lem_size_bounds);
\draw[->] (lem_size_bounds) -- (lem_estimate_bias);
\draw[->] (prop_unbiased_estimate) -- (lem_estimate_bias);
\draw[->] (lem_size_bounds) -- (thm_convergence);
\draw[->] (lem_estimate_bias) -- (thm_convergence);
\draw[->] (lem_regularity) -- (thm_convergence);

\end{tikzpicture}
\caption{Roadmap of the main technical results.}
\label{fig:result_roadmap}
\end{figure}

In this diagram, we omit most intermediate steps and highlight (in bold) the main components that the convergence theorem ultimately depends on. Among these, \textbf{Lemma~\ref{lem: gradient estimate bounds}} and \textbf{Proposition~\ref{prop: gradient estimate expectation}} are the core novel contributions of this paper. The regularity properties (Lemmas~\ref{lem:lipschitz_gradient_lqr},~\ref{lem: PL_lqr}), which we will discuss in detail in the next section, are adapted from prior work~\cite{DM-AP-KB-KK-PLB-MJW:18} and included here for completeness.

\subsection{Regularity properties}
We introduce some notations related to the regularity properties of the cost functions; these will play a crucial role in some of our bounds; the next few results are borrowed from~\cite{DM-AP-KB-KK-PLB-MJW:18}. 

\begin{lemma}[LQR Cost is locally Lipschitz] \cite[Lemma 4]{DM-AP-KB-KK-PLB-MJW:18}
\label{lem:lipschitz_cost_lqr}
Given any linear policy $K$ with finite cost, there exist positive scalars $(\lambda_{K}, \widetilde{\lambda_{K}}, \zeta_{K})$, depending on the function value $\Cinit(K)$, such that for all policies $K'$
satisfying ${\lVert K' - K \rVert_F \leq \zeta_{K}}$, and for all initial states $x_0$, we have
\begin{subequations}
\begin{align}
\vert \Cinit(K') - \Cinit(K) \vert \leq & \lambda_{K} \lVert K' - K \rVert_F, \text{ and} \\
\vert \Cinit(K'; x_0) - \Cinit(K; x_0) \vert \leq & \widetilde{\lambda_{K}} \lVert K' - K \rVert_F.
\end{align}
\end{subequations}
\end{lemma}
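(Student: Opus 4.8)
The plan is to reduce both inequalities to a single statement: the map $K \mapsto P_K$ is locally Lipschitz at $K$ in the spectral norm, with a radius and constant controlled by $\Cinit(K)$ and the problem data $(A,B,Q,R,\gamma)$. Indeed, using $\Cinit(K) = \tr(P_K)$ and $\Cinit(K;x_0) = x_0^\top P_K x_0$ from~\eqref{eq: cost wrt state via P_K}--\eqref{eq: cost defined via P_K}, once we establish
\[
\lVert P_{K'} - P_K \rVert \le L_K \, \lVert K' - K \rVert_F \qquad \text{whenever } \lVert K' - K \rVert_F \le \zeta_K ,
\]
we immediately get $\lvert \Cinit(K') - \Cinit(K)\rvert = \lvert \tr(P_{K'} - P_K)\rvert \le n\lVert P_{K'} - P_K\rVert$, so $\lambda_{K} := n L_K$ works, and $\lvert \Cinit(K';x_0) - \Cinit(K;x_0)\rvert = \lvert x_0^\top (P_{K'} - P_K) x_0\rvert \le \lVert x_0\rVert^2 \lVert P_{K'} - P_K\rVert$, so $\widetilde{\lambda_{K}} := C_m L_K$ works on the support of $\D_0$ by the almost-sure bound $\lVert x_0\rVert^2 \le C_m$ of~\eqref{eq: noise_assumption} (more generally, $\widetilde{\lambda_{K}}$ scales linearly with $\lVert x_0\rVert^2$).

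For the Lipschitz bound on $P_K$, write $A_K := A - BK$ and let $\mathcal{T}_{\tilde K}(X) := \gamma A_{\tilde K}^\top X A_{\tilde K}$ be the discounted Lyapunov operator on symmetric matrices; finiteness of the cost is equivalent to $I - \mathcal{T}_{\tilde K}$ being invertible, in which case $P_{\tilde K} = (I - \mathcal{T}_{\tilde K})^{-1}(Q + \tilde K^\top R \tilde K) = \sum_{t \ge 0} \gamma^t (A_{\tilde K}^\top)^t (Q + \tilde K^\top R \tilde K) A_{\tilde K}^t$. Subtracting the fixed-point equations for $P_{K'}$ and $P_K$ and regrouping the closed-loop terms around $A_{K'}$ gives the exact identity $(I - \mathcal{T}_{K'})(P_{K'} - P_K) = \Delta_1 + \gamma \Delta_2$ with $\Delta_1 := K'^\top R K' - K^\top R K$ and $\Delta_2 := A_{K'}^\top P_K A_{K'} - A_K^\top P_K A_K$. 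Since $A_{K'} - A_K = -B(K' - K)$, expanding each $\Delta_i$ using bilinearity and using $\lVert K'\rVert \le \lVert K\rVert + \zeta_K$, $\lVert A_{K'}\rVert \le \lVert A_K\rVert + \lVert B\rVert \zeta_K$ on the neighborhood yields $\lVert \Delta_1\rVert + \gamma \lVert \Delta_2\rVert \le C_1(K)\,\lVert K' - K\rVert_F$, where $C_1(K)$ depends polynomially on $\lVert A\rVert, \lVert B\rVert, \lVert R\rVert, \lVert K\rVert, \lVert P_K\rVert, \zeta_K$; these are in turn controlled by $\Cinit(K)$, since $\sigma_{\mathrm{min}}(R)\lVert K\rVert_F^2 \le \tr(K^\top R K) \le \tr(P_K) = \Cinit(K)$ and $\lVert P_K\rVert \le \tr(P_K) = \Cinit(K)$.

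It remains to invert $I - \mathcal{T}_{K'}$ with a controlled norm. The clean estimate is $\lVert (I - \mathcal{T}_{\tilde K})^{-1}\rVert_{\mathrm{op}} \le \Cinit(\tilde K)/\sigma_{\mathrm{min}}(Q)$ for every stabilizing $\tilde K$: the operator is positivity-preserving, so for symmetric $Y$ with $-I \preceq Y \preceq I$ one has $\lVert (I - \mathcal{T}_{\tilde K})^{-1}(Y)\rVert \le \lVert (I - \mathcal{T}_{\tilde K})^{-1}(I)\rVert = \lVert \sum_{t\ge0}\gamma^t (A_{\tilde K}^\top)^t A_{\tilde K}^t\rVert$, and $P_{\tilde K} \succeq \sigma_{\mathrm{min}}(Q)\sum_{t\ge0}\gamma^t(A_{\tilde K}^\top)^t A_{\tilde K}^t$ forces this to be at most $\lVert P_{\tilde K}\rVert/\sigma_{\mathrm{min}}(Q) \le \Cinit(\tilde K)/\sigma_{\mathrm{min}}(Q)$. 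Applying this at $\tilde K = K'$ gives $\lVert P_{K'} - P_K\rVert \le \frac{\Cinit(K')}{\sigma_{\mathrm{min}}(Q)}\,C_1(K)\,\lVert K' - K\rVert_F$.

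The main obstacle — and the only nontrivial point — is the circular appearance of $\Cinit(K')$ on the right-hand side: to close the argument I would first certify an a priori bound $\Cinit(K') \le 2\Cinit(K)$ on the whole $\zeta_K$-ball by a bootstrap along the segment $K_s := K + s(K' - K)$, $s \in [0,1]$. The sublevel set $\{\tilde K : \Cinit(\tilde K) \le 2\Cinit(K)\}$ is compact — bounded since $\lVert \tilde K\rVert_F^2 \le 2\Cinit(K)/\sigma_{\mathrm{min}}(R)$, and closed since $\Cinit$ is continuous on the open stabilizing region and blows up at its boundary — so $S := \{s \in [0,1] : \Cinit(K_s) \le 2\Cinit(K)\}$ is closed and contains $0$; on the connected component $[0,s^*] \subseteq S$ containing $0$ the previous display applies with $\Cinit(K_s) \le 2\Cinit(K)$, hence $\lvert \Cinit(K_{s^*}) - \Cinit(K)\rvert \le \frac{2n\Cinit(K)}{\sigma_{\mathrm{min}}(Q)} C_1(K)\,\zeta_K$, and choosing $\zeta_K$ small enough that the right side is $< \Cinit(K)$ makes $\Cinit(K_{s^*}) < 2\Cinit(K)$, which by continuity rules out $s^* < 1$. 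Thus $s^* = 1$, $\Cinit(K') \le 2\Cinit(K)$, and the Lipschitz bound holds with $L_K := \frac{2\Cinit(K)}{\sigma_{\mathrm{min}}(Q)}C_1(K)$, which yields the stated $\lambda_{K}, \widetilde{\lambda_{K}}, \zeta_{K}$. An alternative to the bootstrap, closer to the resolvent and Kreiss-constant tools used elsewhere in the paper, is to bound $\sup_{t\ge0}\lVert (\sqrt{\gamma}A_K)^t\rVert$ — equivalently $\mathcal{K}(\sqrt{\gamma}A_K)$ up to a dimensional factor via the Kreiss matrix theorem — by $\Cinit(K)/\sigma_{\mathrm{min}}(Q)$ through the Lyapunov inequality, and then to show this quantity is stable under the perturbation $A_{K'} = A_K - B(K' - K)$; the difficulty is the same in either route, namely guaranteeing \emph{uniform} stability of the closed loop over the entire neighborhood rather than only at $K$.
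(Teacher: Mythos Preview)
The paper does not prove this lemma itself: it is cited verbatim from \cite[Lemma~4]{DM-AP-KB-KK-PLB-MJW:18}, with the explicit constants deferred to Appendix~A of that reference (cf.\ Remark~\ref{rem: regularity parameters}). There is therefore no in-paper proof to compare against.

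Your argument is correct and follows what is essentially the standard route (and is close in spirit to the proof in the cited reference): reduce both inequalities to Lipschitzness of $K \mapsto P_K$ via $\Cinit(K) = \tr P_K$ and $\Cinit(K;x_0) = x_0^\top P_K x_0$; subtract the fixed-point equations and regroup to obtain $(I-\mathcal{T}_{K'})(P_{K'}-P_K)=\Delta_1+\gamma\Delta_2$; bound $\lVert(I-\mathcal{T}_{K'})^{-1}\rVert$ by $\Cinit(K')/\sigma_{\mathrm{min}}(Q)$ using the positivity of the discounted Lyapunov operator together with $P_{K'} \succeq \sigma_{\mathrm{min}}(Q)\sum_{t\ge 0}\gamma^t (A_{K'}^\top)^t A_{K'}^t$; and finally break the circular appearance of $\Cinit(K')$ via a continuity bootstrap along the segment $K_s$. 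The only step where a word more would not hurt is the last one: to conclude $s^*=1$ you implicitly use that the stabilizing region is open and that $\Cinit$ is continuous on it, so that $\Cinit(K_{s^*})<2\Cinit(K)$ forces an open neighborhood of $s^*$ to lie in $S$ --- you state this but it is worth making the openness explicit, since otherwise $\Cinit$ need not even be defined just past $s^*$.
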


\begin{lemma}[LQR Cost has locally Lipschitz Gradients] \cite[Lemma~5]{DM-AP-KB-KK-PLB-MJW:18}
\label{lem:lipschitz_gradient_lqr}
Given any linear policy $K$ with finite cost, there exist positive scalars $(\beta_{K}, \phi_{K})$, depending on the function value $\Cinit(K)$, such that for all policies $K'$ satisfying $\lVert K' - K \rVert_F \leq \beta_{K}$, we have
\begin{align}
\lVert \nabla \Cinit(K') - \nabla \Cinit(K) \rVert_F \leq \phi_{K} \lVert K' - K \rVert_F.
\end{align}
\end{lemma}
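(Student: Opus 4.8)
The plan is to reduce everything to the closed form of the policy gradient. Write $A_K := A - BK$, let $P_K$ be the PSD solution of the fixed-point equation recalled above, and introduce the discounted state covariance $\Sigma_K := \E_{x_0 \sim \D_0} \sum_{t \ge 0} \gamma^t\, x_t x_t^\top$, which, since $\E[x_0 x_0^\top] = I$, is the unique PSD solution of the Lyapunov equation $\Sigma_K = I + \gamma A_K \Sigma_K A_K^\top$. A standard computation (cf.~\cite{MF-RG-SK-MM:18,DM-AP-KB-KK-PLB-MJW:18}) then yields
\[
\nabla \Cinit(K) = 2\, E_K \Sigma_K, \qquad E_K := \big( R + \gamma B^\top P_K B \big) K - \gamma B^\top P_K A .
\]
So my goal is to show that, on a ball $\lVert K' - K \rVert_F \le \beta_K$, the maps $K' \mapsto P_{K'}$, $K' \mapsto \Sigma_{K'}$, and hence $K' \mapsto E_{K'}$ (via the formula above), are Lipschitz with constants depending only on $\Cinit(K)$ and the fixed system/cost data, together with uniform norm bounds on $P_{K'}, \Sigma_{K'}, E_{K'}$ over that ball. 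Given these, the splitting
\[
\nabla \Cinit(K') - \nabla \Cinit(K) = 2\,(E_{K'} - E_K)\,\Sigma_{K'} \;+\; 2\, E_K\,(\Sigma_{K'} - \Sigma_K)
\]
together with the triangle inequality immediately produces $\phi_K$.

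First I would collect the a priori bounds that follow just from finiteness of the cost. Introduce the Lyapunov operators $\mathcal{L}_K(X) := X - \gamma A_K^\top X A_K$ and $\mathcal{M}_K(X) := X - \gamma A_K X A_K^\top$ on symmetric matrices, so that $P_K = \mathcal{L}_K^{-1}(Q + K^\top R K)$ and $\Sigma_K = \mathcal{M}_K^{-1}(I)$, and note that both inverses are positivity-preserving, with $\mathcal{L}_K^{-1}(X) = \sum_{t \ge 0} \gamma^t (A_K^\top)^t X (A_K)^t$ and similarly for $\mathcal{M}_K^{-1}$. For any positivity-preserving linear map $\Phi$ one has $\lVert \Phi \rVert = \lVert \Phi(I) \rVert$ (apply $\Phi$ to $\lVert X \rVert I \pm X \succeq 0$), so $\lVert \mathcal{M}_K^{-1} \rVert = \lVert \Sigma_K \rVert$ and, using $I \preceq Q/\sigma_{\min}(Q) \preceq (Q + K^\top R K)/\sigma_{\min}(Q)$ and monotonicity of $\mathcal{L}_K^{-1}$, $\lVert \mathcal{L}_K^{-1} \rVert = \lVert \mathcal{L}_K^{-1}(I) \rVert \le \lVert P_K \rVert/\sigma_{\min}(Q)$. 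Since $P_K \succeq 0$ gives $\lVert P_K \rVert \le \tr(P_K) = \Cinit(K)$, and since $\Cinit(K) = \tr\!\big((Q + K^\top R K)\Sigma_K\big) \ge \sigma_{\min}(Q)\,\tr(\Sigma_K)$ gives $\lVert \Sigma_K \rVert \le \Cinit(K)/\sigma_{\min}(Q)$, I obtain explicit bounds on $\lVert P_K \rVert$, $\lVert \Sigma_K \rVert$, $\lVert \mathcal{L}_K^{-1} \rVert$, $\lVert \mathcal{M}_K^{-1} \rVert$, and (via the formula for $E_K$) $\lVert E_K \rVert$ in terms of $\Cinit(K)$, $\lVert A \rVert$, $\lVert B \rVert$, $\lVert R \rVert$, $\gamma$, and $\sigma_{\min}(Q)$. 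To make these uniform over a neighborhood of $K$, I would invoke the already-established local Lipschitz continuity of the cost (Lemma~\ref{lem:lipschitz_cost_lqr}): taking $\beta_K \le \min\{1, \zeta_K\}$ guarantees $\Cinit(K') \le \Cinit(K) + \lambda_K \zeta_K =: \bar C_K$ for every $K'$ in the ball, so every such $K'$ has finite cost (hence is stabilizing) and all the preceding bounds hold with $\Cinit(K)$ replaced by $\bar C_K$.

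Next I would establish the Lipschitz estimates. Subtracting the fixed-point equations for $P_{K'}$ and $P_K$ and writing $\Delta := P_{K'} - P_K$, one gets $\mathcal{L}_{K'}(\Delta) = (K'^\top R K' - K^\top R K) - (\mathcal{L}_{K'} - \mathcal{L}_K)(P_K) =: S$, where each term of $S$ is linear or quadratic in $\lVert K' - K \rVert_F$ with coefficients controlled by $\bar C_K$ and the fixed data (using $A_{K'} = A_K - B(K' - K)$ and $\lVert K' \rVert_F \le \lVert K \rVert_F + \beta_K$); since $\beta_K \le 1$, this gives $\lVert S \rVert_F = O(\lVert K' - K \rVert_F)$, hence $\lVert \Delta \rVert_F \le \lVert \mathcal{L}_{K'}^{-1} \rVert\,\lVert S \rVert_F$ with $\lVert \mathcal{L}_{K'}^{-1} \rVert \le \bar C_K/\sigma_{\min}(Q)$. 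The identical argument applied to $\mathcal{M}_{K'}(\Sigma_{K'} - \Sigma_K) = (\mathcal{M}_K - \mathcal{M}_{K'})(\Sigma_K)$ yields $\lVert \Sigma_{K'} - \Sigma_K \rVert_F = O(\lVert K' - K \rVert_F)$, and the explicit formula for $E_K$, combined with the bound just obtained on $\lVert P_{K'} - P_K \rVert_F$ and the uniform norm bounds, gives $\lVert E_{K'} - E_K \rVert_F = O(\lVert K' - K \rVert_F)$. Substituting these three Lipschitz bounds and the uniform bounds on $\lVert E_K \rVert$ and $\lVert \Sigma_{K'} \rVert$ into the splitting of $\nabla \Cinit(K') - \nabla \Cinit(K)$ collapses all the constants into a single $\phi_K$, a function of $\Cinit(K)$ and the fixed system and cost data.

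The step I expect to be the main obstacle is the neighborhood-uniformity claim — that a single radius $\beta_K$, depending only on $\Cinit(K)$ and the fixed data, keeps every nearby $K'$ stabilizing with $P_{K'}$, $\Sigma_{K'}$, and the inverse Lyapunov operators controlled by a fixed multiple of their values at $K$. Above I sidestep it by leaning on Lemma~\ref{lem:lipschitz_cost_lqr}; a fully self-contained argument would instead proceed by a continuity/bootstrap along the segment $K + s(K' - K)$, $s \in [0,1]$, using a Neumann-series perturbation $\mathcal{L}_{K_s} = \mathcal{L}_K\big(I + \mathcal{L}_K^{-1}(\mathcal{L}_{K_s} - \mathcal{L}_K)\big)$ to keep the Lyapunov operator invertible with bounded inverse along the whole segment. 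That perturbation bookkeeping is the genuinely delicate part, and it is exactly what forces $\beta_K$ and $\phi_K$ to depend on $\Cinit(K)$ rather than being universal constants; the remainder is routine matrix-norm manipulation.
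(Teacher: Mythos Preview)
Your approach is correct and is essentially the standard route; note, however, that the present paper does not prove Lemma~\ref{lem:lipschitz_gradient_lqr} at all --- it is quoted verbatim from \cite[Lemma~5]{DM-AP-KB-KK-PLB-MJW:18}, with the explicit constants deferred to \cite[Appendix~A]{DM-AP-KB-KK-PLB-MJW:18}. The argument in that appendix proceeds exactly as you outline: write $\nabla \Cinit(K) = 2 E_K \Sigma_K$, bound $\lVert P_K\rVert$ and $\lVert \Sigma_K\rVert$ in terms of $\Cinit(K)$ via $\tr(P_K)=\Cinit(K)$ and $\Cinit(K)\ge \sigma_{\min}(Q)\tr(\Sigma_K)$, control $P_{K'}-P_K$ and $\Sigma_{K'}-\Sigma_K$ through the discounted Lyapunov recursions, and use the already-proved local Lipschitzness of the cost (your invocation of Lemma~\ref{lem:lipschitz_cost_lqr}) to make the bounds uniform over the ball, which is precisely how the radius $\beta_K$ ends up depending on $\Cinit(K)$.
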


\begin{lemma}[LQR satisfies PL] \cite[Lemma 6]{DM-AP-KB-KK-PLB-MJW:18}
\label{lem: PL_lqr}
There exists a universal constant $\mu_\text{lqr} > 0$ such that for all stable policies $K$, we have
\begin{align}
  \lVert \nabla \Cinit(K) \rVert_F^2 \geq \mu_{\text{lqr}} \left( \Cinit(K) - \Cinit(K^*) \right), \label{eq: PL_LQR}
\end{align}
where $K^*$ is a global minimizer of the cost function $\Cinit$.
\end{lemma}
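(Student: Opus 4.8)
The plan is to prove the Polyak--\L ojasiewicz inequality via the classical \emph{gradient domination} route: combine an exact policy-difference identity with an elementary lower bound on the gradient norm. First I would record the closed forms that the discounted Lyapunov structure supplies. Write $\Sigma_K := \sum_{t\geq 0}\gamma^t \E[x_t x_t^\top]$ for the discounted state covariance under policy $K$, which is the unique positive semidefinite solution of $\Sigma_K = I + \gamma(A-BK)\Sigma_K(A-BK)^\top$, and let $E_K := (R+\gamma B^\top P_K B)K - \gamma B^\top P_K A$ be the natural gradient residual; a direct differentiation of $\Cinit(K)=\tr(P_K)$ through the fixed-point equation for $P_K$ gives $\nabla \Cinit(K) = 2 E_K \Sigma_K$ (and indeed $E_{K^*}=0$ by~\eqref{eq: K* explicit formulation}). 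The single structural fact that makes the eventual constant $K$-independent is that $\Sigma_K \succeq I$ for \emph{every} policy $K$, which is immediate from its fixed-point equation together with the normalization $\E[x_0 x_0^\top]=I$ in~\eqref{eq: noise_assumption}.

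Next I would derive the exact cost-difference identity: for any two policies $K,K'$ with finite cost,
\[
\Cinit(K') - \Cinit(K) = \tr\!\big(\Sigma_{K'}\big[(K'-K)^\top(R+\gamma B^\top P_K B)(K'-K) - 2(K'-K)^\top E_K\big]\big),
\]
obtained by subtracting the two fixed-point equations for $P_{K'}$ and $P_K$, iterating the resulting matrix identity along the closed-loop dynamics of $K'$ (a telescoping/"performance difference'' argument), and taking traces against $\Sigma_{K'}$. Specializing $K'=K^*$ and completing the square in the positive-definite matrix $R+\gamma B^\top P_K B$, using $\Sigma_{K^*}\succeq 0$, yields the one-sided bound
\[
\Cinit(K) - \Cinit(K^*) \;\leq\; \tr\!\big(\Sigma_{K^*} E_K^\top (R+\gamma B^\top P_K B)^{-1} E_K\big) \;\leq\; \frac{\lVert \Sigma_{K^*}\rVert}{\sigma_{\mathrm{min}}(R)}\,\lVert E_K\rVert_F^2,
\]
where the last inequality uses $R+\gamma B^\top P_K B \succeq R$.

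Finally I would convert $\lVert E_K\rVert_F^2$ into $\lVert\nabla\Cinit(K)\rVert_F^2$: since $\nabla\Cinit(K)=2E_K\Sigma_K$ and $\Sigma_K\succeq I$, we get $\lVert\nabla\Cinit(K)\rVert_F^2 = 4\lVert E_K\Sigma_K\rVert_F^2 \geq 4\,\sigma_{\mathrm{min}}(\Sigma_K)^2\lVert E_K\rVert_F^2 \geq 4\lVert E_K\rVert_F^2$. Chaining the two displays gives $\Cinit(K)-\Cinit(K^*) \leq \tfrac{\lVert\Sigma_{K^*}\rVert}{4\sigma_{\mathrm{min}}(R)}\lVert\nabla\Cinit(K)\rVert_F^2$, i.e.\ the claimed inequality with $\mu_{\text{lqr}} = 4\sigma_{\mathrm{min}}(R)/\lVert\Sigma_{K^*}\rVert$; if a bound purely in terms of problem data is preferred, one may further use the dual Lyapunov identity $\tr(P_{K^*}) = \tr\big(\Sigma_{K^*}(Q+K^{*\top}RK^*)\big)$ to get $\lVert\Sigma_{K^*}\rVert \leq \tr(\Sigma_{K^*}) \leq \Cinit(K^*)/\sigma_{\mathrm{min}}(Q)$. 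The main obstacle is the cost-difference identity: placing the $\gamma$ factors correctly and justifying the interchange of the infinite sum with trace and expectation takes some care, but once it is established the remainder is completing the square and the inequality $\Sigma_K\succeq I$. Note the argument needs $K$ stable only on the left-hand side (for unstable $K$ both the cost gap and the gradient are infinite), consistent with the statement.
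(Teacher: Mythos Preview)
The paper does not supply its own proof of this lemma; it is quoted directly from \cite[Lemma~6]{DM-AP-KB-KK-PLB-MJW:18}, with the explicit constant $\mu_{\text{lqr}}$ deferred to that reference's Appendix~A (see also Remark~\ref{rem: regularity parameters}). Your argument is precisely the standard gradient-domination proof used there and originating in \cite{MF-RG-SK-MM:18}: the exact cost-difference identity, completion of the square in $R+\gamma B^\top P_K B$, and the uniform lower bound $\Sigma_K\succeq I$ to convert $\lVert E_K\rVert_F$ into $\lVert\nabla\Cinit(K)\rVert_F$. The constant you arrive at, $\mu_{\text{lqr}}=4\sigma_{\mathrm{min}}(R)/\lVert\Sigma_{K^*}\rVert$, coincides with the one in those references, so your proposal is correct and is essentially the same approach.
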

For the sake of exposition, these properties are stated here without specifying the various smoothness and PL constants. The explicit expressions for $\{\lambda_{K}, \widetilde{\lambda_{K}}, \phi_{K}, \beta_{K}, \zeta_{K}, \mu_\text{lqr}\}$ in terms of the parameters of the LQR problem are provided in~\cite[Appendix~A] {DM-AP-KB-KK-PLB-MJW:18}. Remark~\ref{rem: regularity parameters} to follow will provide further elaboration on these parameters as well.

\begin{lemma}[Equivalence of population costs up to scaling] \cite[Lemma~7]{DM-AP-KB-KK-PLB-MJW:18}
\label{lem: noisy-random cost equivalence}
For all policies $K$, we have
\begin{align*}
\Cdyn(K) = \frac{\gamma}{1 - \gamma} \Cinit(K).
\end{align*}
\end{lemma}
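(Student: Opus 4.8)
The plan is to expand both population costs in terms of the closed-loop map $A_K := A - BK$ and the solution $P_K$ of the Lyapunov-type fixed point equation, and to observe that they differ only by the geometric factor $\gamma/(1-\gamma)$ coming from the extra time index at which noise is injected in the noisy-dynamics setting. First I would recall from~\eqref{eq: cost defined via P_K} that $\Cinit(K) = \tr(P_K)$, using the covariance normalization in~\eqref{eq: noise_assumption}. Next I would derive the analogous closed form for $\Cdyn(K)$: with $x_0 = 0$ and $u_t = -Kx_t$, the state satisfies $x_{t+1} = A_K x_t + z_t$, so $x_t = \sum_{s=0}^{t-1} A_K^{t-1-s} z_s$, and since the $z_s$ are i.i.d. with $\E[z_s z_s^\top] = I$ and zero mean, we get $\E[x_t x_t^\top] = \sum_{s=0}^{t-1} A_K^{s} (A_K^\top)^{s} =: \Sigma_t$. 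Then $\Cdyn(K) = \sum_{t\ge 0}\gamma^t \E[x_t^\top(Q + K^\top R K) x_t] = \sum_{t\ge 0}\gamma^t \tr\big((Q + K^\top R K)\Sigma_t\big)$.

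The key step is to reorganize this double sum by the index of the noise term. Writing everything out, $\Cdyn(K) = \sum_{t\ge 1}\gamma^t \sum_{s=0}^{t-1} \tr\big((Q+K^\top R K) A_K^{t-1-s}(A_K^\top)^{t-1-s}\big)$; substituting $j = t-1-s$ and summing the geometric series in $s$ (equivalently, in $t$ for fixed $j$) produces a factor $\sum_{t \ge j+1}\gamma^t$ weighting the term $\tr\big((Q+K^\top R K)A_K^{j}(A_K^\top)^{j}\big)$. That geometric factor equals $\frac{\gamma^{j+1}}{1-\gamma}$, so $\Cdyn(K) = \frac{\gamma}{1-\gamma}\sum_{j\ge 0}\gamma^{j}\tr\big((Q+K^\top R K)A_K^{j}(A_K^\top)^{j}\big)$. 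Finally I would identify the remaining sum with $\Cinit(K)$: iterating the fixed point equation $P_K = Q + K^\top R K + \gamma A_K^\top P_K A_K$ gives $P_K = \sum_{j\ge 0}\gamma^{j}(A_K^\top)^{j}(Q+K^\top R K)A_K^{j}$, hence $\tr(P_K) = \sum_{j\ge 0}\gamma^{j}\tr\big((Q+K^\top R K)A_K^{j}(A_K^\top)^{j}\big)$, which is exactly $\Cinit(K)$ by~\eqref{eq: cost defined via P_K}. Combining the last two displays yields $\Cdyn(K) = \frac{\gamma}{1-\gamma}\Cinit(K)$.

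The main obstacle is a bookkeeping one rather than a conceptual one: one must justify the interchange of the (doubly infinite) sums and the expectation, which requires knowing that the relevant series converge absolutely. For policies $K$ with $\rho(A_K) < \sqrt{\gamma}$ (equivalently, finite cost in either setting) the series $\sum_j \gamma^j \|A_K^j\|^2$ converges, and Tonelli's theorem applies to the nonnegative summands, so all rearrangements above are legitimate; for policies with infinite cost both sides are $+\infty$ and the identity holds trivially. I would also note that the case $\gamma = 1$ is excluded in the noisy-dynamics setting precisely because the factor $\gamma/(1-\gamma)$ blows up, consistent with the cost diverging there. The argument is essentially the one in~\cite[Lemma~7]{DM-AP-KB-KK-PLB-MJW:18}; I would present it compactly since it is used only to justify restricting attention to the random-initialization scenario.
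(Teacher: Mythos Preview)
The paper does not give a proof of this lemma; it is simply quoted from \cite[Lemma~7]{DM-AP-KB-KK-PLB-MJW:18} and used as a black box. Your proposal supplies exactly the standard argument one would expect for that citation --- expand the noisy-dynamics state as $x_t=\sum_{s=0}^{t-1}A_K^{t-1-s}z_s$, use independence and $\E[z_s z_s^\top]=I$ to compute second moments, swap the double sum to pull out the geometric factor $\gamma/(1-\gamma)$, and recognize the remaining series as $\tr(P_K)=\Cinit(K)$ via iteration of the fixed-point equation. This is correct and complete.

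One small slip: the finiteness condition you state is inverted. The series $\sum_j \gamma^j \|A_K^j\|^2$ converges when $\gamma\,\rho(A_K)^2<1$, i.e.\ when $\rho(A_K)<1/\sqrt{\gamma}$, not $\rho(A_K)<\sqrt{\gamma}$. Since $\gamma<1$ this is a weaker-than-stability condition, and your handling of the divergent case (both sides equal $+\infty$) together with Tonelli for nonnegative summands already covers all policies, so the overall argument is unaffected.
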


This result shows that the noisy dynamics and random initialization population costs behave identically when their respective sources of randomness have the same first two moments. Therefore, we focus on the random initialization cost from now on and remind the reader that $\C (K) := \Cinit (K)$ for ease of notation.

We define the set
\begin{equation}
    \mathcal{G}^\text{lqr} := \{K \ | \ \C (K) - \C (K^*) \leq 10 \C (K_0) \}. \label{eq: G_LQR}
\end{equation}
Since $\C$ is $(\zeta_K, \lambda_K)$ locally Lipschitz and $(\beta_K, \phi_K)$ locally smooth, both properties hold simultaneously within a Frobenius norm radius $\omega_K := \min\{\beta_K,\zeta_K\}$ of a point $K \in \mathcal{G}^\text{lqr}$. We define the quantities
\begin{equation*}
    \phi_\text{lqr} := \sup_{K \in \mathcal{G}^\text{lqr}} \phi_K, \qquad \lambda_\text{lqr} := \sup_{K \in \mathcal{G}^\text{lqr}} \lambda_K, \quad \text{and} \quad \omega_\text{lqr} := \inf_{K \in \mathcal{G}^\text{lqr}} \omega_K.
\end{equation*}
It is noteworthy to mention that these values are non-zero and finite, and their explicit formulation is provided in \cite[Appendix~A]{DM-AP-KB-KK-PLB-MJW:18}, see Remark~\ref{rem: regularity parameters} to follow for further clarification.

Observe that by the definition of these quantities, one can immediately show that for any $K \in \G^{\text{lqr}}$ and $K' \in \R^{m \times n}$ such that $\lVert K' - K \rVert_F \leq \omega_{\text{lqr}}$, we have that
\begin{align*}
    \vert \C(K') - \C(K) \vert \leq & \lambda_{\text{lqr}} \lVert K' - K \rVert_F, \text{ and} \\
    \lVert \nabla \C(K') - \nabla \C(K) \rVert_F \leq & \phi_{\text{lqr}} \lVert K' - K \rVert_F.
\end{align*}

\begin{remark}{\em \label{rem: regularity parameters}
    We now describe how to specify the set of parameters  $\{\lambda_{K}, \widetilde{\lambda_{K}}, \phi_{K}, \beta_{K}, \zeta_{K}, \mu_\text{lqr}\}$ in our setting. We start by recalling that a set of parameters $\{c_{K_0}, c_{K_1}, \dotsc, c_{K_9} \}$ is defined in \cite[Appendix~A]{DM-AP-KB-KK-PLB-MJW:18}, which notably depend on $\C (K)$. Subsequently, by replacing said $\C (K)$ with $\sup_{K \in \G^{\text{lqr}}} \C (K)$, they obtain a set of constants $\{ \widetilde{c_{K_0}}, \widetilde{c_{K_1}}, \dotsc, \widetilde{c_{K_9}} \}$ which are independent of $K$. For ease of access for the reader, we point out that
    \begin{equation}
        \omega_{\text{lqr}} = \widetilde{c_{K_9}}, \quad \phi_{\text{lqr}} = \widetilde{c_{K_7}}, \quad \text{and} \quad \lambda_{\text{lqr}} = \widetilde{c_{K_8}}.
    \end{equation}
    Moreover, it holds that $\max \{ \lVert K \rVert, \| \nabla \C (K) \|_F \} \leq \widetilde{c_{K_1}}$ for any $K \in \G^{\text{lqr}}$, see \cite[Appendix~A]{DM-AP-KB-KK-PLB-MJW:18} and \cite[Lemma~22]{MF-RG-SK-MM:18}.
    Note that the only required modification in the values of $\widetilde{c_{K_0}}, \widetilde{c_{K_1}}, \dotsc, \widetilde{c_{K_9}}$ for our case is having $10 \C (K_0) + \C (K^*)$ as $\sup_{K \in \G^{\text{lqr}}} \C (K)$ instead of \cite{DM-AP-KB-KK-PLB-MJW:18}'s $10 \C (K_0) - 9 \C (K^*)$, due to the difference in our definition of $\G^{\text{lqr}}$ in~\eqref{eq: G_LQR}.
    \oprocend}
\end{remark}

We now provide an informal statement of our main result, which shows that our proposed algorithm obtains an $\eps$-optimal policy after $\widetilde{\mathcal{O}} (1/\eps)$ iterations. As we outline precisely later, this algorithm forms an estimate $ \widehat{\nabla \C} (K_t) $ of the gradient at a given time $ t $ and updates the policy $ K_t $ with a time-varying learning rate $ \alpha_t $. 
\begin{theorem}\longthmtitle{Informal Statement of Our Main Result}
    If the step-size is chosen as $ \alpha_t =C\frac{1}{t+N}$ with $ N $ ``large enough'', i.e., $N \sim \mathcal{O} \left((\log{\frac{1}{\delta}})^{3/2}\right)$ for any chosen $\delta$, and $C$ being a known constant, then after $ T = \mathcal{O} \left( \frac{1}{\eps}(\log{\frac{1}{\delta}})^{3/2} \right)$ iterations, provided the discount factor exceeds a constant threshold strictly less than $1$,  we have that 
    \begin{equation} 
        \C(K_T) - \C(K^*) \leq \eps
    \end{equation}
    with a probability of at least $4/5 - \delta T$. In particular, choosing $\delta$ proportional to $1/T$, we attain $\C(K_T)-\C(K^*)$ with a constant probability with a sample complexity of $\widetilde{\mathcal{O}}\left( 1/\eps \right)$. 
\end{theorem}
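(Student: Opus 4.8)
The plan is to run a stochastic-gradient-descent-style analysis on the cost $\C$ along the iterates $K_t$, exploiting the local-Lipschitz-gradient property (Lemma~\ref{lem:lipschitz_gradient_lqr}) and the PL inequality (Lemma~\ref{lem: PL_lqr}), while carrying the probabilistic control on the gradient estimator $\widehat{\nabla\C}(K_t)$ coming from Proposition~\ref{prop: gradient estimate expectation} (unbiasedness) and Lemma~\ref{lem: gradient estimate bounds} (high-probability magnitude bound). First I would set up the good event: on an appropriate high-probability event $\mathcal{E}_t$, the iterate $K_t$ stays in the sublevel set $\G^{\text{lqr}}$ (so all the uniform regularity constants $\phi_{\text{lqr}}, \lambda_{\text{lqr}}, \omega_{\text{lqr}}$ apply), the step $\alpha_t\widehat{\nabla\C}(K_t)$ has Frobenius norm at most $\omega_{\text{lqr}}$ (so the smoothness descent inequality is valid between $K_t$ and $K_{t+1}$), and the conditional bias of the estimator restricted to the bounded-norm sub-event is small (Lemma~\ref{lem: gradient estimate conditioned}). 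A union bound over $t = 1,\dots,T$ is what produces the $-\delta T$ loss in the success probability, and choosing $N$ of order $(\log\frac1\delta)^{3/2}$ is exactly what makes each per-step failure probability small enough and the initial step sizes small enough that $K_1$ (and inductively $K_t$) cannot leave $\G^{\text{lqr}}$.

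Next I would write the one-step recursion. Using $\phi_{\text{lqr}}$-smoothness on the segment from $K_t$ to $K_{t+1} = K_t - \alpha_t \widehat{\nabla\C}(K_t)$,
\[
\C(K_{t+1}) \le \C(K_t) - \alpha_t \langle \nabla\C(K_t), \widehat{\nabla\C}(K_t)\rangle + \tfrac{\phi_{\text{lqr}}}{2}\alpha_t^2 \lVert \widehat{\nabla\C}(K_t)\rVert_F^2,
\]
and then I would take conditional expectation given the history. Write $\widehat{\nabla\C}(K_t) = \nabla\C(K_t) + b_t + \xi_t$ where $b_t$ is the (small, conditional) bias and $\xi_t$ is mean-zero; the cross term gives $-\alpha_t \lVert \nabla\C(K_t)\rVert_F^2$ plus a bias error controlled by $\alpha_t \lVert \nabla\C(K_t)\rVert_F \lVert b_t\rVert_F$, and the second-moment term is controlled by the high-probability bound $\lVert \widehat{\nabla\C}(K_t)\rVert_F^2 \le B^2$ from Lemma~\ref{lem: gradient estimate bounds}. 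Subtracting $\C(K^*)$ and applying PL, $\lVert \nabla\C(K_t)\rVert_F^2 \ge \mu_{\text{lqr}}(\C(K_t)-\C(K^*))$, yields (in conditional expectation on the good event) a recursion of the shape
\[
\E[\,\delta_{t+1}\mid \mathcal{F}_t\,] \le (1 - \mu_{\text{lqr}}\alpha_t)\,\delta_t + \alpha_t \varepsilon_{\text{bias}} + \tfrac{\phi_{\text{lqr}}}{2}\alpha_t^2 B^2,
\]
where $\delta_t := \C(K_t) - \C(K^*)$. With $\alpha_t = C/(t+N)$ and $C$ chosen so that $\mu_{\text{lqr}}C > 1$, the standard Chung-type lemma for recursions $u_{t+1}\le(1-c/(t+N))u_t + d/(t+N)^2$ gives $\E[\delta_T] = \mathcal{O}(1/T)$ up to the additive bias floor, which is why $T = \mathcal{O}\!\big(\tfrac1\eps (\log\tfrac1\delta)^{3/2}\big)$ suffices; a final Markov inequality converts the expectation bound into the claimed constant-probability guarantee, and the bias floor must be shown to be $o(\eps)$ by taking the smoothing radius in the estimator small, which is where the $\gamma$-threshold enters (through the regularity constants $\phi_{\text{lqr}}, \mu_{\text{lqr}}$ remaining bounded away from degeneracy).

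The main obstacle is not the deterministic recursion but keeping the whole argument inside the good event: the descent inequality is only \emph{local} (valid within radius $\omega_{\text{lqr}}$), and the estimator bounds from Lemma~\ref{lem: gradient estimate bounds} and Lemma~\ref{lem: gradient estimate conditioned} are only valid \emph{conditioned} on $K_t \in \G^{\text{lqr}}$ and on the estimate having bounded norm. So I expect the delicate part to be an induction showing that, on an event of probability $\ge 4/5 - \delta T$, every iterate $K_1,\dots,K_T$ remains in $\G^{\text{lqr}}$ — this is where the martingale/stability machinery borrowed from~\cite{DM-AP-KB-KK-PLB-MJW:18} is used, and where the lower bound $N \sim (\log\frac1\delta)^{3/2}$ is forced: large $N$ makes the early step sizes small enough that a single bad gradient draw cannot kick the iterate out of the sublevel set before the contraction $(1-\mu_{\text{lqr}}\alpha_t)$ has a chance to act. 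Handling the interaction between this "stay-in-the-good-set" induction and the expectation recursion (which formally should be stated for the stopped process, or with the recursion conditioned on the running good event) is the step I would budget the most care for; everything else is bookkeeping with the constants from Remark~\ref{rem: regularity parameters}.
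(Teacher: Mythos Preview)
Your plan is essentially the paper's proof: the one-step smoothness-plus-PL recursion you write is exactly Lemma~\ref{lem: policy gradient recursive}; the induction on $\E[\Delta_t 1_{\tau>t}]$ with $\alpha_t = \tfrac{2}{\mu_{\text{lqr}}(t+N)}$ is Sublemma~\ref{sublem: lqr_pg_inductive}; the ``stay-in-the-good-set'' part is handled by introducing stopping times $\tau_1,\tau_2$ and a stopped process $Y_t = \Delta_{\tau_1\wedge t}1_{\tau_2>t} + \text{const}/(t+N)$, showing it is a supermartingale, and applying Doob/Ville to bound $\PP\{\tau\le T\}$ (Proposition~\ref{prop: lqr policy gradient}); and the conversion to a high-probability statement is indeed a final Markov inequality.

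Two corrections to your sketch. First, there is no ``smoothing radius'' to shrink: the estimator~\eqref{eq: gradient estimate practical formulation} is \emph{exactly} unbiased for every $\sigma>0$ (Proposition~\ref{prop: gradient estimate expectation}), and the only bias in the recursion comes from restricting to the bounded-norm event $\mathcal{A}_t$. That conditioning bias is $O\!\big(\delta(\log\tfrac1\delta)^{3/2}\big)$ by Lemma~\ref{lem: gradient estimate conditioned}, and it is driven below $\eps$ by choosing $\delta = O(\eps^3)$ (see~\eqref{eq: theorem parameter setup delta}), not by tuning any radius. Second, the $\gamma$-threshold (Lemma~\ref{lem: gamma condition}) is not about keeping $\phi_{\text{lqr}},\mu_{\text{lqr}}$ nondegenerate; its role is to guarantee that every $K\in\G^{\text{lqr}}$ is Schur-stable, which is what makes $\sup_{K\in\G^{\text{lqr}}}\sup_{t\ge 0}\|(A-BK)^t\|$ finite (Sublemma~\ref{sublem: sup_sup finite}) and hence makes the high-probability and second-moment bounds of Lemma~\ref{lem: gradient estimate bounds} hold at all.
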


A precise version of this result is given later in Theorem~\ref{thm: lqr policy gradient}, with the corresponding algorithm formally stated in Algorithm~\ref{alg: LQR_policy_gradient}.

Let us first point out that this result substantially improves the ones in the literature by achieving a $\widetilde{\mathcal{O}} (1/\eps)$ rate without any additional assumptions. The best previous result achieves a convergence rate of $\widetilde{\mathcal{O}} (1/\eps^2)$ \cite{DM-AP-KB-KK-PLB-MJW:18} in this setting. Indeed, $\widetilde{\mathcal{O}} (1/\eps)$ rates were only available using so-called two-point estimates which re-use randomness (e.g., require being able to initialize the system at a given $x_0$). Note that the limitations of this assumption become especially evident in the noisy dynamics setting, where access to cost evaluations of two different policies is required under the exact same infinite sequence of additive noise. 
This is significantly more restrictive than in the random initialization setting, which only requires matching a single random variable---namely, the initial condition.
In both cases, however, this coupling is difficult to realize in practice,
as one must have perfect control over a simulator to use such estimates; one cannot implement them for black-box systems with unknown dynamics which need to learn in the real world, for example. In contrast, our result only uses gradient estimates with a single zero-order evaluation at each step. 

We now begin the process of collecting the essentials needed to articulate our theorem precisely and to prove this result, beginning with a  fresh examination of the policy gradient that we employ for gradient estimation.

\section{Policy gradient}

Most formulations of the policy gradient require probabilistic policies; in contrast, as can be seen in \eqref{eq:sys-no-noise}, we have used a deterministic policy given by $u_t = -K x_t$. To remedy, we utilize the control input $u_{\hat{t}}$, to be defined shortly, where $ \hat{t} $ is sampled at random 
from the distribution $\mu_\gamma (t) := (1-\gamma) \gamma^t$, where $t \in \{0,1,2,\cdots\}$. Keeping this in mind, we now compute
\begin{equation}
    \widehat{\nabla \C} (K) := \frac{1}{1 - \gamma} Q^{K} (x_{\hat{t}}, u_{\hat{t}}) \nabla_{K} \log \pi_K (u_{\hat{t}} | x_{\hat{t}}), \label{eq: gradient estimate definition}
\end{equation}
where the control input $u_{\hat{t}}$ is randomly chosen from the Gaussian distribution $\N (- K x_{\hat{t}}, \sigma^2 I_m)$ for some $\sigma>0$ only for the selected iteration $\hat{t}$, and $x_{\hat{t}} = (A-BK)^{\hat{t}} x_0$ with $x_0 \sim \D$ as before. Note that
\begin{equation}
    \E_{\hat{t} \sim \mu_\gamma} \left[ \widehat{\nabla \C} (K) \right] = \sum_{t=0}^{\infty} \gamma^t Q^{K} (x_t, u_t) \nabla_{K} \log \pi_K (u_t | x_t), \label{eq: policy gradient formulation}
\end{equation}
where 
\begin{equation}
    \pi_K (u_t | x_t) = \frac{1}{\sqrt{(2 \pi)^m (\sigma^2)^m}} e^{- \frac{(u_t + K x_t)^\top (u_t + K x_t)}{2 \sigma^2}}, \label{eq: gaussian policy}
\end{equation}
and
\begin{align}
    Q^{K} (x_t, u_t) &:= x_t ^\top Q x_t + u_t^\top R u_t + \gamma \Cinit (K; x_{t+1}) \cr
    &=x_t ^\top Q x_t + u_t^\top R u_t + \gamma \Cinit (K; A x_t + B u_t) \cr
    &\overset{\mathrm{(i)}}{=} x_t ^\top Q x_t + u_t^\top R u_t + \gamma (A x_t + B u_t)^\top P_K (A x_t + B u_t), \label{eq: Q function definition}
\end{align}
where (i) is on account of \eqref{eq: cost wrt state via P_K}. Note that we can also rewrite $u_{\hat{t}} \sim \N (-K x_{\hat{t}}, \sigma^2 I_m)$ as
\begin{equation}
    u_{\hat{t}} = -K x_{\hat{t}} + \sigma \eta_{\hat{t}}, \label{eq: u_t rewrite using eta_t}
\end{equation}
where $\eta_{\hat{t}} \sim \N (0, I_m)$. Moreover, we have the following lemma to provide an alternative way of representing~\eqref{eq: gradient estimate definition}. 
\begin{lemma} \label{lem: alternative estimate formulation}
    The gradient estimate in~\eqref{eq: gradient estimate definition} can be modified to get
    \begin{equation}
    \widehat{\nabla \C} (K) = - \frac{1}{\sigma (1 - \gamma)} Q^{K} (x_{\hat{t}}, -K x_{\hat{t}} + \sigma \eta_{\hat{t}}) \eta_{\hat{t}} x_{\hat{t}}^\top. \label{eq: gradient estimate practical formulation}
    \end{equation}
\end{lemma}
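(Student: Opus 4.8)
The plan is to start from the definition in \eqref{eq: gradient estimate definition} and simply compute the gradient of the log-density of the Gaussian policy \eqref{eq: gaussian policy} with respect to the policy parameter $K$, then substitute. First I would write $\log \pi_K(u_t \mid x_t) = -\frac{m}{2}\log(2\pi\sigma^2) - \frac{1}{2\sigma^2}(u_t + K x_t)^\top(u_t + K x_t)$, noting that the normalizing constant does not depend on $K$. Differentiating the quadratic term with respect to $K$, using the identity $\nabla_K \big[(u_t + Kx_t)^\top(u_t+Kx_t)\big] = 2(u_t + Kx_t)x_t^\top$ (standard matrix calculus, since $(u_t+Kx_t)$ is affine in $K$ with $\partial/\partial K$ contracting against $x_t^\top$ on the right), gives
\begin{equation*}
\nabla_K \log \pi_K(u_t \mid x_t) = -\frac{1}{\sigma^2}(u_t + K x_t)x_t^\top.
\end{equation*}

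Next I would evaluate this at the sampled time $\hat t$ and use the reparametrization \eqref{eq: u_t rewrite using eta_t}, namely $u_{\hat t} = -K x_{\hat t} + \sigma \eta_{\hat t}$, so that $u_{\hat t} + K x_{\hat t} = \sigma \eta_{\hat t}$. Substituting,
\begin{equation*}
\nabla_K \log \pi_K(u_{\hat t} \mid x_{\hat t}) = -\frac{1}{\sigma^2}\,\sigma \eta_{\hat t}\, x_{\hat t}^\top = -\frac{1}{\sigma}\,\eta_{\hat t}\, x_{\hat t}^\top.
\end{equation*}
Plugging this into \eqref{eq: gradient estimate definition} yields
\begin{equation*}
\widehat{\nabla \C}(K) = \frac{1}{1-\gamma}\,Q^K(x_{\hat t}, u_{\hat t})\cdot\Big(-\frac{1}{\sigma}\,\eta_{\hat t}\, x_{\hat t}^\top\Big) = -\frac{1}{\sigma(1-\gamma)}\,Q^K(x_{\hat t}, -K x_{\hat t} + \sigma \eta_{\hat t})\,\eta_{\hat t}\, x_{\hat t}^\top,
\end{equation*}
where in the last step I also rewrote the argument $u_{\hat t}$ of $Q^K$ using \eqref{eq: u_t rewrite using eta_t}. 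This is exactly \eqref{eq: gradient estimate practical formulation}, completing the proof.

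There is no real obstacle here; the only thing to be slightly careful about is the matrix-derivative convention, i.e., confirming that $\nabla_K$ of a scalar function $f(Kx_t)$ produces the outer product $(\nabla f)\, x_t^\top$ rather than its transpose, which is fixed by the convention that $\nabla_K \C(K)$ is an $m\times n$ matrix matching the shape of $K$ (consistent with its use elsewhere, e.g.\ Lemma~\ref{lem:lipschitz_gradient_lqr}). Everything else is direct substitution, so I would keep the write-up to a few lines.
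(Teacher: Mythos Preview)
Your proof is correct and follows essentially the same approach as the paper: both compute $\nabla_K \log \pi_K(u_{\hat t}\mid x_{\hat t}) = -\frac{1}{\sigma^2}(u_{\hat t}+Kx_{\hat t})x_{\hat t}^\top$ and then substitute the reparametrization \eqref{eq: u_t rewrite using eta_t}. The only difference is cosmetic---the paper expands the quadratic and evaluates the two trace derivatives \eqref{eq: linear gradient wrt matrix} and \eqref{eq: quadratic gradient wrt matrix} separately, whereas you invoke the identity $\nabla_K\|u+Kx\|^2 = 2(u+Kx)x^\top$ directly.
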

\begin{proof} Following \eqref{eq: gradient estimate definition}, we have that
\begin{align}
    \widehat{\nabla \C} (K) =& \frac{1}{1 - \gamma} Q^{K} (x_{\hat{t}}, u_{\hat{t}}) \nabla_{K} \log \pi_K (u_{\hat{t}} | x_{\hat{t}}) \cr
    \overset{\mathrm{(i)}}{=} &\frac{1}{1 - \gamma} Q^{K} (x_{\hat{t}}, u_{\hat{t}}) \nabla_{K} \left( - \frac{(u_{\hat{t}} + K x_{\hat{t}})^\top (u_{\hat{t}} + K x_{\hat{t}})}{2 \sigma^2} \right) \cr
    =& \frac{1}{1 - \gamma} Q^{K} (x_{\hat{t}}, u_{\hat{t}}) \nabla_{K} \left( - \frac{ u_{\hat{t}}^\top u_{\hat{t}} + 2 u_{\hat{t}}^\top K x_{\hat{t}} + x_{\hat{t}}^\top K^\top K x_{\hat{t}}}{2 \sigma^2} \right) \cr
    =& \frac{1}{1 - \gamma} Q^{K} (x_{\hat{t}}, u_{\hat{t}}) \nabla_{K} \left( - \frac{ \tr \left( 2 x_{\hat{t}} u_{\hat{t}}^\top K \right) + \tr \left( x_{\hat{t}} x_{\hat{t}}^\top K^\top K \right)}{2 \sigma^2} \right), \label{eq: gradient other form prelim}
\end{align}
where (i) follows from \eqref{eq: gaussian policy}. Now note that
\begin{equation}
\nabla_K \tr \left( 2 x_{\hat{t}} u_{\hat{t}}^\top K \right) = \nabla_K \tr \left( \left(2 u_{\hat{t}} x_{\hat{t}}^\top \right)^\top K \right) = \nabla_K \left\langle 2 u_{\hat{t}} x_{\hat{t}}^\top, K \right\rangle = 2 u_{\hat{t}} x_{\hat{t}}^\top, \label{eq: linear gradient wrt matrix}
\end{equation}
and
\begin{align}
\nabla_K \tr \left( x_{\hat{t}} x_{\hat{t}}^\top K^\top K \right) =& \nabla_{K_1} \tr \left( x_{\hat{t}} x_{\hat{t}}^\top K^\top K_1 \right) + \nabla_{K_2} \tr \left( x_{\hat{t}} x_{\hat{t}}^\top K_2^\top K \right) \cr
=& \nabla_{K_1} \tr \left( \left( K x_{\hat{t}} x_{\hat{t}}^\top \right)^\top K_1 \right) + \nabla_{K_2} \tr \left( K_2^\top \left( K x_{\hat{t}} x_{\hat{t}}^\top \right) \right) \cr
=& \nabla_{K_1} \left\langle K x_{\hat{t}} x_{\hat{t}}^\top, K_1 \right\rangle + \nabla_{K_2} \left\langle K x_{\hat{t}} x_{\hat{t}}^\top, K_2 \right\rangle \cr
=& 2 K x_{\hat{t}} x_{\hat{t}}^\top. \label{eq: quadratic gradient wrt matrix}
\end{align}
As a result, combining \eqref{eq: linear gradient wrt matrix} and \eqref{eq: quadratic gradient wrt matrix} with \eqref{eq: gradient other form prelim} yields
\begin{align*}
    \widehat{\nabla \C} (K) =& \frac{1}{1 - \gamma} Q^{K} (x_{\hat{t}}, u_{\hat{t}}) \left( -\frac{1}{2 \sigma^2} \left( 2 (K x_{\hat{t}} x_{\hat{t}}^\top + u_{\hat{t}} x_{\hat{t}}^\top) \right) \right) \\
    =& \frac{1}{1 - \gamma} Q^{K} (x_{\hat{t}}, u_{\hat{t}}) \left( - \frac{(u_{\hat{t}} + K x_{\hat{t}})}{\sigma^2} x_{\hat{t}}^\top \right) \\
    \overset{\mathrm{(i)}}{=}& - \frac{1}{\sigma (1 - \gamma)} Q^{K} (x_{\hat{t}}, -K x_{\hat{t}} + \sigma \eta_{\hat{t}}) \eta_{\hat{t}} x_{\hat{t}}^\top,
\end{align*}
where (i) follows from~\eqref{eq: u_t rewrite using eta_t}. This finishes the proof.
\end{proof}
We now provide the following remark on the computation of $Q^K (x_{\hat{t}},u_{\hat{t}})$.
\begin{remark} \label{rem: Q computation}
    The $Q$-function in~\eqref{eq: gradient estimate practical formulation} represents the cost-to-go from time step~$\hat{t}$. Using the quadratic stage cost $c_t := x_t^\top Q x_t + u_t^\top R u_t$, we can write
    \[
    Q^K (x_{\hat{t}}, u_{\hat{t}}) = \sum_{t=\hat{t}}^\infty \gamma^{t - \hat{t}} c_t,
    \]
    where the dynamics follow~\eqref{eq:sys-no-noise} with control
    \[
    u_t = \begin{cases}
        -K x_t + \sigma \eta_t, & \text{if } t = \hat{t}, \\
        -K x_t, & \text{otherwise},
    \end{cases}
    \]
    and $x_0 \sim \D$. This is analogous to the zero-order oracle in~\cite{DM-AP-KB-KK-PLB-MJW:18}, which computes
    \[
    \C(K; x_0) := \sum_{t=0}^\infty \gamma^t c_t \quad \text{with } u_t = -K x_t.
    \]
    Accordingly, we also assume access to an oracle that returns a single noisy evaluation of such costs under the given policy. 
\end{remark}

Taking the alternative formulation of our gradient estimate provided in Lemma~\ref{lem: alternative estimate formulation} into consideration, we introduce the algorithm
\begin{algorithm}
\caption{LQR with Policy Gradient}
\label{alg: LQR_policy_gradient}
\begin{algorithmic}[1]
\State{Given iteration number $T \geq 1$, initial policy $K_0 \in
  \R^{m \times n}$, noise parameter $\sigma$, and step size $\alpha_t > 0$}
  \For{$t \in \{ 0, 1, \ldots, T-1 \}$}
\State{Sample $x_0 \sim \D$, $\hat{t} \sim \mu_\gamma$, and
  $\eta_{\hat{t}} \sim \N (0, I_m)$}
  
  \State{Simulate $K_t$ for $\hat{t}$ steps starting from $x_0$ and observe $x_{\hat{t}}$.}
  
  \State{$u_{\hat{t}} \gets - K_t x_{\hat{t}} + \sigma \eta_{\hat{t}}$}
  
  \State{$\widehat{\nabla \C} (K_t) \gets - \frac{1}{\sigma (1 - \gamma)} \eta_{\hat{t}} x_{\hat{t}}^\top Q^{K_t} (x_{\hat{t}}, u_{\hat{t}}) $} \label{line: alg_LQR_policy_gradient_6}
  
  \State {$K_{t+1} \gets K_t - \alpha_t \widehat{\nabla \C} (K_t)$} \label{line: alg_LQR_policy_gradient_7} \EndFor
  \Return $K_T$
\end{algorithmic}
\end{algorithm}

Before we state the next result, note that one can compute 
\begin{equation}
\nabla \C (K) = 2 ((R + \gamma B^\top P_K B)K - \gamma B^\top P_K A) \E_{x_0 \sim \D} \left[\sum_{t=0}^{\infty} \gamma^t x_t x_t^\top \right]; \label{eq: actual gradient}
\end{equation}
a proof can be found in~\cite{MF-RG-SK-MM:18} for the undiscounted case, where $\gamma = 1$,  and in~\cite{DM-AP-KB-KK-PLB-MJW:18} for the discounted case. The following proposition plays a key role in our constructions. 
\begin{proposition}
    \label{prop: gradient estimate expectation}
    Suppose $u_{\hat{t}} \sim \N (- K x_{\hat{t}}, \sigma^2 I_m)$ as before. Then for any given $K$,
    \begin{equation}
        \E [\widehat{\nabla \C} (K)] = \nabla \C (K).
    \end{equation}    
\end{proposition}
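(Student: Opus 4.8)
The plan is to peel off the randomness one layer at a time — first the index $\hat t$, then the Gaussian perturbation $\eta$, then the initial state $x_0$ — and to recognize the outcome as the closed-form gradient in \eqref{eq: actual gradient}. Throughout I take $K$ to have finite cost (equivalently $\rho(A-BK)<1$), so that $\|x_t\|$ decays geometrically, $P_K$ is finite, and every infinite sum below converges absolutely; this is what legitimizes the termwise interchange of expectation and summation (by Tonelli / dominated convergence).

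Averaging over $\hat t\sim\mu_\gamma$ first, the $\tfrac{1}{1-\gamma}$ prefactor cancels the geometric weights $(1-\gamma)\gamma^t$, and by \eqref{eq: policy gradient formulation} it remains to take the expectation over $x_0\sim\D$ and the Gaussian perturbations of
\[
\sum_{t=0}^{\infty}\gamma^t\,Q^{K}(x_t,u_t)\,\nabla_K\log\pi_K(u_t\mid x_t),\qquad x_t=(A-BK)^t x_0,\ \ u_t=-Kx_t+\sigma\eta_t,\ \ \eta_t\sim\N(0,I_m).
\]
From the computation in the proof of Lemma~\ref{lem: alternative estimate formulation} (see also \eqref{eq: gaussian policy} and \eqref{eq: u_t rewrite using eta_t}), $\nabla_K\log\pi_K(u_t\mid x_t)=-\tfrac{1}{\sigma^2}(u_t+Kx_t)x_t^\top=-\tfrac{1}{\sigma}\eta_t x_t^\top$, so the $t$-th summand equals $-\tfrac{1}{\sigma}Q^{K}(x_t,-Kx_t+\sigma\eta_t)\,\eta_t x_t^\top$, and (with $x_t$ fixed) its expectation reduces to evaluating $\E_{\eta}\!\big[Q^{K}(x_t,-Kx_t+\sigma\eta)\,\eta\big]$.

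For this, expand $Q^K$ via \eqref{eq: Q function definition}. Setting $v:=-Kx_t$ and $w:=(A-BK)x_t$, and using symmetry of $R$ and $P_K$,
\[
Q^{K}(x_t,v+\sigma\eta)=\underbrace{\big(x_t^\top Qx_t+v^\top Rv+\gamma w^\top P_K w\big)}_{\text{order }0}+\underbrace{2\sigma\big(v^\top R+\gamma w^\top P_K B\big)\eta}_{\text{order }1}+\underbrace{\sigma^2\,\eta^\top(R+\gamma B^\top P_K B)\eta}_{\text{order }2}.
\]
Multiplying by $\eta$ and taking $\E_\eta$: the order-$0$ term vanishes since $\E[\eta]=0$; the order-$2$ term vanishes because the third moments of a centered Gaussian are zero; and the order-$1$ term contributes $2\sigma\,\E[\eta\eta^\top]\big(Rv+\gamma B^\top P_K w\big)=2\sigma\big(Rv+\gamma B^\top P_K w\big)$ since $\E[\eta\eta^\top]=I$. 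Substituting $v$ and $w$ and collecting terms gives $\E_\eta[Q^{K}(x_t,-Kx_t+\sigma\eta)\,\eta]=-2\sigma\big((R+\gamma B^\top P_K B)K-\gamma B^\top P_K A\big)x_t$, hence the expectation of the $t$-th summand equals $2\big((R+\gamma B^\top P_K B)K-\gamma B^\top P_K A\big)x_t x_t^\top$.

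Finally, summing over $t$ against $\gamma^t$, pulling the deterministic matrix $2\big((R+\gamma B^\top P_K B)K-\gamma B^\top P_K A\big)$ out of the sum and the expectation, and taking $\E_{x_0}$ yields
\[
\E[\widehat{\nabla\C}(K)]=2\big((R+\gamma B^\top P_K B)K-\gamma B^\top P_K A\big)\,\E_{x_0\sim\D}\!\Big[\sum_{t=0}^{\infty}\gamma^t x_t x_t^\top\Big],
\]
which is exactly \eqref{eq: actual gradient}. The only step needing genuine attention is the Gaussian moment bookkeeping: the potential obstruction is the order-$2$ term, the one place where $\sigma$ could have leaked into the answer, and it is precisely the symmetry of the Gaussian (vanishing odd moments) that kills it and leaves a $\sigma$-free, unbiased estimate. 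The rest is the routine verification that stability of $K$ justifies exchanging the infinite sum with the expectations.
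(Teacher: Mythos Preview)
Your proof is correct and follows the same overall architecture as the paper's: peel off the randomness in $\hat t$, then handle the Gaussian perturbation, then average over $x_0$, and finally identify the result with the closed-form gradient \eqref{eq: actual gradient}. The one genuine methodological difference is in how you treat the Gaussian integral $\E_\eta\!\big[Q^K(x_t,-Kx_t+\sigma\eta)\,\eta\big]$. The paper invokes Stein's lemma to rewrite this as $\sigma\,\E_\eta\!\big[\nabla_u Q^K(x_t,u)\big|_{u=-Kx_t+\sigma\eta}\big]$ and then computes the gradient of $Q^K$ in $u$ explicitly; you instead expand $Q^K$ as a quadratic polynomial in $\eta$ and compute the first three Gaussian moments directly, observing that the constant and quadratic pieces contribute nothing because odd moments vanish. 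Your route is more elementary and makes the $\sigma$-independence completely transparent without appealing to an external identity; the paper's route is slightly more conceptual and is what drives the generalization in Remark~\ref{rem: extension beyond LQR}, since Stein's lemma applies whenever $Q^K$ is merely quadratic in the action, not necessarily of the specific LQR form. Both arguments are equivalent here because Stein's lemma applied to a quadratic is precisely your moment computation.
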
 
\begin{proof}
    Following \eqref{eq: gradient estimate practical formulation}, 
    \begin{align}
        &\E [\widehat{\nabla \C} (K)] \cr
        = \ &\E_{\hat{t} \sim \mu_\gamma} \left[ \E_{x_0 \sim \D} \left[ \E_{\eta_{\hat{t}} \sim \N (0, I_m)} \left[ \widehat{\nabla \C} (K) \big| \hat{t}, x_0 \right] \bigg| \hat{t} \right] \right] \cr
        \overset{\mathrm{(i)}}{=} \ &\E_{\hat{t} \sim \mu_\gamma} \left[ \E_{x_0 \sim \D} \left[ - \frac{1}{\sigma^2 (1 - \gamma)} \E_{\eta_{\hat{t}} \sim \N (0, I_m)} \left[ Q (x_{\hat{t}}, - K x_{\hat{t}} + \sigma \eta_{\hat{t}}) (\sigma \eta_{\hat{t}}) \big| \hat{t}, x_0 \right] x_{\hat{t}}^\top \bigg| \hat{t} \right] \right] \cr
        \overset{\mathrm{(ii)}}{=} \ &\frac{1}{1 - \gamma} \E_{\hat{t} \sim \mu_\gamma} \left[ \E_{x_0 \sim \D} \left[ \E_{\eta_{\hat{t}} \sim \N (0, I_m)} \left[ - \nabla_{u} Q^{K} (x_{\hat{t}}, u) \bigg|_{u = -K x_{\hat{t}} + \sigma \eta_{\hat{t}}} \big| \hat{t}, x_0 \right] x_{\hat{t}}^\top \bigg| \hat{t} \right] \right], \label{eq: policy gradient proof before Stein}
    \end{align} 
    where (i) follows from $x_{\hat{t}}$ being determined when given $x_0$ and $\hat{t}$, and (ii) from Stein's lemma~\cite{CMS:81}. Using~\eqref{eq: Q function definition}, we compute 
    \begin{align*}
        \nabla_{u} Q^{K} (x_{\hat{t}}, u) =& \nabla_{u} \left( x_{\hat{t}}^\top Q x_{\hat{t}} + u^\top R u + \gamma (A x_{\hat{t}} + B u)^\top P_K (A x_{\hat{t}} + B u) \right) \\
        =& 2 R u + 2 \gamma B^\top P_K B u + 2 \gamma B^\top P_K A x_{\hat{t}},
    \end{align*}
    which evaluated at $u = -K x_{\hat{t}} + \sigma \eta_{\hat{t}}$ yields
    \begin{equation*}
        \nabla_{u} Q^{K} (x_{\hat{t}}, u) \bigg|_{u = -K x_{\hat{t}} + \sigma \eta_{\hat{t}}} = 2 \left( (R + \gamma B^\top P_K B)(-K x_{\hat{t}} + \sigma \eta_{\hat{t}}) + \gamma B^\top P_K A x_{\hat{t}} \right).
    \end{equation*}
    Substituting in \eqref{eq: policy gradient proof before Stein}, we obtain
    \begin{align*}
        &\E [\widehat{\nabla \C} (K)] \\
        = \ &\frac{1}{1 - \gamma} \E_{\hat{t} \sim \mu_\gamma} \left[ \E_{x_0 \sim \D} \left[ 2 \left( (R + \gamma B^\top P_K B)K - \gamma B^\top P_K A \right) x_{\hat{t}} x_{\hat{t}}^\top \bigg| \hat{t} \right] \right] \\
        = \ &\frac{2}{1 - \gamma} \E_{\hat{t} \sim \mu_\gamma} \left[ \left( (R + \gamma B^\top P_K B)K - \gamma B^\top P_K A \right) (A - BK)^{\hat{t}} \E_{x_0 \sim \D} [x_0 x_0^\top] \left( (A - BK)^{\hat{t}} \right)^\top \right] \\
        = \ &2 \left( (R + \gamma B^\top P_K B)K - \gamma B^\top P_K A \right) \sum_{t = 0}^{\infty} \gamma^{t} (A - BK)^{t} \E_{x_0 \sim \D} [x_0 x_0^\top] \left( (A - BK)^{t} \right)^\top \\
        \overset{\mathrm{(i)}}{=} \ &2 \left( (R + \gamma B^\top P_K B)K - \gamma B^\top P_K A \right) \E_{x_0 \sim \D} \left[ \sum_{t = 0}^{\infty} \gamma^{t} (A - BK)^{t} x_0 x_0^\top \left( (A - BK)^{t} \right)^\top \right] \\
        \overset{\mathrm{(ii)}}{=} \ &2 \left( (R + \gamma B^\top P_K B)K - \gamma B^\top P_K A \right) \E_{x_0 \sim \D} \left[ \sum_{t = 0}^{\infty} \gamma^{t} x_t x_t^\top \right] \\
        \overset{\mathrm{(iii)}}{=} \ &\nabla \C (K),
    \end{align*}
    where (i) is done by utilizing the linearity of expectation along with replacing $\hat{t}$ by $t$ as it is just a sum variable from that equation forward, (ii) is due to $x_t = (A - BK)^t x_0$, and (iii) follows from~\eqref{eq: actual gradient}.
\end{proof}

\begin{remark}[Extension beyond LQR]\footnote{Notation in this remark follows \cite{DS-GL-NH-TD-DW-MR:14} rather than the LQR‑specific symbols used elsewhere in the paper: $s\!\in\!\mathcal S$ is the state, 
$a\!\in\!\mathbb R^{m}$ the action, 
$\mu_\theta$ the (deterministic) policy, 
$\rho^{\mu}$ the (improper) discounted state distribution, 
$Q^{\mu}$ the action–value function, $J(\theta)=\E\bigl[\sum_{t\ge0}\gamma^{t}r_t\bigr]$ the performance objective,
and $\eta \sim \mathcal{N}(0,I_m)$ the Gaussian exploration noise.} \label{rem: extension beyond LQR}
The construction in~\eqref{eq: gradient estimate practical formulation} is not automatically restricted to linear-quadratic control, but instead relies on the following assumption on the $Q$-values which is satisfied in the LQR setting.
Suppose the action-value function satisfies
\begin{equation} \label{eq:q-structure}
  Q^{\mu}(s,a)=a^{\top}H(s)a+b(s)^{\top}a+c(s)
\end{equation}
with $H(s)=H(s)^{\!\top} \in \mathbb{R}^{m \times m}$.
Then $\nabla_a Q^{\mu}(s,a)=2H(s)a+b(s)$ is \emph{affine} in~$a$.
Let $\eta\sim\mathcal N(0,I_m)$, independent of~$s$, and write
$a_\theta(s)=\mu_\theta(s)$.  
For
\(f(\eta)\coloneqq Q^{\mu}\!\bigl(s,a_\theta(s)+\sigma\eta\bigr)\)
we have
\(
  \nabla_\eta f(\eta)=\sigma\,\nabla_a Q^{\mu}\bigl(s,a_\theta(s)+\sigma\eta\bigr).
\)
Stein’s lemma~\cite{CMS:81} yields
\[
  \E_\eta \left[\eta f(\eta)\right] = \E_\eta \left[ \nabla_{\eta} f(\eta)\right] = \sigma\,\E_\eta\!\bigl[\nabla_a Q^{\mu}(s,a_\theta(s)+\sigma\eta)\bigr] =
  \sigma\,\nabla_a Q^{\mu}(s,a_\theta(s)),
\]
where the last equality uses linearity of the integrand in~$a$.
Hence
\[
  \E_\eta\!\bigl[\sigma^{-1}
     Q^{\mu}(s,a_\theta(s)+\sigma\eta)\,\eta\bigr]
  =\nabla_a Q^{\mu}(s,a_\theta(s)).
\]

Combining this with the deterministic policy gradient
of~\cite[Theorem~1]{DS-GL-NH-TD-DW-MR:14},
\[
  \nabla_\theta J(\theta)
  \;=\;
  \E_{s\sim\rho^{\mu}}\!\bigl[
       \nabla_\theta \mu_\theta(s)^{\!\top}\,
       \nabla_a Q^{\mu}(s,a)\bigr]_{a=a_\theta(s)},
\]
gives the unbiased estimator
\[
  \widehat{\nabla J}(\theta)
  \;=\;
  \nabla_\theta \mu_\theta(s)^{\!\top}\,
  \bigl[\sigma^{-1}Q^{\mu}(s,a_\theta(s)+\sigma\eta)\,\eta\bigr],
  \qquad
  s\sim\rho^{\mu},\;
  \eta\sim\mathcal N(0,I_m).
\]
\noindent\textit{Linear actor:}
If $a_\theta(s)=\Theta s$, then
\(
  \nabla_\theta a_\theta(s)=I_m\!\otimes s^{\top}
\),
so that
\[
  \nabla_\theta a_\theta(s)^{\!\top}\,
  \nabla_a Q^{\mu}(s,a)
  \;=\;
  (I_m\!\otimes s)\,\nabla_a Q^{\mu}(s,a)
  \;=\;
  \operatorname{vec}\!\bigl[\nabla_a Q^{\mu}(s,a)\,s^{\top}\bigr],
\]
using the identity
\( (I_m\!\otimes s)u=\operatorname{vec}(us^{\top}) \).
Unvectorising recovers the familiar matrix form
\( \nabla_\Theta J(\Theta)
   =\E_{s\sim\rho^{\mu}}
      [\nabla_a Q^{\mu}(s,a)\,s^{\top}]\),
and the estimator in~\eqref{eq: gradient estimate practical formulation}
follows by substituting the Stein-based replacement for
\(\nabla_a Q^{\mu}\). 

It may therefore be possible to extend the gradient estimators discussed here beyond the LQR setting by establishing that equation~\eqref{eq:q-structure} holds (perhaps approximately) for various classes of nonlinear systems.
\end{remark}

Before moving on to the next result, we define the undiscounted cost
\begin{equation}
    \Cund (K) = \E_{x_0 \sim \D} \left[ \sum_{t=0}^{\infty} (x_t^\top Q x_t + u_t^\top R u_t) \right],
\end{equation}
subject to~\eqref{eq:sys-no-noise}.

\begin{lemma} \label{lem: gamma condition}
    Suppose $K_0$ is stable and
    suppose that 
    \[
    \gamma \in \left(1-\frac{\sigma_{\mathrm{min}} (Q)}{11 \Cund (K_0)},1 \right). 
    \]
    Then 
    \begin{equation}
        \sup_{K \in \G^{\text{lqr}}} \rho (A-BK) \leq \frac{1}{\sqrt{\gamma}} \sqrt{1 - \frac{\sigma_{\mathrm{min}} (Q)}{10 \C(K_0) + \C(K^*)}}; \label{eq: rho upper bound}
    \end{equation}
    in particular, the set $\G^{\text{lqr}}$ in~\eqref{eq: G_LQR} only contains stable policies.
\end{lemma}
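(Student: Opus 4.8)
The plan is to establish the quantitative estimate \eqref{eq: rho upper bound} first, by a direct spectral argument on the Lyapunov-type fixed-point equation for $P_K$, and then to obtain the ``in particular'' clause by feeding the assumed lower bound on $\gamma$ into \eqref{eq: rho upper bound}. I would fix an arbitrary $K \in \G^{\text{lqr}}$. By definition of $\G^{\text{lqr}}$ in \eqref{eq: G_LQR}, $\C(K) \le 10\,\C(K_0) + \C(K^*)$, and the right-hand side is finite because $K_0$ is stable (so $\C(K_0) \le \Cund(K_0) < \infty$) and $K^*$ is a global minimizer of $\C$ (so $\C(K^*) \le \C(K_0) < \infty$). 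Since $Q \succ 0$, finiteness of $\C(K)$ forces $\gamma\,\rho(A-BK)^2 < 1$, so the series $P_K = \sum_{t\ge 0}\gamma^t\big((A-BK)^\top\big)^t(Q+K^\top R K)(A-BK)^t$ converges; hence $P_K$ is a well-defined symmetric positive semidefinite matrix solving $P_K = Q + K^\top R K + \gamma(A-BK)^\top P_K (A-BK)$, with $\C(K) = \tr(P_K)$ by \eqref{eq: cost defined via P_K}, and in particular $\|P_K\| \le \tr(P_K) = \C(K)$.

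For the spectral bound, I would pick an eigenvalue $\lambda \in \mathbb{C}$ of $L := A-BK$ with $|\lambda| = \rho(L)$ and a unit eigenvector $v \in \mathbb{C}^n$, $v^*v = 1$, $Lv = \lambda v$. Since $L$ is real, conjugating $Lv = \lambda v$ gives $L\bar v = \bar\lambda\bar v$, hence $v^*L^\top = \bar\lambda\, v^*$, so that $v^* L^\top P_K L v = |\lambda|^2\, v^* P_K v$. Multiplying the fixed-point equation on the left by $v^*$ and on the right by $v$ then yields $(1 - \gamma|\lambda|^2)\, v^* P_K v = v^* Q v + v^* K^\top R K v \ge v^* Q v \ge \sigma_{\mathrm{min}}(Q) > 0$, using $R \succ 0$. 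Since the right-hand side is strictly positive and $0 \le v^*P_Kv \le \|P_K\| \le \C(K)$, I conclude both $1 - \gamma|\lambda|^2 > 0$ and $1 - \gamma\,\rho(L)^2 \ge \sigma_{\mathrm{min}}(Q)/\C(K) \ge \sigma_{\mathrm{min}}(Q)/(10\C(K_0)+\C(K^*))$; rearranging gives exactly \eqref{eq: rho upper bound}.

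To finish, I would bound $10\,\C(K_0) + \C(K^*) \le 11\,\Cund(K_0)$, using $\C(K^*) \le \C(K_0)$ (optimality of $K^*$) and $\C(K_0) \le \Cund(K_0)$ (termwise, since $0 < \gamma \le 1$). The hypothesis $\gamma > 1 - \sigma_{\mathrm{min}}(Q)/(11\,\Cund(K_0))$ then implies $\gamma > 1 - \sigma_{\mathrm{min}}(Q)/(10\C(K_0)+\C(K^*))$, so the right-hand side of \eqref{eq: rho upper bound} is strictly below $1$, giving $\rho(A-BK) < 1$ for every $K \in \G^{\text{lqr}}$. The only step that needs real care is the complex-eigenvalue bookkeeping producing $v^*L^\top P_K L v = |\lambda|^2 v^* P_K v$, which is what makes $\rho(L)$ itself (rather than some norm of $L^t$) appear; this is also where the improvement over the naive estimate comes from, since ``$\gamma^t\rho(L)^{2t}\to 0$'' alone only yields $\rho(L) \le 1/\sqrt{\gamma}$, which is too weak to conclude stability—the extra additive term $\sigma_{\mathrm{min}}(Q)/\C(K)$ extracted from the fixed-point equation is essential.
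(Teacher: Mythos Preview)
Your proof is correct and takes a genuinely different technical route from the paper's. The paper argues in two stages: first, by contrapositive, it shows that any unstable $\tilde K$ satisfies $\C(\tilde K) \ge \sigma_{\min}(Q)/(1-\gamma)$ via the series lower bound $\C(\tilde K) \ge \sigma_{\min}(Q)\sum_t \gamma^t \rho((A-B\tilde K)^t)^2$, and combines this with $\sup_{\G^{\text{lqr}}}\C(K) \le 11\,\Cund(K_0) < \sigma_{\min}(Q)/(1-\gamma)$ to force stability; second, for stable $K\in\G^{\text{lqr}}$, it re-runs the same series bound with $\gamma\rho^2 < 1$ and sums the geometric series to get $\C(K) \ge \sigma_{\min}(Q)/(1-\gamma\rho^2)$, then rearranges. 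You instead go straight to the Lyapunov fixed-point equation for $P_K$ and test it against a complex eigenvector of $A-BK$ at the spectral radius, extracting $(1-\gamma\rho^2)\,v^*P_Kv \ge \sigma_{\min}(Q)$ in one shot; the quantitative bound and stability both fall out, in that order. Your eigenvector argument is slicker (no geometric-series summation, no separate case analysis), and it makes transparent exactly where the $\sigma_{\min}(Q)/\C(K)$ gap comes from; the paper's series computation is more elementary (no complex linear algebra) and perhaps more in line with how the cost is manipulated elsewhere in the analysis. Both yield the identical inequality $1-\gamma\rho(A-BK)^2 \ge \sigma_{\min}(Q)/\C(K)$.
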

This result shows that this assumption on $\gamma$ ensures stability of the policies in the $\G^\text{lqr}$ set. When $\gamma$ is small, the cost becomes heavily concentrated on early time steps and places less emphasis on the asymptotic behavior, which can lead to the optimal policy being unstable~\cite[Example~1]{RP-LB-DN-JD:17}. The assumption on $\gamma$ serves to exclude such degenerate behavior by making instability more costly. Moreover, this condition on $\gamma$ is tied to the particular definition of $\G^{\text{lqr}}$, and can be relaxed by tightening the required upper bound on the optimality gap in its definition---provided the resulting set still allows the analysis to achieve a sufficiently high confidence level. A more detailed discussion is given in Remark~\ref{rem: gamma condition flexibility} in Appendix~\ref{app: prob of failure}.

Before we provide the proof, we point out that the condition on stability of $ K_0$ readily implies that $\Cund (K_0)$ is finite. 
\begin{proof}
    Suppose $\tilde{K}$ satifies $\rho (A - B \tilde{K}) \geq 1$. Then we have
    \begin{align}
        \C(\tilde{K}) &= \E_{x_0 \sim \D} \left[ \sum_{t = 0}^{\infty} \gamma^t (x_t^\top Q x_t + u_t^\top R u_t) \right] \cr
        &\geq \sum_{t = 0}^{\infty} \gamma^t \sigma_{\mathrm{min}} (Q) \E \lVert (A-B\tilde{K})^t x_0 \rVert^2 \cr
        &= \sum_{t = 0}^{\infty} \gamma^t \sigma_{\mathrm{min}} (Q) \E [\tr ( ((A-B\tilde{K})^{t})^\top (A-B\tilde{K})^{t} x_0 x_0^\top )] \cr
        &\overset{\mathrm{(i)}}{=} \sum_{t = 0}^{\infty} \gamma^t \sigma_{\mathrm{min}} (Q) \lVert (A-B\tilde{K})^{t} \rVert_F^2 \cr
        &\geq \sum_{t = 0}^{\infty} \gamma^t \sigma_{\mathrm{min}} (Q) \rho ((A-B\tilde{K})^{t})^2 \cr
        &\overset{\mathrm{(ii)}}{\geq} \sum_{t = 0}^{\infty} \gamma^t \sigma_{\mathrm{min}} (Q) \cr
        &= \frac{\sigma_{\mathrm{min}} (Q)}{1-\gamma}, \label{eq: cost lower bound}
    \end{align}
    where (i) comes from the linearity of expectation along with the assumption on the noise from \eqref{eq: noise_assumption}, and (ii) follows from the instability of $\tilde{K}$ and that $\rho (A^t) = (\rho(A))^t$ which holds for any square matrix A. Now as a result of this, if we also show $\sup_{K \in \G^{\text{lqr}}} \C(K) < \frac{\sigma_{\mathrm{min}} (Q)}{1-\gamma}$, we have proved stability of every $K$ in the set $\G^{\text{lqr}}$. We do so as follows:
    \begin{align*}
        \frac{\sigma_{\mathrm{min}} (Q)}{1-\gamma} \overset{\mathrm{(i)}}{>} 11 \Cund (K_0)  \overset{\mathrm{(ii)}}{\geq} 11 \C(K_0) \geq 10 \C(K_0) + \C(K^*) \overset{\mathrm{(iii)}}{\geq} \sup_{K \in \G^{\text{lqr}}} \C(K),
    \end{align*}
    where (i) comes from the assumption on $\gamma$, (ii) from the fact that for a given policy, the undiscounted cost is not less than the discounted cost, and (iii) from the definition of the set $\G^{\text{lqr}}$ from \eqref{eq: G_LQR}. This proves the second claim. 
    
    For the first part, since for any $K \in \G^{\text{lqr}}$ we have that $\rho (A-BK) < 1$, we conclude that
    \begin{align*}
        \C(K) &= \E_{x_0 \sim \D} \left[ \sum_{t = 0}^{\infty} \gamma^t (x_t^\top Q x_t + u_t^\top R u_t) \right] \\
        &\overset{\mathrm{(i)}}{\geq} \sum_{t = 0}^{\infty} \gamma^t \sigma_{\mathrm{min}} (Q) \rho ((A-BK)^{t})^2 \\
        &= \sigma_{\mathrm{min}} (Q) \sum_{t = 0}^{\infty} (\gamma (\rho (A-BK))^2)^t \\
        &\overset{\mathrm{(ii)}}{=} \frac{\sigma_{\mathrm{min}} (Q)}{1 - \gamma (\rho (A-BK))^2},
    \end{align*}
    where (i) is done the same way as \eqref{eq: cost lower bound} and (ii) follows from $\gamma (\rho (A-BK))^2 < 1$ for $K \in \G^{\text{lqr}}$. As a result, for $K \in \G^{\text{lqr}}$, we have that
    \begin{align*}
        1 - \gamma (\rho (A-BK))^2 \geq \frac{\sigma_{\mathrm{min}} (Q)}{\C(K)} \Rightarrow \\
        \gamma (\rho (A-BK))^2 \leq 1 - \frac{\sigma_{\mathrm{min}} (Q)}{\C(K)} \Rightarrow \\
        \rho (A-BK) \leq \frac{1}{\sqrt{\gamma}} \sqrt{1 - \frac{\sigma_{\mathrm{min}} (Q)}{\C(K)}},
    \end{align*}
    which after taking a supremum gives
    \begin{align*}
    \sup_{K \in \G^{\text{lqr}}} \rho (A-BK) \leq \frac{1}{\sqrt{\gamma}} \sup_{K \in \G^{\text{lqr}}} \sqrt{1 - \frac{\sigma_{\mathrm{min}} (Q)}{\C(K)}} = \frac{1}{\sqrt{\gamma}} \sqrt{1 - \frac{\sigma_{\mathrm{min}} (Q)}{10 \C(K_0) + \C(K^*)}},
    \end{align*}
    concluding the proof.
\end{proof}
We next introduce a high probability upper bound on our gradient estimate on any $K \in \G^\text{lqr}$.

\begin{lemma} \label{lem: gradient estimate bounds}
    Suppose $\delta \in (0, \frac{1}{e}]$, and $\gamma$ is chosen as in Lemma~\ref{lem: gamma condition}. 
    Then for any $K \in \G^{\text{lqr}}$, we have that
    \begin{equation}
        \lVert \widehat{\nabla \C} (K) \rVert_F \leq \frac{\xi_3}{1 - \gamma} \left(\log \frac{1}{\delta}\right)^{3/2}
    \end{equation}
    with probability at least $1-\delta$, where 
$\xi_1, \xi_2, \xi_3 \in \real$ are given by
    \begin{align}
        \xi_1 &:= \left( \lVert Q \rVert + 2 \lVert R \rVert \widetilde{c_{K_1}}^2 + 2 \gamma (10 \C(K_0) + \C(K^*)) \right) e^3 n^3 \Bar{\K}^3  C_m^{3/2} \\
        \xi_2 &:= \left( 2 \lVert R \rVert + 2 \gamma \lVert B \rVert^2 (10 \C(K_0) + \C(K^*)) \right) e n \Bar{\K} C_m^{1/2} \\
        \xi_3 &:= \frac{1}{\sigma} \left(\xi_1 5^{1/2} m^{1/2}\right) + \sigma \left( \xi_2 5^{3/2} m^{3/2} \right), \label{eq: xi_3 definition}
    \end{align}
    where $\Bar{\K}$ is a positive constant. Moreover, 
    \begin{equation}
        \E \lVert \widehat{\nabla \C} (K) \rVert_F^2 \leq \frac{\xi_4}{(1 - \gamma)^2}, \label{eq: gradient_estimate_variance_bound}
    \end{equation}
    where    
    \begin{align}\label{eq: xi_4 definition}
        \xi_4 &:= \frac{1}{\sigma^2} \xi_1^2 m + 2 \xi_1 \xi_2 m (m+2) + \sigma^2 \xi_2^2 m (m+2) (m+4).
    \end{align}
\end{lemma}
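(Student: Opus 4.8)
The plan is to start from the alternative formulation of the estimator in Lemma~\ref{lem: alternative estimate formulation}, namely~\eqref{eq: gradient estimate practical formulation}, and observe that because $\eta_{\hat t}x_{\hat t}^\top$ is a rank-one matrix,
\[
\lVert\widehat{\nabla\C}(K)\rVert_F=\frac{1}{\sigma(1-\gamma)}\,\bigl|Q^{K}(x_{\hat t},u_{\hat t})\bigr|\,\lVert\eta_{\hat t}\rVert\,\lVert x_{\hat t}\rVert,\qquad u_{\hat t}=-Kx_{\hat t}+\sigma\eta_{\hat t}.
\]
The first step is an almost-sure bound on the ``state'' factors that is uniform in the random time index $\hat t$. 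By Lemma~\ref{lem: gamma condition} every $K\in\G^{\text{lqr}}$ stabilizes $(A,B)$, so $\Bar{\K}:=\sup_{K\in\G^{\text{lqr}}}\K(A-BK)$ is finite (the sublevel set $\G^{\text{lqr}}$ is a compact subset of the open set of stabilizing policies, on which $K\mapsto\K(A-BK)$ is continuous), and the Kreiss matrix theorem $\sup_t\lVert M^t\rVert\le e\,n\,\K(M)$ then gives $\lVert(A-BK)^{t}\rVert\le e\,n\,\Bar{\K}$ for all $t\ge0$. Combined with $x_{\hat t}=(A-BK)^{\hat t}x_0$ and $\lVert x_0\rVert^2\le C_m$ a.s.\ from~\eqref{eq: noise_assumption}, this yields $\lVert x_{\hat t}\rVert\le e\,n\,\Bar{\K}\,C_m^{1/2}$ and, since $(A-BK)x_{\hat t}=(A-BK)^{\hat t+1}x_0$, likewise $\lVert(A-BK)x_{\hat t}\rVert\le e\,n\,\Bar{\K}\,C_m^{1/2}$, both almost surely.

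Next I would bound the $Q$-value. Expanding $Q^{K}(x_{\hat t},u_{\hat t})$ via~\eqref{eq: Q function definition}, writing $Ax_{\hat t}+Bu_{\hat t}=(A-BK)x_{\hat t}+\sigma B\eta_{\hat t}$, and using the triangle inequality together with $\lVert v+w\rVert^2\le 2\lVert v\rVert^2+2\lVert w\rVert^2$ gives
\[
\bigl|Q^{K}(x_{\hat t},u_{\hat t})\bigr|\le\bigl(\lVert Q\rVert+2\lVert R\rVert\lVert K\rVert^2+2\gamma\lVert P_K\rVert\bigr)(e\,n\Bar{\K})^2C_m+\bigl(2\lVert R\rVert+2\gamma\lVert B\rVert^2\lVert P_K\rVert\bigr)\sigma^2\lVert\eta_{\hat t}\rVert^2.
\]
Here I use $\lVert K\rVert\le\widetilde{c_{K_1}}$ from Remark~\ref{rem: regularity parameters} and $\lVert P_K\rVert\le\tr(P_K)=\C(K)\le 10\C(K_0)+\C(K^*)$ for $K\in\G^{\text{lqr}}$ (first step since $P_K\succeq0$, middle equality by~\eqref{eq: cost defined via P_K}, last by~\eqref{eq: G_LQR}). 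Multiplying through by the remaining factor $\lVert x_{\hat t}\rVert\le e\,n\Bar{\K}\,C_m^{1/2}$ turns the two coefficients into precisely $\xi_1$ and $\xi_2$, so that $\bigl|Q^{K}(x_{\hat t},u_{\hat t})\bigr|\,\lVert x_{\hat t}\rVert\le\xi_1+\sigma^2\xi_2\lVert\eta_{\hat t}\rVert^2$, and hence
\[
\lVert\widehat{\nabla\C}(K)\rVert_F\le\frac{1}{1-\gamma}\Bigl(\frac{\xi_1}{\sigma}\lVert\eta_{\hat t}\rVert+\sigma\xi_2\lVert\eta_{\hat t}\rVert^3\Bigr).
\]

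It then remains only to control the Gaussian factor. Since $\lVert\eta_{\hat t}\rVert^2\sim\chi^2_m$, the Laurent--Massart tail inequality applied with parameter $m\log\tfrac1\delta$ gives $\lVert\eta_{\hat t}\rVert^2\le 5\,m\log\tfrac1\delta$ with probability at least $1-\delta$ whenever $\delta\le 1/e$; substituting $\lVert\eta_{\hat t}\rVert\le 5^{1/2}m^{1/2}(\log\tfrac1\delta)^{1/2}$ and $\lVert\eta_{\hat t}\rVert^3\le 5^{3/2}m^{3/2}(\log\tfrac1\delta)^{3/2}$ into the last display and using $\log\tfrac1\delta\ge1$ to absorb $(\log\tfrac1\delta)^{1/2}$ into $(\log\tfrac1\delta)^{3/2}$ yields $\lVert\widehat{\nabla\C}(K)\rVert_F\le\frac{\xi_3}{1-\gamma}(\log\tfrac1\delta)^{3/2}$ with $\xi_3$ as in~\eqref{eq: xi_3 definition}. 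For~\eqref{eq: gradient_estimate_variance_bound} I would instead square the inequality, expand $\bigl(\tfrac{\xi_1}{\sigma}\lVert\eta_{\hat t}\rVert+\sigma\xi_2\lVert\eta_{\hat t}\rVert^3\bigr)^2=\tfrac{\xi_1^2}{\sigma^2}\lVert\eta_{\hat t}\rVert^2+2\xi_1\xi_2\lVert\eta_{\hat t}\rVert^4+\sigma^2\xi_2^2\lVert\eta_{\hat t}\rVert^6$, and take expectations using the central $\chi^2_m$ moments $\E\lVert\eta_{\hat t}\rVert^2=m$, $\E\lVert\eta_{\hat t}\rVert^4=m(m+2)$, $\E\lVert\eta_{\hat t}\rVert^6=m(m+2)(m+4)$, which reproduces $\xi_4$ of~\eqref{eq: xi_4 definition}. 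The step I expect to be the main obstacle is the uniform Kreiss bound: one must establish $\Bar{\K}<\infty$ via compactness of $\G^{\text{lqr}}$ and continuity of the Kreiss constant over stabilizing policies, and invoke the Kreiss matrix theorem in the right form to obtain a bound on $\lVert(A-BK)^t\rVert$ uniform in both $t$ and $K$; the rest is bookkeeping to line up the constants $\xi_1,\dots,\xi_4$, including selecting the exact $\chi^2$ tail bound that produces the factor $5$.
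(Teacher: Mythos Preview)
Your proposal is correct and follows essentially the same route as the paper: factor $\lVert\widehat{\nabla\C}(K)\rVert_F$ via the rank-one structure, bound $\lVert x_{\hat t}\rVert$ uniformly in $\hat t$ by the Kreiss matrix theorem, expand the $Q$-value using $\lVert P_K\rVert\le\tr(P_K)=\C(K)$ and $\lVert K\rVert\le\widetilde{c_{K_1}}$, arrive at the key inequality $\lVert\widehat{\nabla\C}(K)\rVert_F\le\frac{1}{1-\gamma}\bigl(\frac{\xi_1}{\sigma}\lVert\eta_{\hat t}\rVert+\sigma\xi_2\lVert\eta_{\hat t}\rVert^3\bigr)$, and then invoke Laurent--Massart for the tail bound and $\chi^2$ moments for the variance bound.

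The one substantive difference is how $\Bar{\K}<\infty$ is established. You argue via compactness of $\G^{\text{lqr}}$ together with continuity of $K\mapsto\K(A-BK)$ on the open set of stabilizing policies. The paper instead proves directly (Sublemma~\ref{sublem: sup_sup finite}) that $\sup_{K\in\G^{\text{lqr}}}\sup_{t\ge0}\lVert(A-BK)^t\rVert<\infty$ by a Lyapunov contradiction argument: if a sequence $K_j\to\bar K\in\G^{\text{lqr}}$ had $S(K_j)\to\infty$, the common Lyapunov function for $A-B\bar K$ would eventually give a uniform geometric decay for $A-BK_j$ as well, a contradiction. The finiteness of $\Bar{\K}$ then follows from the lower Kreiss bound $\K(M)\le\sup_t\lVert M^t\rVert$. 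Your route is shorter but leans on continuity of the Kreiss constant, which is true but not entirely trivial to justify (one needs that the supremum in the definition localizes to a compact annulus uniformly over a neighborhood of $M$); the paper's route is more elementary and self-contained. Either way, you correctly identified this as the only nonroutine step.
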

\begin{proof}
    Using the formulation of $\widehat{\nabla \C} (K)$ derived in \eqref{eq: gradient estimate practical formulation}, we have
    \begin{align}
        \lVert \widehat{\nabla \C} (K) \rVert_F =& \left| \left| \frac{1}{\sigma (1-\gamma)} \eta_{\hat{t}} x_{\hat{t}}^\top  Q^{K} (x_{\hat{t}}, -K x_{\hat{t}} + \sigma \eta_{\hat{t}}) \right| \right|_F \cr
        \leq & \frac{1}{\sigma (1-\gamma)} \lVert \eta_{\hat{t}} \rVert \lVert x_{\hat{t}} \rVert  Q^{K} (x_{\hat{t}}, -K x_{\hat{t}} + \sigma \eta_{\hat{t}}). \label{eq: gradient norm initial bound}
    \end{align}
    First, note that
    \begin{equation}
        \lVert x_{\hat{t}} \rVert = \lVert (A-BK)^{\hat{t}} x_0 \rVert \leq \lVert (A-BK)^{\hat{t}} \rVert \lVert x_0 \rVert \overset{\mathrm{(i)}}{\leq} \sup_{t \geq 0} \lVert (A-BK)^{t} \rVert C_m^{1/2}, \label{eq: estimate bound tmp}
    \end{equation}
    where (i) follows from the assumption on the initial state noise mentioned in \eqref{eq: noise_assumption}. 

    \begin{sublemma} \label{sublem: sup_sup finite}
        We have that 
    \begin{equation}
    \sup_{K \in \G^{\text{lqr}}} \sup_{t \geq 0} \lVert (A-BK)^{t} \rVert \label{eq: sup_sup_norm direct}
    \end{equation}
    is finite.
    \end{sublemma}
   \emph{Proof of Sublemma~\ref{sublem: sup_sup finite}.}
    We start by arguing that $\G^{\text{lqr}}$ is a compact set. First, note that since $\| K \| \leq \widetilde{c_{K_1}}$ (see Remark~\ref{rem: regularity parameters}) for any $K \in \G^{\text{lqr}}$, the set $\G^{\text{lqr}}$ is bounded. Secondly, since $\C (K)$ is locally Lipschitz in $\G^{\text{lqr}}$, it is also continuous, and hence, by the definition of $\G^{\text{lqr}}$ in~\eqref{eq: G_LQR}, we have that $\G^{\text{lqr}}$ is the pre-image of the closed interval $[0, 10 \C (K_0) + \C (K^*)]$ under a continuous map $\map{\C}{\G^{\text{lqr}}}{\R}$, implying $\G^{\text{lqr}}$ is closed as well. As a result of this, we have that $\G^{\text{lqr}}$ is compact. Now we move on to show why \eqref{eq: sup_sup_norm direct} is finite. 
    
    First, let us define
    \begin{align*}
        S(x_0; K) := \sum_{t = 0}^{\infty} \| x_t \|^2,
    \end{align*}
    where $x_{t+1} = (A-BK) x_t$. Moreover, we let
    \begin{align*}
    S(K) &:= \E_{x_0 \sim \D} S(x_0; K) \\
    &= \E_{x_0 \sim \D} \left[ \sum_{t = 0}^{\infty} \| x_t \|^2 \right] \\
    &= \E_{x_0 \sim \D} \left[ \sum_{t=0}^{\infty} \| (A - BK)^t x_0 \|^2 \right] \\
    &= \sum_{t=0}^{\infty} \E_{x_0 \sim \D} \left[ \tr \left( \left((A - BK)^t \right)^\top (A - BK)^t x_0 x_0^\top \right) \right] \\
    &= \sum_{t=0}^{\infty} \lVert (A-BK)^t \rVert_F^2 \\
    &\geq \sum_{t=0}^{\infty} \lVert (A-BK)^t \rVert^2,
    \end{align*}
    which after taking the square root of both sides gives
    \begin{align*}
        \sqrt{S(K)} &\geq \sqrt{\sum_{t=0}^{\infty} \lVert (A-BK)^t \rVert^2} \\
        &\geq \sup_{t \geq 0} \| (A-BK)^t \|.
    \end{align*}
    As a result, we have that
    \[
    \sup_{t \geq 0} \| (A-BK)^t \| \leq \sqrt{S(K)},
    \]
    which after taking a supremum over $\G^{\text{lqr}}$ yields
    \begin{equation}
    \sup_{K \in \G^{\text{lqr}}} \sup_{t \geq 0} \| (A-BK)^t \| \leq \sup_{K \in \G^{\text{lqr}}} \sqrt{S(K)}. \label{eq: sup_sup upper bound}
    \end{equation}
    Now it suffices to show $\sup_{K \in \G^{\text{lqr}}} \sqrt{S(K)}$ is finite, which we prove by contradiction. Suppose that this is not the case. Therefore, there exists a sequence $\{K_j\}_{j=1}^{\infty}$ such that $\sqrt{S(K_j)} \xrightarrow[]{j \to \infty} \infty$. By compactness, we can pick a convergent subsequence whose limit we denote by $\bar{K}$. We will abuse notation and henceforth use $K_j$ to refer to the subsequence; observe that $K_j$ should also satisfy $\sqrt{S(K_j)} \xrightarrow[]{j \to \infty} \infty$. 

    Now since $\bar{K} \in \G^{\text{lqr}}$, we have from Lemma~\ref{lem: gamma condition} that $A - B \bar{K}$ is strictly stable, and thus, there exists a Lyapunov function $V(x) = x^\top \bar{P} x$ where $\bar{P}$ is a positive definite matrix that satisfies
    \[
    (A - B \Bar{K})^\top \bar{P} (A - B \Bar{K}) - \bar{P} = -I.
    \]
    Therefore, for $j$ large enough,
    \[
    (A - B K_j)^\top \bar{P} (A - B K_j) - \bar{P} \preceq -(1/2)I.
    \]
    Then
    \begin{align*}
    V ((A - B K_j) x) - V(x) &= x^T (A - B K_j) \bar{P} (A - B K_j) x - x^T \bar{P} x  \\
    &\leq -(1/2) \| x \|^2 \\
    &= - \frac{1}{2 \lambda_{\mathrm{max}} (\bar{P})} \left( \lambda_{\mathrm{max}} (\bar{P}) \| x \|^2 \right) \\
    &\overset{\mathrm{(i)}}{\leq} - \frac{1}{2 \lambda_{\mathrm{max}} (\bar{P})} V(x),
    \end{align*}
    where (i) is due to the fact that $V(x) \leq \lambda_{\mathrm{max}} (\bar{P}) \| x \|^2$. Thus,
    \begin{equation}
    V((A - B K_j) x) \leq \left( 1 - \frac{1}{2 \lambda_{\mathrm{max}} (\bar{P})} \right) V(x). \label{eq: Lyapunov V exp}
    \end{equation}
    As a result, we have that
    \begin{align*}
        S(x_0; K_j) &= \sum_{t = 0}^{\infty} \| x_t \|^2 \\
        &\overset{\mathrm{(i)}}{\leq} \frac{1}{\lambda_{\mathrm{min}} (\bar{P})} \sum_{t = 0}^{\infty} V((A - B K_j)^t x_0) \\
        &\overset{\mathrm{(ii)}}{\leq} \frac{1}{\lambda_{\mathrm{min}} (\bar{P})} \sum_{t = 0}^{\infty} \left( 1 - \frac{1}{2 \lambda_{\mathrm{max}} (\bar{P})} \right)^t V(x_0) \\
        &\leq \frac{2 \lambda_{\mathrm{max}} (\bar{P})}{\lambda_{\mathrm{min}} (\bar{P})} V(x_0) \\
        &\leq \frac{2 \lambda_{\mathrm{max}}^2 (\bar{P})}{\lambda_{\mathrm{min}} (\bar{P})} \| x_0 \|^2,
    \end{align*}
    where (i) follows from $V(x) \geq \lambda_{\mathrm{min}} (\bar{P}) \| x \|^2$ and (ii) from~\eqref{eq: Lyapunov V exp}. Now taking an expectation over $x_0 \sim \D$ yields
    \begin{align*}
    S(K_j) &\leq \frac{2 \lambda_{\mathrm{max}}^2 (\bar{P})}{\lambda_{\mathrm{min}} (\bar{P})} \E_{x_0 \sim \D} \| x_0 \|^2 \\
    &= \frac{2 \lambda_{\mathrm{max}}^2 (\bar{P})}{\lambda_{\mathrm{min}} (\bar{P})} \E_{x_0 \sim \D} \tr (x_0 x_0^\top) \\
    &= \frac{2 \lambda_{\mathrm{max}}^2 (\bar{P})}{\lambda_{\mathrm{min}} (\bar{P})} \tr \left( \E_{x_0 \sim \D} [x_0 x_0^\top] \right) \\
    &= \frac{2 \lambda_{\mathrm{max}}^2 (\bar{P})}{\lambda_{\mathrm{min}} (\bar{P})} \tr (I_n) \\
    &= \frac{2 n \lambda_{\mathrm{max}}^2 (\bar{P})}{\lambda_{\mathrm{min}} (\bar{P})},
    \end{align*}
    and hence, 
    \[
    \sqrt{S(K_j)} \leq \sqrt{\frac{2 n \lambda_{\mathrm{max}}^2 (\bar{P})}{\lambda_{\mathrm{min}} (\bar{P})}},
    \]
    which is finite, resulting in a contradiction,
    concluding the proof of Sublemma~\ref{sublem: sup_sup finite}. \oprocend
    
    We now continue with the proof of Lemma~\ref{lem: gradient estimate bounds}. Let us first make a remark. By the Kreiss matrix theorem \cite{JRL-LNT:84,MNS:91}, we have that
    \begin{equation}
        \K (A-BK) \leq \sup_{t \geq 0} \lVert (A-BK)^{t} \rVert \leq e \ n \ \K (A-BK). \label{eq: Kreiss matrix theorem}
    \end{equation}  
    Consequently, we can define the following constant
    \begin{equation}
        \Bar{\K} := \sup_{K \in \G^{\text{lqr}}} \K (A-BK), \label{eq: K_bar definition}
    \end{equation} 
    which is finite as a result of \eqref{eq: Kreiss matrix theorem} and Sublemma~\ref{sublem: sup_sup finite}.
    Combining \eqref{eq: Kreiss matrix theorem} and \eqref{eq: K_bar definition} with~\eqref{eq: estimate bound tmp} gives
    \begin{equation}
        \lVert x_{\hat{t}} \rVert \leq e \ n \ C_m^{1/2} \ \K (A-BK) \leq e \ n \ C_m^{1/2} \ \Bar{\K}, \label{eq: state_norm_bound}
    \end{equation}
    for any $\hat{t} \geq 0$.
    Moreover,
     \begin{align}
        Q^{K} (x_{\hat{t}}, -K x_{\hat{t}} + \sigma \eta_{\hat{t}}) =& x_{\hat{t}}^\top Q x_{\hat{t}} + (-K x_{\hat{t}} + \sigma \eta_{\hat{t}})^\top R (-K x_{\hat{t}} + \sigma \eta_{\hat{t}}) \cr
        &+ \gamma ((A - BK) x_{\hat{t}} + \sigma B \eta_{\hat{t}})^\top P_K ((A - BK) x_{\hat{t}} + \sigma B \eta_{\hat{t}}) \cr
        \overset{\mathrm{(i)}}{\leq}& \lVert Q \rVert e^2 n^2 \Bar{\K}^2 C_m + \lVert R \rVert \lVert -K x_{\hat{t}} + \sigma \eta_{\hat{t}} \rVert^2 + \gamma \lVert P_K \rVert \lVert x_{\hat{t}+1} + \sigma B \eta_{\hat{t}} \rVert^2 \cr
        \overset{\mathrm{(ii)}}{\leq}& \lVert Q \rVert e^2 n^2 \Bar{\K}^2 C_m + 2 \lVert R \rVert \left( \lVert K \rVert^2 \lVert x_{\hat{t}} \rVert^2 + \sigma^2 \lVert \eta_{\hat{t}} \rVert^2 \right) \cr
        &+ 2 \gamma \C(K) \left( \lVert x_{\hat{t}+1} \rVert^2 + \sigma^2 \lVert B \rVert^2 \lVert \eta_{\hat{t}} \rVert^2 \right) \cr
        \overset{\mathrm{(iii)}}{\leq}& \left( \lVert Q \rVert + 2 \lVert R \rVert \widetilde{c_{K_1}}^2 + 2 \gamma (10 \C(K_0) + \C(K^*)) \right) e^2 n^2 \Bar{\K}^2 C_m \cr
        &+ \left( 2 \sigma^2 \lVert R \rVert + 2 \gamma \sigma^2 \lVert B \rVert^2 (10 \C(K_0) + \C(K^*)) \right) \lVert \eta_{\hat{t}} \rVert^2, \label{eq: Q-function bound}
    \end{align}
    where (i) follows from \eqref{eq: state_norm_bound}, (ii) from $\lVert P_K \rVert \leq \tr (P_K)$ along with $\tr (P_K) = \C(K)$ as shown in~\eqref{eq: cost defined via P_K}, and (iii) from the fact that $\lVert K \rVert \leq \widetilde{c_{K_1}}$ for any $K \in \G^{\text{lqr}}$ (see Remark~\ref{rem: regularity parameters}) along with reapplying \eqref{eq: state_norm_bound} and utilizing the upper bound obtained on $\C(K)$ by the definition of the set $\G^{\text{lqr}}$. Now applying the derived bounds~\eqref{eq: state_norm_bound} and~\eqref{eq: Q-function bound} on~\eqref{eq: gradient norm initial bound}, we conclude that 
    \begin{align}
        \lVert \widehat{\nabla \C} (K) \rVert_F \leq & \frac{\left( \lVert Q \rVert + 2 \lVert R \rVert \widetilde{c_{K_1}}^2 + 2 \gamma (10 \C(K_0) + \C(K^*)) \right) e^3 n^3 \Bar{\K}^3 C_m^{3/2}}{\sigma (1 - \gamma)} \lVert \eta_{\hat{t}} \rVert \cr
        &+ \frac{ \sigma^2 \left( 2 \lVert R \rVert + 2 \gamma \lVert B \rVert^2 (10 \C(K_0) + \C(K^*)) \right) e n \Bar{\K} C_m^{1/2}}{\sigma (1 - \gamma)} \lVert \eta_{\hat{t}} \rVert^3 \cr
        =& \frac{1}{1 - \gamma} \left( \frac{1}{\sigma} \xi_1 \lVert \eta_{\hat{t}} \rVert + \sigma \xi_2 \lVert \eta_{\hat{t}} \rVert^3 \right). \label{eq: gradient estimate norm bound before LM}
    \end{align}
    Furthermore, since $\eta_{\hat{t}} \sim \N (0, I_m)$ for any $\hat{t}$, $\lVert \eta_{\hat{t}} \rVert^2$ is distributed according to the chi-squared distribution with $m$ degrees of freedom ($\lVert \eta_{\hat{t}} \rVert^2 \sim \chi^2 (m)$ for any $\hat{t}$). Therefore, the standard~\cite{BL-PM:00} bounds suggest that for arbitrary $y>0$, we have that
    \begin{equation}
        \PP \{ \lVert \eta_{\hat{t}} \rVert^2 \geq m + 2 \sqrt{m y} + 2y \} \leq e^{-y}. \label{eq: Lauren-Massart Chi concentration}
    \end{equation}
    Now since by our assumption $0 < \delta \leq 1/e$, it holds that $y = m \log \frac{1}{\delta} \geq m$ and thus
    \begin{align*}
        \PP \{ \lVert \eta_{\hat{t}} \rVert^2 \geq 5y \} \leq \PP \{ \lVert \eta_{\hat{t}} \rVert^2 \geq m + 2 \sqrt{m y} + 2y \} \leq e^{-y},
    \end{align*}
    which after substituting $y$ with its value $m \log \frac{1}{\delta}$ gives
    \begin{align*}
        \PP \{ \lVert \eta_{\hat{t}} \rVert^2 \geq 5 m \log \frac{1}{\delta} \} \leq e^{-m \log \frac{1}{\delta}} = \delta^m \leq \delta.
    \end{align*}
    As a result, we have $\lVert \eta_{\hat{t}} \rVert \leq 5^{1/2} m^{1/2} (\log \frac{1}{\delta})^{1/2}$ and consequently 
    \[
    \lVert \eta_{\hat{t}} \rVert^3 \leq 5^{3/2} m^{3/2} (\log \frac{1}{\delta})^{3/2}
    \]
    with probability at least $1-\delta$, which after applying on \eqref{eq: gradient estimate norm bound before LM} yields
    \begin{align*}
    \lVert \widehat{\nabla \C} (K) \rVert_F \leq & \frac{1}{1 - \gamma} \left(\frac{1}{\sigma} \xi_1 5^{1/2} m^{1/2} \left(\log \frac{1}{\delta}\right)^{1/2} + \sigma \xi_2 5^{3/2} m^{3/2} \left(\log \frac{1}{\delta}\right)^{3/2} \right) \\
    \leq & \frac{1}{1 - \gamma} \left(\frac{1}{\sigma} \xi_1 5^{1/2} m^{1/2} + \sigma \xi_2 5^{3/2} m^{3/2} \right) \left(\log \frac{1}{\delta}\right)^{3/2} \\
    =& \frac{\xi_3}{1 - \gamma} \left(\log \frac{1}{\delta}\right)^{3/2},
    \end{align*}
    proving the first claim.

    As for the second claim, note that using \eqref{eq: gradient estimate norm bound before LM}, we have
    \begin{align}
        \lVert \widehat{\nabla \C} (K) \rVert_F^2 \leq \frac{1}{(1 - \gamma)^2} \left( \frac{1}{\sigma^2} \xi_1^2 \lVert \eta_{\hat{t}} \rVert^2 + 2 \xi_1 \xi_2 \lVert \eta_{\hat{t}} \rVert^4 + \sigma^2 \xi_2^2 \lVert \eta_{\hat{t}} \rVert^6\right). \label{eq: before chi moments}
    \end{align}
    Now since $\lVert \eta_{\hat{t}} \rVert \sim \chi (m)$ whose moments are known, taking an expectation on both sides of \eqref{eq: before chi moments} results in
    \begin{align*}
        \E \lVert \widehat{\nabla \C} (K) \rVert_F^2 \leq & \frac{1}{(1 - \gamma)^2} \left( \frac{1}{\sigma^2} \xi_1^2 \E \lVert \eta_{\hat{t}} \rVert^2 + 2 \xi_1 \xi_2 \E \lVert \eta_{\hat{t}} \rVert^4 + \sigma^2 \xi_2^2 \E \lVert \eta_{\hat{t}} \rVert^6 \right) \\
        =& \frac{1}{(1 - \gamma)^2} \left( \frac{1}{\sigma^2} \xi_1^2 m + 2 \xi_1 \xi_2 m (m+2) + \sigma^2 \xi_2^2 m (m+2) (m+4) \right) \\
        =& \frac{\xi_4}{(1 - \gamma)^2},
    \end{align*}
    concluding the proof.
\end{proof}

Following Lemma~\ref{lem: gradient estimate bounds}, we now define the following event for each iteration $t$ of Algorithm~\ref{alg: LQR_policy_gradient}:
\begin{equation}
    \mathcal{A}_t = \left\{ \| \widehat{\nabla \C} (K_t) \|_F \leq \frac{\xi_3}{1 - \gamma} \left( \log{\frac{1}{\delta}} \right)^{3/2} \right\}. \label{eq: A_t definition}
\end{equation}
Having this, we introduce the following lemma:
\begin{lemma} \label{lem: gradient estimate conditioned}
    Suppose $\delta \in (0,e^{-3/2}]$, and $\gamma$ is chosen as in Lemma~\ref{lem: gamma condition}. Then for any given $K_t \in \G^{\text{lqr}}$, we have that
    \begin{equation}
        \| \E [\widehat{\nabla \C} (K_t) 1_{\mathcal{A}_t}] - \nabla \C (K_t) \|_F \leq \frac{3 \xi_3}{1 - \gamma} \delta \left( \log{\frac{1}{\delta}} \right)^{3/2}.
    \end{equation}
\end{lemma}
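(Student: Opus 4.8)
The plan is to exploit the exact unbiasedness established in Proposition~\ref{prop: gradient estimate expectation}. Since $\E[\widehat{\nabla \C}(K_t)] = \nabla \C(K_t)$ for the given $K_t$, we may write
\[
\E[\widehat{\nabla \C}(K_t)\,1_{\mathcal{A}_t}] - \nabla \C(K_t) \;=\; -\,\E[\widehat{\nabla \C}(K_t)\,1_{\mathcal{A}_t^c}],
\]
so that by the triangle inequality for $\|\cdot\|_F$,
\[
\bigl\| \E[\widehat{\nabla \C}(K_t)\,1_{\mathcal{A}_t}] - \nabla \C(K_t) \bigr\|_F \;\leq\; \E\bigl[\,\| \widehat{\nabla \C}(K_t) \|_F\, 1_{\mathcal{A}_t^c}\,\bigr].
\]
The entire task is therefore to show that restricting to the rare event $\mathcal{A}_t^c$ costs only a factor proportional to $\delta$; note that a crude Cauchy--Schwarz against the second-moment bound~\eqref{eq: gradient_estimate_variance_bound} would only give a factor $\sqrt{\delta}$, which is too weak.

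Next I would revisit the proof of Lemma~\ref{lem: gradient estimate bounds} and reuse two ingredients: the pointwise bound~\eqref{eq: gradient estimate norm bound before LM},
\[
\| \widehat{\nabla \C}(K_t) \|_F \;\leq\; \frac{1}{1-\gamma}\Bigl( \tfrac{1}{\sigma}\,\xi_1\, \|\eta_{\hat{t}}\| + \sigma\, \xi_2\, \|\eta_{\hat{t}}\|^3 \Bigr),
\]
and the observation made there that $\|\eta_{\hat{t}}\|^2 \leq 5 m \log \tfrac{1}{\delta}$ already forces $\| \widehat{\nabla \C}(K_t) \|_F \leq \tfrac{\xi_3}{1-\gamma}\bigl(\log\tfrac{1}{\delta}\bigr)^{3/2}$ (here using $\log\tfrac1\delta \geq 1$, guaranteed by $\delta \leq e^{-3/2}$). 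Hence $\{\|\eta_{\hat{t}}\|^2 \leq 5m\log\tfrac1\delta\} \subseteq \mathcal{A}_t$, i.e., $\mathcal{A}_t^c \subseteq \{\|\eta_{\hat{t}}\| > r_0\}$ with $r_0 := \bigl(5m\log\tfrac1\delta\bigr)^{1/2}$. Plugging the pointwise bound into the previous display and enlarging $\mathcal{A}_t^c$ to $\{\|\eta_{\hat{t}}\| > r_0\}$ reduces the problem to bounding the two truncated moments $\E\bigl[\|\eta_{\hat{t}}\|\,1_{\{\|\eta_{\hat{t}}\|>r_0\}}\bigr]$ and $\E\bigl[\|\eta_{\hat{t}}\|^3\,1_{\{\|\eta_{\hat{t}}\|>r_0\}}\bigr]$.

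For these I would use the layer-cake identity $\E[X^p 1_{\{X>r_0\}}] = r_0^p\,\PP(X>r_0) + \int_{r_0}^{\infty} p r^{p-1}\,\PP(X>r)\,dr$ with $X = \|\eta_{\hat{t}}\|$, together with the tail bound $\PP(\|\eta_{\hat{t}}\| > r) \leq e^{-r^2/5}$ valid for all $r \geq (5m)^{1/2}$ --- this is exactly the Laurent--Massart estimate~\eqref{eq: Lauren-Massart Chi concentration} used in Lemma~\ref{lem: gradient estimate bounds}, taken with $y = r^2/5 \geq m$ --- and in particular $\PP(\|\eta_{\hat{t}}\| > r_0) \leq \delta^m \leq \delta$. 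A short computation (substitute $r^2 = r_0^2 + v$ in the remaining integral, use elementary bounds on $(r_0^2+v)^{(p-2)/2}$, and invoke $r_0^2 = 5m\log\tfrac1\delta \geq \tfrac{15}{2}$, which is precisely where the strengthened hypothesis $\delta \leq e^{-3/2}$ enters) then yields $\E[\|\eta_{\hat{t}}\|\,1_{\{\|\eta_{\hat{t}}\|>r_0\}}] \leq \tfrac32 r_0\,\delta$ and $\E[\|\eta_{\hat{t}}\|^3\,1_{\{\|\eta_{\hat{t}}\|>r_0\}}] \leq 3 r_0^3\,\delta$. Substituting $r_0 = 5^{1/2}m^{1/2}(\log\tfrac1\delta)^{1/2}$, using $(\log\tfrac1\delta)^{1/2}\leq(\log\tfrac1\delta)^{3/2}$ and $\tfrac32\leq 3$, and recalling the definition~\eqref{eq: xi_3 definition} of $\xi_3$, the two contributions combine to at most $\tfrac{3\xi_3}{1-\gamma}\,\delta\,(\log\tfrac1\delta)^{3/2}$, which is the claim.

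The step I expect to be the main obstacle is this last one: the truncated-moment estimates must be carried out sharply enough that the restriction to $\mathcal{A}_t^c$ genuinely costs only an honest factor of $\delta$ (rather than $\sqrt\delta$), and with small enough constants to land inside the prescribed coefficient $3$. This is exactly why the hypothesis is tightened to $\delta \leq e^{-3/2}$ here rather than the $\delta \leq 1/e$ of Lemma~\ref{lem: gradient estimate bounds}. Everything else --- the unbiasedness reduction and the event inclusion $\mathcal{A}_t^c \subseteq \{\|\eta_{\hat{t}}\| > r_0\}$ --- is routine once those two ingredients are in hand.
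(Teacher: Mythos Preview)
Your proposal is correct, and it takes a genuinely different route from the paper's proof. Both arguments begin identically, using Proposition~\ref{prop: gradient estimate expectation} to reduce the claim to bounding $\E\bigl[\|\widehat{\nabla \C}(K_t)\|_F\,1_{\mathcal{A}_t^c}\bigr]$. From that point the paper stays at the level of the gradient-estimate norm: it inverts the high-probability bound of Lemma~\ref{lem: gradient estimate bounds} into a tail estimate $\PP\{\|\widehat{\nabla \C}(K_t)\|_F \geq z\} \leq \exp\bigl(-(z(1-\gamma)/\xi_3)^{2/3}\bigr)$, writes the truncated expectation as a conditional expectation times a probability, and then integrates the resulting tail (an $\erfc$ term appears and is bounded). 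You instead descend one level, invoking the pointwise bound~\eqref{eq: gradient estimate norm bound before LM} together with the inclusion $\mathcal{A}_t^c \subseteq \{\|\eta_{\hat t}\|>r_0\}$, and reduce to two truncated chi-moments $\E[\|\eta_{\hat t}\|^p\,1_{\{\|\eta_{\hat t}\|>r_0\}}]$ for $p=1,3$, which you control by layer-cake and the Laurent--Massart tail~\eqref{eq: Lauren-Massart Chi concentration} directly. Your approach avoids the $z^{2/3}$-tail inversion and the $\erfc$ calculus, at the price of tracking two Gaussian integrals instead of one; the paper's approach is more ``black-box'' in that it only uses the conclusion of Lemma~\ref{lem: gradient estimate bounds} rather than its internal bound~\eqref{eq: gradient estimate norm bound before LM}. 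Both routes land on the same constant $3$ and both use $\delta \leq e^{-3/2}$ at the final step to close the numerical inequalities.
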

\begin{proof}
    Following Proposition~\ref{prop: gradient estimate expectation}, we have that
    \begin{align*}
        \nabla \C (K_t) =& \E [\widehat{\nabla \C} (K_t)] \\
        =& \E [\widehat{\nabla \C} (K_t) 1_{\mathcal{A}_t}] + \E [\widehat{\nabla \C} (K_t) 1_{\mathcal{A}^c_t}].
    \end{align*}
    Therefore, 
    \begin{align}
        &\| \E [\widehat{\nabla \C} (K_t) 1_{\mathcal{A}_t}] - \nabla \C (K_t) \|_F \cr
        = \ & \| \E [\widehat{\nabla \C} (K_t) 1_{\mathcal{A}^c_t}] \|_F \cr
        \overset{\mathrm{(i)}}{\leq} \ & \E \left[ \| \widehat{\nabla \C} (K_t) 1_{\mathcal{A}^c_t} \|_F \right] \cr
        = \ & \E \left[ \| \widehat{\nabla \C} (K_t) \|_F 1_{\mathcal{A}^c_t} \right] \cr
        \overset{\mathrm{(ii)}}{\leq} \ &\E \left[ \| \widehat{\nabla \C} (K_t) \|_F 1_{\left\{ \| \widehat{\nabla \C} (K_t) \|_F \geq \frac{\xi_3 \left( \log{\frac{1}{\delta}} \right)^{3/2}}{1 - \gamma} \right\}} \right] \cr
        = \ &\PP \left\{ \| \widehat{\nabla \C} (K_t) \|_F \geq \frac{\xi_3 \left( \log{\frac{1}{\delta}} \right)^{3/2}}{1 - \gamma} \right\} \E \left[ \| \widehat{\nabla \C} (K_t) \|_F \bigg| \| \widehat{\nabla \C} (K_t) \|_F \geq \frac{\xi_3 \left( \log{\frac{1}{\delta}} \right)^{3/2}}{1 - \gamma} \right], \label{eq: conditional estimate difference prelim}
    \end{align}
    where (i) follows from Jensen's inequality and (ii) from the fact that
    \[
    \mathcal{A}^c_t = \left\{ \| \widehat{\nabla \C} (K_t) \|_F > \frac{\xi_3}{1 - \gamma} \left( \log{\frac{1}{\delta}} \right)^{3/2} \right\} \subseteq \left\{ \| \widehat{\nabla \C} (K_t) \|_F \geq \frac{\xi_3}{1 - \gamma} \left( \log{\frac{1}{\delta}} \right)^{3/2} \right\}.
    \]
    Moreover, it holds that
    \begin{align}
        &\E \left[ \| \widehat{\nabla \C} (K_t) \|_F \bigg| \| \widehat{\nabla \C} (K_t) \|_F \geq \frac{\xi_3 \left( \log{\frac{1}{\delta}} \right)^{3/2}}{1 - \gamma} \right] \cr
        =& \frac{\xi_3 \left( \log{\frac{1}{\delta}} \right)^{3/2}}{1 - \gamma} + \frac{\bigints_{\frac{\xi_3}{1 - \gamma} \left( \log{\frac{1}{\delta}} \right)^{3/2}}^{\infty} \PP \{ \| \widehat{\nabla \C} (K_t) \|_F \geq z \} \ dz}{\PP \left\{ \| \widehat{\nabla \C} (K_t) \|_F \geq \frac{\xi_3 \left( \log{\frac{1}{\delta}} \right)^{3/2}}{1 - \gamma} \right\}}. \label{eq: estimate conditional second part}
    \end{align}
    Now recall from Lemma~\ref{lem: gradient estimate bounds} that 
    \begin{equation}
        \PP \left\{ \| \widehat{\nabla \C} (K_t) \|_F \geq \frac{\xi_3}{1 - \gamma} \left( \log{\frac{1}{\delta}} \right)^{3/2} \right\} \leq \delta \label{eq: estimate size prob bound alternate}
    \end{equation}
    for arbitrary $\delta$, which implies
    \begin{equation}
        \PP \left\{ \| \widehat{\nabla \C} (K_t) \|_F \geq z \right\} \leq e^{- \left( \frac{z (1 - \gamma)}{\xi_3} \right)^{2/3}}. \label{eq: estimate size prob dist}
    \end{equation}
    Now combining~\eqref{eq: estimate size prob dist}, \eqref{eq: estimate conditional second part}, and \eqref{eq: conditional estimate difference prelim} yields
    \begin{align}
        &\| \E [\widehat{\nabla \C} (K_t) 1_{\mathcal{A}_t}] - \nabla \C (K_t) \|_F \cr
        \leq & \PP \left\{ \| \widehat{\nabla \C} (K_t) \|_F \geq \frac{\xi_3 \left( \log{\frac{1}{\delta}} \right)^{3/2}}{1 - \gamma} \right\} \frac{\xi_3 \left( \log{\frac{1}{\delta}} \right)^{3/2}}{1 - \gamma} + \bigintss_{\frac{\xi_3}{1 - \gamma} \left( \log{\frac{1}{\delta}} \right)^{3/2}}^{\infty} e^{- \left( \frac{z (1 - \gamma)}{\xi_3} \right)^{2/3}} dz \cr
        \overset{\mathrm{(i)}}{\leq}& \frac{\xi_3}{1 - \gamma} \delta \left( \log{\frac{1}{\delta}} \right)^{3/2} + \frac{\xi_3}{1 - \gamma} \bigintss_{\left( \log{\frac{1}{\delta}} \right)^{3/2}}^{\infty} e^{- u^{2/3}} du \cr
        =& \frac{\xi_3}{1 - \gamma} \delta \left( \log{\frac{1}{\delta}} \right)^{3/2} + \frac{\xi_3}{1 - \gamma} \left( \frac{3}{2} \delta \left( \log{\frac{1}{\delta}} \right)^{1/2} + \frac{3}{4} \sqrt{\pi} \erfc \left( \sqrt{\log{\frac{1}{\delta}}} \right) \right) \cr
        \overset{\mathrm{(ii)}}{\leq}& \frac{\xi_3}{1 - \gamma} \left( \delta \left( \log{\frac{1}{\delta}} \right)^{3/2} + \frac{3}{2} \delta \left( \log{\frac{1}{\delta}} \right)^{1/2} + \frac{3}{4} \sqrt{\pi} \delta \right) \cr
        \overset{\mathrm{(iii)}}{\leq}& \frac{3 \xi_3}{1 - \gamma} \delta \left( \log{\frac{1}{\delta}} \right)^{3/2},
    \end{align}
    where (i) follows from~\eqref{eq: estimate size prob bound alternate} along with a change of variables $u = \left( \frac{1 - \gamma}{\xi_3} \right) z$ in the integral, (ii) from the fact that $\erfc \left( \sqrt{\log{\frac{1}{\delta}}} \right) \leq \delta$, and (iii) from $\delta \leq e^{- 3/2}$. This concludes the proof.
\end{proof}
Before introducing the next lemma, let us denote the optimality gap of iterate $t$ of the algorithm by 
\begin{equation}\label{eq:Delta}
    \Delta_t := \C (K_t) - \C (K^*).
\end{equation}
Moreover, let $\F_t$ denote the $\sigma$-algebra containing the randomness up to iteration $t$ of Algorithm~\ref{alg: LQR_policy_gradient} (including $K_t$ but not $\widehat{\nabla \C} (K_t)$). We then define
\begin{equation}
    \tau_1 := \min \left\{ t \ |  \ \Delta_t > 10 \C(K_0) \right\}, \label{eq: tau_1 definition}
\end{equation}
which is a stopping time with respect to $\F_t$. 
\begin{lemma}
    \label{lem: policy gradient recursive}
    Suppose $\delta \in (0, e^{-3/2}]$, $\gamma$ is as suggested in Lemma~\ref{lem: gamma condition}, and the update rule follows
    \begin{equation}
        K_{t+1} = K_{t} - \alpha_t \widehat{\nabla \C} (K_t)
    \end{equation} 
    with a step-size $\alpha_t$ such that for all $t \in \{0,1,2,\dotsc\}$,
    \[
    \alpha_t \leq \frac{\omega_{\text{lqr}}}{\frac{\xi_3}{1 - \gamma} \left( \log{\frac{1}{\delta}} \right)^{3/2}}.
    \]
    Then for any $t \in \{0,1,2,\dotsc\}$, we have
    \begin{equation}
        \E[\Delta_{t+1} 1_{\mathcal{A}_t}| \F_t] 1_{\tau_1 > t} \leq \left( \left(1 - \mu_{\text{lqr}} \alpha_t \right) \Delta_t + \frac{3 \xi_3 \widetilde{c_{K_1}}}{1 - \gamma} \delta \left( \log{\frac{1}{\delta}} \right)^{3/2} \alpha_t + \frac{\phi_{\text{lqr}} \alpha_t^2}{2} \frac{\xi_4}{(1 - \gamma)^2} \right) 1_{\tau_1 > t}, \label{eq: policy gradient recursive}
    \end{equation}
    where $\Delta_t$ and $\mathcal{A}_t$ are defined in \eqref{eq:Delta} and \eqref{eq: A_t definition} respectively.
\end{lemma}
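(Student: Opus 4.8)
The plan is to combine the local descent inequality for $\C$ with the bias and second–moment control of the estimator, while carefully managing the indicators $1_{\mathcal{A}_t}$ and $1_{\tau_1>t}$. I work throughout on the event $\{\tau_1>t\}$, on which $\Delta_t\le 10\,\C(K_0)$ and hence $K_t\in\G^{\text{lqr}}$; since $\{\tau_1>t\}\in\F_t$, this indicator may be pulled outside the conditional expectation at the very end, and off this event both sides of \eqref{eq: policy gradient recursive} are zero, so there is nothing to prove there.

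First I would check that, on $\mathcal{A}_t$, the update remains inside the smoothness neighborhood of $K_t$. By the definition of $\mathcal{A}_t$ in \eqref{eq: A_t definition} and the assumed step-size bound, $\|K_{t+1}-K_t\|_F=\alpha_t\|\widehat{\nabla\C}(K_t)\|_F\le \alpha_t\frac{\xi_3}{1-\gamma}(\log\tfrac1\delta)^{3/2}\le\omega_{\text{lqr}}$. Since every point of the segment $[K_t,K_{t+1}]$ then lies within $\omega_{\text{lqr}}$ of $K_t\in\G^{\text{lqr}}$, the $\G^{\text{lqr}}$-uniform Lipschitz-gradient property (Lemma~\ref{lem:lipschitz_gradient_lqr}, with constant $\phi_{\text{lqr}}$) gives the standard descent estimate $\C(K_{t+1})\le \C(K_t)-\alpha_t\langle\nabla\C(K_t),\widehat{\nabla\C}(K_t)\rangle+\frac{\phi_{\text{lqr}}\alpha_t^2}{2}\|\widehat{\nabla\C}(K_t)\|_F^2$. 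Subtracting $\C(K^*)$ and multiplying by $1_{\mathcal{A}_t}$ yields a pointwise bound on $\Delta_{t+1}1_{\mathcal{A}_t}$; crucially $1_{\mathcal{A}_t}$ must not be discarded before this step, since the descent estimate is only valid when the step is small.

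Next I would take $\E[\,\cdot\,\mid\F_t]$. Because the per-iteration sampling $(x_0,\hat t,\eta_{\hat t})$ is independent of $\F_t$ and $K_t$ is $\F_t$-measurable, the conditional expectation reduces to the fixed-policy expectations at $K_t$, so Proposition~\ref{prop: gradient estimate expectation}, Lemma~\ref{lem: gradient estimate conditioned}, and the second-moment bound \eqref{eq: gradient_estimate_variance_bound} of Lemma~\ref{lem: gradient estimate bounds} all apply with $K_t\in\G^{\text{lqr}}$ plugged in. Then: (i) $\Delta_t\,\PP(\mathcal{A}_t\mid\F_t)\le\Delta_t$ since $\Delta_t\ge0$; (ii) writing $\E[\widehat{\nabla\C}(K_t)1_{\mathcal{A}_t}\mid\F_t]=\nabla\C(K_t)+e_t$ with $\|e_t\|_F\le\frac{3\xi_3}{1-\gamma}\delta(\log\tfrac1\delta)^{3/2}$ by Lemma~\ref{lem: gradient estimate conditioned}, the cross term splits as $-\alpha_t\|\nabla\C(K_t)\|_F^2-\alpha_t\langle\nabla\C(K_t),e_t\rangle$, where the first piece is $\le-\mu_{\text{lqr}}\alpha_t\Delta_t$ by the PL inequality (Lemma~\ref{lem: PL_lqr}) and the second is bounded by Cauchy–Schwarz together with $\|\nabla\C(K_t)\|_F\le\widetilde{c_{K_1}}$ (Remark~\ref{rem: regularity parameters}); (iii) $\E[\|\widehat{\nabla\C}(K_t)\|_F^2 1_{\mathcal{A}_t}\mid\F_t]\le\E[\|\widehat{\nabla\C}(K_t)\|_F^2\mid\F_t]\le\frac{\xi_4}{(1-\gamma)^2}$. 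Combining (i)–(iii) and using $\Delta_t\PP(\mathcal{A}_t\mid\F_t)-\mu_{\text{lqr}}\alpha_t\Delta_t\le(1-\mu_{\text{lqr}}\alpha_t)\Delta_t$ reproduces exactly the right-hand side of \eqref{eq: policy gradient recursive}; finally multiply through by $1_{\tau_1>t}$.

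The main obstacle I anticipate is not any single inequality but the bookkeeping: one must verify that Proposition~\ref{prop: gradient estimate expectation}, Lemma~\ref{lem: gradient estimate conditioned}, and the variance bound—each stated as an unconditional expectation for a \emph{fixed} policy—legitimately become conditional expectations given $\F_t$ with the \emph{random} iterate $K_t$ substituted, which rests on the independence of the iteration-$t$ sampling from $\F_t$; and that the regularity constants $\phi_{\text{lqr}},\omega_{\text{lqr}},\widetilde{c_{K_1}},\mu_{\text{lqr}}$ are only invoked where $\{\tau_1>t\}$ guarantees $K_t\in\G^{\text{lqr}}$, together with the observation above that the descent estimate is valid only on $\mathcal{A}_t$.
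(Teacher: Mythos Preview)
Your proposal is correct and follows essentially the same approach as the paper's own proof: establish $\|K_{t+1}-K_t\|_F\le\omega_{\text{lqr}}$ on $\mathcal{A}_t\cap\{\tau_1>t\}$, apply the $\phi_{\text{lqr}}$-smoothness descent inequality, take $\E[\,\cdot\mid\F_t]$, and then invoke Lemma~\ref{lem: gradient estimate conditioned} for the bias, \eqref{eq: gradient_estimate_variance_bound} for the second moment, the PL inequality, and the bound $\|\nabla\C(K_t)\|_F\le\widetilde{c_{K_1}}$, combined exactly as you describe. The measurability bookkeeping you flag (pulling $1_{\tau_1>t}$ outside, treating the iteration-$t$ randomness as independent of $\F_t$) is precisely what the paper does as well.
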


\begin{proof}
    First, note that by the definition of $\tau_1$ in~\eqref{eq: tau_1 definition}, $\tau_1 > t$ implies $K_t \in \G^{\text{lqr}}$. In addition, since $\alpha_t \leq \frac{\omega_{\text{lqr}}}{\frac{\xi_3}{1 - \gamma} \left( \log{\frac{1}{\delta}} \right)^{3/2}}$, the event $\mathcal{A}_t$ implies that
    \[
    \| K_{t+1} - K_t \|_F = \| \alpha_t \widehat{\nabla \C} (K_t) \|_F \leq \omega_{\text{lqr}}.
    \]
    Thus, by local smoothness of $\C (K_t)$, see Lemma~\ref{lem:lipschitz_gradient_lqr}, it holds that
    \begin{align*}
        (\Delta_{t+1} - \Delta_t) 1_{\tau_1 > t} 1_{\mathcal{A}_t} =& (\C (K_{t+1} - \C (K_t)) 1_{\tau_1 > t} 1_{\mathcal{A}_t} \\
        \leq & \left( \left\langle \nabla \C (K_t),  K_{t+1} - K_t \right\rangle + \frac{\phi_{\text{lqr}}}{2} \lVert K_{t+1} - K_t \rVert_F^2 \right) 1_{\tau_1 > t} 1_{\mathcal{A}_t} \\
        =& \left( - \alpha_t \left\langle \nabla \C (K_t), \widehat{\nabla \C} (K_t) \right\rangle + \frac{\phi_{\text{lqr}} \alpha_t^2}{2} \lVert \widehat{\nabla \C} (K_t) \rVert_F^2 \right) 1_{\tau_1 > t} 1_{\mathcal{A}_t},
    \end{align*}
    which after taking an expectation conditioned on $\F_t$ gives
    \begin{align*}
        &\E [\Delta_{t+1} 1_{\tau_1 > t} 1_{\mathcal{A}_t} | \F_t] - \E [\Delta_{t} 1_{\tau_1 > t} 1_{\mathcal{A}_t} | \F_t] \\
        \leq & - \alpha_t \left\langle \nabla \C (K_t), \E[\widehat{\nabla \C} (K_t) 1_{\tau_1 > t} 1_{\mathcal{A}_t} | \F_t] \right\rangle + \frac{\phi_{\text{lqr}}}{2} \alpha_t^2 \E[\| \widehat{\nabla \C} (K_t) \|_F^2 1_{\tau_1 > t} 1_{\mathcal{A}_t} | \F_t].
    \end{align*}
    Since $\Delta_t$ and $1_{\tau_1 > t}$ are determined by $\F_t$, 
    \begin{align*}
        &\E [\Delta_{t+1} 1_{\mathcal{A}_t} | \F_t] 1_{\tau_1 > t} \\
        \leq & \left( \Delta_t \E[1_{\mathcal{A}_t} | \F_t] - \alpha_t \left\langle \nabla \C (K_t), \E[\widehat{\nabla \C} (K_t) 1_{\mathcal{A}_t} | \F_t] \right\rangle + \frac{\phi_{\text{lqr}}}{2} \alpha_t^2 \E[\| \widehat{\nabla \C} (K_t) \|_F^2 1_{\mathcal{A}_t} | \F_t] \right) 1_{\tau_1 > t} \\
        \overset{\mathrm{(i)}}{\leq}& \left( \Delta_t - \alpha_t \left\langle \nabla \C (K_t), \E[\widehat{\nabla \C} (K_t) 1_{\mathcal{A}_t} | \F_t] \right\rangle + \frac{\phi_{\text{lqr}}}{2} \alpha_t^2 \E[\| \widehat{\nabla \C} (K_t) \|_F^2 | \F_t] \right) 1_{\tau_1 > t} \\
        =& \Delta_t 1_{\tau_1 > t} - \alpha_t \left\langle \nabla \C (K_t), \nabla \C (K_t) + \E[\widehat{\nabla \C} (K_t) 1_{\mathcal{A}_t} | \F_t] - \nabla \C (K_t) \right\rangle 1_{\tau_1 > t} \\
        &+ \frac{\phi_{\text{lqr}}}{2} \alpha_t^2 \E[\| \widehat{\nabla \C} (K_t) \|_F^2 | \F_t] 1_{\tau_1 > t} \\
        =& \Delta_t 1_{\tau_1 > t} - \alpha_t \left\langle \nabla \C (K_t), \nabla \C (K_t) \right\rangle 1_{\tau_1 > t} \\
        &- \alpha_t \left\langle \nabla \C (K_t), \E[\widehat{\nabla \C} (K_t) 1_{\mathcal{A}_t} | \F_t] - \nabla \C (K_t) \right\rangle 1_{\tau_1 > t} + \frac{\phi_{\text{lqr}}}{2} \alpha_t^2 \E[\| \widehat{\nabla \C} (K_t) \|_F^2 | \F_t] 1_{\tau_1 > t} \\
        \overset{\mathrm{(ii)}}{\leq}& \Delta_t 1_{\tau_1 > t} - \alpha_t \| \nabla \C (K_t) \|_F^2 1_{\tau_1 > t} \\
        &+ \alpha_t \| \nabla \C (K_t) \|_F \| \E[\widehat{\nabla \C} (K_t) 1_{\mathcal{A}_t} | \F_t] - \nabla \C (K_t) \|_F 1_{\tau_1 > t} + \frac{\phi_{\text{lqr}}}{2} \alpha_t^2 \frac{\xi_4}{(1 - \gamma)^2} 1_{\tau_1 > t} \\
        \overset{\mathrm{(iii)}}{\leq}& \Delta_t 1_{\tau_1 > t} - \alpha_t \mu_{\text{lqr}} \Delta_t 1_{\tau_1 > t} + \frac{3 \xi_3 \widetilde{c_{K_1}}}{1 - \gamma} \delta \left( \log{\frac{1}{\delta}} \right)^{3/2} \alpha_t 1_{\tau_1 > t} + \frac{\phi_{\text{lqr}}}{2} \alpha_t^2 \frac{\xi_4}{(1 - \gamma)^2} 1_{\tau_1 > t} \\
        =& \left( \left(1 - \mu_{\text{lqr}} \alpha_t \right) \Delta_t + \frac{3 \xi_3 \widetilde{c_{K_1}}}{1 - \gamma} \delta \left( \log{\frac{1}{\delta}} \right)^{3/2} \alpha_t + \frac{\phi_{\text{lqr}} \alpha_t^2}{2} \frac{\xi_4}{(1 - \gamma)^2} \right) 1_{\tau_1 > t},
    \end{align*}
    where (i) follows from $1_{\mathcal{A}_t} \leq 1$, (ii) from Lemma~\ref{lem: gradient estimate bounds}, and (iii) from applying the PL inequality~\eqref{eq: PL_LQR}, the fact that $\| \nabla \C (K_t) \|_F \leq \widetilde{c_{K_1}}$ for any $K_t \in \G^{\text{lqr}}$ (see Remark~\ref{rem: regularity parameters}), and Lemma~\ref{lem: gradient estimate conditioned}. This finishes the proof of Lemma~\ref{lem: policy gradient recursive}.
\end{proof}
We are now in a position to state a precise version of our main result. 
\begin{theorem}
    \label{thm: lqr policy gradient}
    Suppose $K_0$ is stable and $\gamma$ is as suggested in Lemma~\ref{lem: gamma condition}. If the step-size $\alpha_t$ is chosen as
    \begin{align}
        \alpha_t = \frac{2}{\mu_{\text{lqr}}} \frac{1}{t+N} \quad &\text{for} \quad N = \max\left\{N_1, \frac{2}{\mu_{\text{lqr}}} \frac{\xi_3 \left(\log{\frac{1}{\delta}}\right)^{3/2}}{(1 - \gamma)\omega_{\text{lqr}}}\right\}, \label{eq: theorem parameter setup N}
    \end{align}
    where
    \begin{align}
        N_1 = \max \left\{ 2,\frac{4 \phi_{\text{lqr}} \xi_4}{\mu_{\text{lqr}}^2 (1 - \gamma)^2} \frac{2}{\C(K_0)} \right\}, \label{eq: theorem parameter setup N_1}
    \end{align}
    then for a given error tolerance $\eps$ such that $\C (K_0) \geq \frac{\eps}{20}$, and $\delta$ chosen arbitrarily to satisfy
    \begin{align}
        \delta \leq \min &\Bigg\{ 2 \times 10^{-5}, \left( \frac{\phi_{\text{lqr}} \xi_4 \omega_{\text{lqr}}}{960 \xi_3^2 \widetilde{c_{K_1}} \C (K_0)} \right)^3 \eps^3, \cr
        &\left( \frac{\phi_{\text{lqr}} \xi_4}{480 (1 - \gamma)\mu_{\text{lqr}} \xi_3 \widetilde{c_{K_1}} N_1 \C (K_0)} \right)^3 \eps^3, \left( \frac{\mu_{\text{lqr}} (1 - \gamma)}{240 \xi_3 \widetilde{c_{K_1}}} \right)^3 \eps^3 \Bigg\}, \label{eq: theorem parameter setup delta}
    \end{align}
    the iterate $K_T$ of Algorithm~\ref{alg: LQR_policy_gradient} after 
    \begin{equation}
        T = \frac{40}{\eps} N \C (K_0) \label{eq: theorem parameter setup T}
    \end{equation}
    steps satisfies
    \begin{equation} 
        \C(K_T) - \C(K^*) \leq \eps
    \end{equation}
    with a probability of at least $4/5 - \delta T$.
\end{theorem}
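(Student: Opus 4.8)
The plan is to iterate the one-step estimate of Lemma~\ref{lem: policy gradient recursive} over all $T$ steps to bound $\E[\Delta_T\,1_{\tau_1>T}\,1_{\mathcal{B}_T}]$, where $\mathcal{B}_T$ is the event that the high-probability gradient-size bound of Lemma~\ref{lem: gradient estimate bounds} held at every iteration, then convert this to a tail bound with Markov's inequality, and finally control the two failure modes: some gradient estimate $\widehat{\nabla\C}(K_t)$ being abnormally large, and the iterate ever leaving $\G^{\text{lqr}}$ (i.e.\ $\tau_1\le T$). Before any of this, I would dispose of the trivial case: if $\C(K_0)<\eps/20$ then $\Delta_0=\C(K_0)-\C(K^*)<\eps$ already, so assume $\C(K_0)\ge\eps/20$; and I would check from~\eqref{eq: theorem parameter setup N} and~\eqref{eq: theorem parameter setup N_1} that $\alpha_t=\tfrac{2}{\mu_{\text{lqr}}(t+N)}$ satisfies $\mu_{\text{lqr}}\alpha_t\le1$ (as $N\ge N_1\ge2$) and $\alpha_t\le\omega_{\text{lqr}}/\big(\tfrac{\xi_3}{1-\gamma}(\log\tfrac1\delta)^{3/2}\big)$, so that Lemma~\ref{lem: policy gradient recursive} is applicable at every $t$.

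\textbf{Step 1: the averaged recursion.} Set $\mathcal{B}_t:=\bigcap_{s=0}^{t-1}\mathcal{A}_s$ and $W_t:=\E[\Delta_t\,1_{\tau_1>t}\,1_{\mathcal{B}_t}]$. Multiplying~\eqref{eq: policy gradient recursive} by the $\F_t$-measurable indicator $1_{\mathcal{B}_t}$, taking total expectations, and using $1_{\mathcal{A}_t}1_{\mathcal{B}_t}=1_{\mathcal{B}_{t+1}}$, $1_{\tau_1>t}\ge1_{\tau_1>t+1}$, $\Delta_{t+1}\ge0$, $1-\mu_{\text{lqr}}\alpha_t\ge0$ and $1_{(\cdot)}\le1$ should give
\begin{equation*}
  W_{t+1}\;\le\;(1-\mu_{\text{lqr}}\alpha_t)\,W_t+b\,\alpha_t+v\,\alpha_t^2,\qquad b:=\tfrac{3\xi_3\widetilde{c_{K_1}}}{1-\gamma}\,\delta\big(\log\tfrac1\delta\big)^{3/2},\quad v:=\tfrac{\phi_{\text{lqr}}\xi_4}{2(1-\gamma)^2}.
\end{equation*}
The shifted sequence $\widetilde W_t:=W_t-b/\mu_{\text{lqr}}$ then obeys $\widetilde W_{t+1}\le(1-\mu_{\text{lqr}}\alpha_t)\widetilde W_t+v\alpha_t^2$, and a one-line induction with the harmonic step-size yields $\widetilde W_t\le D/(t+N)$ for $D:=\max\{N\Delta_0,\,4v/\mu_{\text{lqr}}^2\}$; the $\phi_{\text{lqr}}\xi_4$-dependent branch of~\eqref{eq: theorem parameter setup N_1} forces $4v/\mu_{\text{lqr}}^2\le N_1\C(K_0)/4\le N\C(K_0)$ and $\Delta_0\le\C(K_0)$, so $D\le N\C(K_0)$ and $W_T\le N\C(K_0)/T+b/\mu_{\text{lqr}}$. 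Plugging in $T=\tfrac{40}{\eps}N\C(K_0)$, and using the elementary bound $\delta(\log\tfrac1\delta)^{3/2}\le\delta^{1/3}$ (which holds because $\delta^{2/3}(\log\tfrac1\delta)^{3/2}\le1$ for $\delta\le2\times10^{-5}$) together with the last constraint on $\delta$ in~\eqref{eq: theorem parameter setup delta}, I would get $b/\mu_{\text{lqr}}\le\eps/80$ and hence $W_T\le\eps/40+\eps/80=3\eps/80$. Markov's inequality then gives $\PP\{\Delta_T\,1_{\tau_1>T}\,1_{\mathcal{B}_T}\ge\eps\}\le W_T/\eps\le3/80$.

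\textbf{Step 2: the failure modes.} Because $\tau_1>t$ forces $K_t\in\G^{\text{lqr}}$, Lemma~\ref{lem: gradient estimate bounds} gives $\PP\{\mathcal{A}_t^c\mid\F_t\}\,1_{\tau_1>t}\le\delta\,1_{\tau_1>t}$; summing over $s=0,\dots,T-1$, the event $\bigcup_{s}(\mathcal{A}_s^c\cap\{s<\tau_1\})$ has probability at most $\delta T$, and it contains both $\mathcal{B}_T^c\cap\{\tau_1>T\}$ and $\{\tau_1\le T\}\cap\mathcal{B}_{\tau_1}^c$, with $\mathcal{B}_{\tau_1}:=\bigcap_{s<\tau_1}\mathcal{A}_s$. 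Since $\Delta_T=\Delta_T\,1_{\tau_1>T}\,1_{\mathcal{B}_T}$ on $\{\tau_1>T\}\cap\mathcal{B}_T$, a case split on $\{\tau_1>T\}$ versus $\{\tau_1\le T\}$ (and on the gradient-size failures) gives
\begin{equation*}
  \PP\{\Delta_T\ge\eps\}\;\le\;\PP\{\Delta_T\,1_{\tau_1>T}\,1_{\mathcal{B}_T}\ge\eps\}+\PP\{\tau_1\le T,\ \mathcal{B}_{\tau_1}\}+\delta T\;\le\;\tfrac{3}{80}+\PP\{\tau_1\le T,\ \mathcal{B}_{\tau_1}\}+\delta T,
\end{equation*}
so the whole theorem comes down to bounding $\PP\{\tau_1\le T,\,\mathcal{B}_{\tau_1}\}$ by a constant strictly below $\tfrac15-\tfrac3{80}$. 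For this I would argue (this is essentially Appendix~\ref{app: prob of failure}) that on $\mathcal{B}_{\tau_1}$ the recursion of Lemma~\ref{lem: policy gradient recursive} holds at every step before $\tau_1$, and its $(1-\mu_{\text{lqr}}\alpha_t)\Delta_t$ term provides a restoring drift towards $0$ that becomes strongly negative near the boundary level $10\C(K_0)$---the bias/variance contributions $b\alpha_t+v\alpha_t^2$ being dominated there by the restoring term $\mu_{\text{lqr}}\alpha_t\cdot10\C(K_0)$, once $N$ is as in~\eqref{eq: theorem parameter setup N}, \eqref{eq: theorem parameter setup N_1} and $\delta$ is as in~\eqref{eq: theorem parameter setup delta}. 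A maximal/optional-stopping inequality for the resulting (near-)supermartingale $\Delta_{t\wedge\tau_1}$, started from $\Delta_0\le\C(K_0)$, should then bound the probability it ever reaches $10\C(K_0)$ by $\tfrac3{20}$. Combining the three contributions yields $\PP\{\Delta_T\ge\eps\}\le\tfrac3{80}+\tfrac3{20}+\delta T=\tfrac{15}{80}+\delta T<\tfrac15+\delta T$, i.e.\ $\C(K_T)-\C(K^*)\le\eps$ with probability at least $4/5-\delta T$.

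\textbf{Expected main obstacle.} The hard part will be the escape bound on $\PP\{\tau_1\le T,\,\mathcal{B}_{\tau_1}\}$: once an iterate leaves $\G^{\text{lqr}}$ none of the regularity inputs---local Lipschitzness of $\C$, the PL inequality, the moment bounds on $\widehat{\nabla\C}$---survives, so the entire quantitative analysis is conditional on never escaping, and the delicate point is coupling the supermartingale/maximal-inequality control of the escape event with the rare but uncontrolled steps on which $\mathcal{A}_t$ fails. Organising everything around the stopping time $\tau_1$ and the product indicators $1_{\mathcal{B}_t}$ is exactly what is needed to keep those two mechanisms compatible, and the precise numerical constants entering $T$, $N_1$, and the bounds on $\delta$ are what finally reconcile the three contributions into the clean probability $4/5-\delta T$.
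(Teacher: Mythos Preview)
Your overall strategy matches the paper's: both reduce the theorem to Proposition~\ref{prop: lqr policy gradient}, whose two parts are (i) an inductive bound on $\E[\Delta_t 1_{\tau>t}]$ coming from iterating Lemma~\ref{lem: policy gradient recursive}, and (ii) a supermartingale/maximal-inequality control of the escape probability. Your $W_t=\E[\Delta_t 1_{\tau_1>t}1_{\mathcal{B}_t}]$ is exactly the paper's $\E[\Delta_t 1_{\tau>t}]$ (since $\mathcal{B}_t\Leftrightarrow\tau_2>t$), and your decomposition in Step~2 coincides with the paper's events $\mathcal{E}_1,\mathcal{E}_2$: your $\{\tau_1\le T,\,\mathcal{B}_{\tau_1}\}$ is precisely $\mathcal{E}_2=\{\tau_1<\tau_2,\,\tau_1\in[T]\}$, and your $\delta T$ term is $\PP(\mathcal{E}_1)$.

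The one place where your sketch is genuinely incomplete is the supermartingale construction. You describe a ``(near-)supermartingale $\Delta_{t\wedge\tau_1}$'' and appeal to the restoring drift near the boundary, but $\Delta_{t\wedge\tau_1}1_{\tau_2>t}$ is \emph{not} a supermartingale because of the additive $b\alpha_t+v\alpha_t^2$ terms; a drift that is negative only near the barrier does not by itself license Ville/Doob. The paper fixes this by adding an explicit decreasing compensator and working with
\[
Y_t=\Delta_{\tau_1\wedge t}\,1_{\tau_2>t}+\frac{4\phi_{\text{lqr}}\xi_4}{(1-\gamma)^2\mu_{\text{lqr}}^2}\,\frac{1}{t+N},
\]
then verifying via a separate calculation (Sublemma~\ref{sublem: prelim to martingale}) that the parameter choices in \eqref{eq: theorem parameter setup N}--\eqref{eq: theorem parameter setup delta} make $b\alpha_t+v\alpha_t^2+\tfrac{4\phi_{\text{lqr}}\xi_4}{(1-\gamma)^2\mu_{\text{lqr}}^2}\tfrac{1}{t+N+1}\le\tfrac{4\phi_{\text{lqr}}\xi_4}{(1-\gamma)^2\mu_{\text{lqr}}^2}\tfrac{1}{t+N}$, so that $Y_t$ is a genuine supermartingale and Ville's inequality gives $\PP(\mathcal{E}_2)\le\E[Y_0]/(10\C(K_0))\le3/20$. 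This is the actual work behind the constants in \eqref{eq: theorem parameter setup delta} that you do not yet account for. Also, your pointer to Appendix~\ref{app: prob of failure} is misplaced: that appendix treats the mini-batched variant with constant step-size and is unrelated to the escape-probability argument here.
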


It is essential to re-emphasize that, as also evident from the statement of Theorem~\ref{thm: lqr policy gradient}, there is no reliance on an assumption that the policy remains stable throughout the algorithm; rather, the result is proven to hold with a certain probability. In particular, the instances of the algorithm that lead to instability at any iteration before $T$ are factored into the failure probability $1/5 + \delta T$. 

In Appendix~\ref{app: prob of failure}, we show that the success probability in Theorem~\ref{thm: lqr policy gradient} can be improved from \( 4/5 - \delta T \) to \( 1 - \delta T \) by averaging a batch of gradient estimates to reduce variance and obtain an estimate that is close to the true gradient with high probability.
Moreover, in Appendix~\ref{app: extension_to_noisy_dynamics}, we show how our gradient estimation method and convergence analysis, developed for the random initialization setting, can be naturally extended to the noisy dynamics setting.

The proof of Theorem~\ref{thm: lqr policy gradient} relies on an intermediate result, namely Proposition~\ref{prop: lqr policy gradient}, which we establish next. Before doing so, we provide some observations regarding the statement of the theorem. First, we have the following remark for $\delta$: 
\begin{remark}[Selection of $\delta$ for the probability of failure]{\em
    The $\delta T$ term in the probability of failure stated in Theorem~\ref{thm: lqr policy gradient} can be adjusted arbitrarily; however, since $T$ depends on $N$ which depends on $\delta$ itself, we add some further discussion here. If we want the $\delta T$ term to be less than some arbitrary small $\delta'$, it needs to hold that
    \[
    \delta T = \delta \frac{40}{\eps} \max \left\{ N_1 \C (K_0), \frac{2 \xi_3 \C (K_0)}{\mu_{\text{lqr}} \omega_{\text{lqr}} (1 - \gamma)} \left( \log{\frac{1}{\delta}} \right)^{3/2} \right\} \leq \delta'.
    \]
    Therefore, $\delta$ first needs to satisfy
    \begin{align}
        \frac{40}{\eps} N_1 \C (K_0) \delta \leq \delta' \Rightarrow \delta \leq \frac{\delta' \eps}{40 N_1 \C (K_0)}, \label{eq: delta-delta' first prelim}
    \end{align}
    and secondly,
    \begin{align}
        \frac{80 \xi_3 \C (K_0)}{\mu_{\text{lqr}} \omega_{\text{lqr}} (1 - \gamma)} \frac{1}{\eps} \delta \left( \log{\frac{1}{\delta}} \right)^{3/2} \leq \delta' \Rightarrow \delta \left( \log{\frac{1}{\delta}} \right)^{3/2} \leq \frac{\mu_{\text{lqr}} \omega_{\text{lqr}} (1 - \gamma)}{80 \xi_3 \C (K_0)} \delta' \eps. \label{eq: delta-delta' second prelim}
    \end{align}
    Now since $a^3 \left( \log{\frac{1}{a^3}} \right)^{3/2} \leq a$ for any $a \in (0,1)$, for \eqref{eq: delta-delta' second prelim} to hold, it would suffice to have
    \begin{equation}
        \delta \leq \left( \frac{\mu_{\text{lqr}} \omega_{\text{lqr}} (1 - \gamma)}{80 \xi_3 \C (K_0)} \right)^3 (\delta' \eps)^3. \label{eq: delta-delta' third prelim}
    \end{equation}
    Note that \eqref{eq: delta-delta' third prelim} is only a loose sufficient bound on $\delta$ that can be improved (for instance, the exponents in~\eqref{eq: delta-delta' third prelim} can be reduced from $3$ to $2$ considering the other requirements on $\delta$ in~\eqref{eq: theorem parameter setup delta}); however, since the dependence of $T$ on $\delta$ is logarithmic, the looser requirement only adds a constant and does not change the order.

    As a result, adding \eqref{eq: delta-delta' first prelim} and \eqref{eq: delta-delta' third prelim} to the existing requirements on $\delta$ in~\eqref{eq: theorem parameter setup delta}, we will have
    \begin{align}
        \delta \leq \min &\Bigg\{ 2 \times 10^{-5}, \left( \frac{\phi_{\text{lqr}} \xi_4 \omega_{\text{lqr}}}{960 \xi_3^2 \widetilde{c_{K_1}} \C (K_0)} \right)^3 \eps^3, \left( \frac{\phi_{\text{lqr}} \xi_4}{480 (1 - \gamma)\mu_{\text{lqr}} \xi_3 \widetilde{c_{K_1}} N_1 \C (K_0)} \right)^3 \eps^3, \cr
        &\left( \frac{\mu_{\text{lqr}} (1 - \gamma)}{240 \xi_3 \widetilde{c_{K_1}}} \right)^3 \eps^3, \frac{\delta' \eps}{40 N_1 \C (K_0)}, \left( \frac{\mu_{\text{lqr}} \omega_{\text{lqr}} (1 - \gamma)}{80 \xi_3 \C (K_0)} \right)^3 (\delta' \eps)^3 \Bigg\}, \label{eq: remark delta setup wrt delta'}
    \end{align}
    which will lead to the result of Theorem~\ref{thm: lqr policy gradient} holding with probability $4/5 - \delta'$ after
    \[
    T \sim \frac{N}{\eps} \sim \mathcal{O} \left( \frac{1}{\eps} \left( \log{\frac{1}{(\delta' \eps)^3}} \right)^{3/2} \right) = \mathcal{O} \left( \frac{1}{\eps} \left(\log{\frac{1}{\delta'} + \log{\frac{1}{\eps}}} \right)^{3/2} \right) = \widetilde{\mathcal{O}} \left( \frac{1}{\eps} \right)
    \]
    iterations of Algorithm~\ref{alg: LQR_policy_gradient}. \oprocend}
\end{remark}
Secondly, we find it worthwile to provide the following observation on the choice of $\sigma$:
\begin{remark}[Selection of $\sigma$ and its impact on $T$]{\em
    Note that the value of $\sigma$ in \eqref{eq: gaussian policy} is at our discretion, so one natural question would be regarding the asymptotic analysis of $\sigma$ and its impact on our rate $T$. Observe that the only effect of $\sigma$ on $T$ is through $\xi_3$ and $\xi_4$ defined in \eqref{eq: xi_3 definition} and \eqref{eq: xi_4 definition} respectively. Taking everything else as constants, following the choice of $T$ and $N$ suggested in Theorem~\ref{thm: lqr policy gradient}, we have that $T \geq \mathcal{O} \left(\max\{\xi_3, \xi_4\}\right)$. Now since both $\xi_3$ and $\xi_4$ will grow unbounded as $\sigma$ approaches either zero or infinity, so does $T$. Therefore, we choose a non-zero value for $\sigma$ instead. An optimal value can be derived, but given that this only affects the constants in the rate, we opt for $\sigma = 1$. \oprocend}
\end{remark}

Thirdly, note that for any $K_t \in \G^{\text{lqr}}$, by our choice of $\alpha_t$ and $N$ in Theorem~\ref{thm: lqr policy gradient}, we have
\begin{align}
    \lVert K_{t+1} - K_{t} \rVert_F =& \lVert \alpha_t \widehat{\nabla \C} (K_t) \rVert_F \cr
    =& \frac{2}{\mu_{\text{lqr}}} \frac{1}{t+N} \lVert \widehat{\nabla \C} (K_t) \rVert_F \cr
    \leq & \frac{2}{\mu_{\text{lqr}}} \frac{1}{N} \lVert \widehat{\nabla \C} (K_t) \rVert_F \cr
    \overset{\mathrm{(i)}}{\leq}& \omega_{\text{lqr}} \frac{\lVert \widehat{\nabla \C} (K_t) \rVert_F}{\frac{\xi_3}{1 - \gamma} \left(\log \frac{1}{\delta}\right)^{3/2}}, \label{eq: E_1 reasoning prelim}
\end{align}
where (i) follows from~\eqref{eq: theorem parameter setup N}. Now applying Lemma~\ref{lem: gradient estimate bounds} on \eqref{eq: E_1 reasoning prelim} yields
\begin{equation}
    \lVert K_{t+1} - K_{t} \rVert_F \leq \omega_{\text{lqr}} = \inf_{K \in \G^{\text{lqr}}} \omega_K \label{eq: E_1 reasoning}
\end{equation}
with probability at least $1 - \delta$, where $\omega_K = \min\{\beta_K,\zeta_K\}$. This implies that the local Lipschitzness and local smoothness properties of the cost hold for the update at iteration $t$ with probability at least $1 - \delta$.

Fourthly, to help unravel the logical reasoning elucidated in the proof, we introduce the following stopping times:
\begin{align}
    \tau_2 &:= \min \left\{ t \geq 1 \ \Big|  \ \lVert \widehat{\nabla \C} (K_{t-1}) \rVert_F > \frac{\xi_3}{1 - \gamma} \left(\log \frac{1}{\delta}\right)^{3/2} \right\} \cr
    \tau &:= \min \{ \tau_1, \tau_2 \}, \label{eq:stopping-times}
\end{align}
with $\tau_1$ previously defined in~\eqref{eq: tau_1 definition}. Essentially, one can  observe that as long as $t < \tau_1$ and $t+1 < \tau_2$, it holds that $K_t \in \G^{\text{lqr}}$ and $\lVert K_{t+1} - K_{t} \rVert_F \leq \omega_{\text{lqr}}$, implying that local Lipschitzness and local smoothness properties of the cost hold until that iteration. By the definition of $\tau$ in~\eqref{eq:stopping-times}, we have that
\begin{equation}
    1_{\tau > t} = 1_{\tau_1 > t} 1_{\tau_2 > t}. \label{eq: tau indicator wrt tau_1-tau_2 indicators}
\end{equation}
Moreover, following the definition of $\mathcal{A}_t$ in~\eqref{eq: A_t definition}, it also holds that
\begin{equation}
    1_{\tau_2 > t+1} = 1_{\tau_2 > t} 1_{\mathcal{A}_t}. \label{eq: tau_2 - A_t}
\end{equation}

Finally, we note that the idea of introducing a stopping time~\eqref{eq:stopping-times}, which helps identify the failure of the algorithm and is also used to define a stopped process later on, is inspired by \cite{DM-AP-KB-KK-PLB-MJW:18}. However, despite the similarity of our forthcoming statements to those in the proof of \cite[Theorem 8]{DM-AP-KB-KK-PLB-MJW:18}, the paths we take to prove said statements are considerably different due to the differences in how we defined our stopping time (and subsequently the stopped process to be defined later on), our gradient estimation method, the time-varying learning rate, etc.

Having covered all of the above, we are now ready to present the following proposition:
\begin{proposition}
    \label{prop: lqr policy gradient}
    Under the parameter settings of Theorem~\ref{thm: lqr policy gradient}, we have
    \begin{equation*}
        \E[\Delta_T 1_{\tau > T}] \leq \frac{\eps}{20}.
    \end{equation*}
    Furthermore, the event $\left\{ \tau > T \right\}$ happens with a probability of at least $\frac{17}{20} - \delta T$.
\end{proposition}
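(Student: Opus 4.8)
The plan is to exploit the one-step recursion of Lemma~\ref{lem: policy gradient recursive} in two different modes: in expectation to control $\E[\Delta_T 1_{\tau>T}]$, and pathwise, as a supermartingale stopped at $\tau_1$, to control $\PP\{\tau\le T\}$. For the first claim, put $d_t:=\E[\Delta_t 1_{\tau>t}]$, so $d_0=\Delta_0\le\C(K_0)$ since $\Delta_0=\C(K_0)-\C(K^*)\le\C(K_0)$. The identities $1_{\tau>t}=1_{\tau_1>t}1_{\tau_2>t}$ and $1_{\tau_2>t+1}=1_{\tau_2>t}1_{\mathcal A_t}$ from~\eqref{eq: tau indicator wrt tau_1-tau_2 indicators}--\eqref{eq: tau_2 - A_t} give $1_{\tau>t+1}\le1_{\tau_1>t}1_{\tau_2>t+1}=1_{\tau>t}1_{\mathcal A_t}$; hence, using $\Delta_{t+1}\ge0$, conditioning on $\F_t$, and applying Lemma~\ref{lem: policy gradient recursive} (whose step-size hypothesis holds because $\alpha_t\le\tfrac{2}{\mu_{\text{lqr}}N}\le\tfrac{(1-\gamma)\omega_{\text{lqr}}}{\xi_3(\log\frac1\delta)^{3/2}}$ by the choice of $N$),
\[
d_{t+1}\le\Big(1-\tfrac{2}{t+N}\Big)d_t+\frac{3\xi_3\widetilde{c_{K_1}}}{1-\gamma}\,\delta\Big(\log\tfrac1\delta\Big)^{3/2}\alpha_t+\frac{\phi_{\text{lqr}}\xi_4}{2(1-\gamma)^2}\,\alpha_t^2 .
\]
Since $\prod_{s=0}^{t-1}\frac{s+N-2}{s+N}=\frac{(N-1)(N-2)}{(t+N-1)(t+N-2)}$, unrolling this linear recursion to $t=T$ yields three terms: the homogeneous term is $\le\frac{N^2\C(K_0)}{T^2}=\frac{\eps^2}{1600\,\C(K_0)}\le\frac{\eps}{80}$ by $T=\frac{40}{\eps}N\C(K_0)$ and $\C(K_0)\ge\eps/20$; the $\delta$-term telescopes to $\frac{3\xi_3\widetilde{c_{K_1}}}{\mu_{\text{lqr}}(1-\gamma)}\delta(\log\frac1\delta)^{3/2}\le\frac{\eps}{80}$ via the matching clause of~\eqref{eq: theorem parameter setup delta} and the elementary inequality $a^3(\log\frac1{a^3})^{3/2}\le a$; and the variance term is $\le\frac{2\phi_{\text{lqr}}\xi_4}{\mu_{\text{lqr}}^2(1-\gamma)^2\,T}\le\frac{\eps}{80}$ because $N\ge N_1$ forces $T$ large enough. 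Summing, $d_T\le\frac{3\eps}{80}\le\frac{\eps}{20}$.

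For the second claim I would split $\{\tau\le T\}\subseteq\{\tau_1<\tau_2,\ \tau_1\le T\}\cup\{\tau_2\le\tau_1,\ \tau_2\le T\}$. On the second event, with $s:=\tau_2-1\in\{0,\dots,T-1\}$ one has $\tau_1\ge\tau_2>s$ and, by definition of $\tau_2$, the event $\mathcal A_s^c$ occurs; thus this event is contained in $\bigcup_{s=0}^{T-1}\big(\mathcal A_s^c\cap\{\tau_1>s\}\big)$. On $\{\tau_1>s\}$ the iterate $K_s$ lies in $\G^{\text{lqr}}$ (by the definition of $\tau_1$ and~\eqref{eq: G_LQR}), and $\widehat{\nabla\C}(K_s)$ depends only on fresh randomness independent of $\F_s$, so Lemma~\ref{lem: gradient estimate bounds} gives $\E[1_{\mathcal A_s^c}\mid\F_s]\,1_{\tau_1>s}\le\delta$, and a union bound bounds this piece by $T\delta$.

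For the first event I would set $V_t:=\Delta_t\,1_{\tau_1\ge t}\,1_{\tau_2>t}$, which equals $\Delta_t$ for as long as the iterate has not left $\G^{\text{lqr}}$ and no gradient estimate has crossed the threshold, but --- crucially --- keeps the value $\Delta_t$ un-truncated, so $V_{\tau_1}=\Delta_{\tau_1}>10\C(K_0)$ whenever $\tau_2>\tau_1$. Writing $V_{t+1}=\Delta_{t+1}1_{\tau_1>t}1_{\tau_2>t}1_{\mathcal A_t}$, conditioning on $\F_t$, and invoking Lemma~\ref{lem: policy gradient recursive} (together with $1_{\tau>t}\Delta_t\le V_t$, $\tfrac2{t+N}\le1$, and $V_t\ge0$) gives $\E[V_{t+1}\mid\F_t]\le V_t+\frac{3\xi_3\widetilde{c_{K_1}}}{1-\gamma}\delta(\log\frac1\delta)^{3/2}\alpha_t+\frac{\phi_{\text{lqr}}\xi_4}{2(1-\gamma)^2}\alpha_t^2$, so $\tilde M_t:=V_t-\sum_{s<t}\big(\tfrac{3\xi_3\widetilde{c_{K_1}}}{1-\gamma}\delta(\log\frac1\delta)^{3/2}\alpha_s+\tfrac{\phi_{\text{lqr}}\xi_4}{2(1-\gamma)^2}\alpha_s^2\big)$ is a supermartingale with $\tilde M_0=\Delta_0\le\C(K_0)$. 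Writing $\Sigma$ for the total correction over $s<T$, the choice $N\ge N_1$ (controlling the $\alpha_s^2$ sum through its $\sum(s+N)^{-2}$ tail) and the polynomially small $\delta$ (against a merely logarithmic $\sum\alpha_s$) give $\Sigma\le\tfrac12\C(K_0)$. Optional stopping of $\tilde M_t$ at the bounded stopping time $\rho:=\tau_1\wedge T$ yields $\E[V_\rho]\le\E[\tilde M_\rho]+\Sigma\le\Delta_0+\Sigma\le\tfrac32\C(K_0)$; since $\{\tau_1<\tau_2,\ \tau_1\le T\}\subseteq\{V_\rho>10\C(K_0)\}$, Markov's inequality gives $\PP\{\tau_1<\tau_2,\ \tau_1\le T\}\le\tfrac{3}{20}$. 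Adding the two pieces, $\PP\{\tau\le T\}\le\tfrac{3}{20}+T\delta$, i.e.\ $\PP\{\tau>T\}\ge\tfrac{17}{20}-\delta T$.

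The delicate part I anticipate is the construction of $V_t$ and the verification that $\tilde M_t$ is a supermartingale: one must juggle the three interacting indicators $1_{\tau_1\ge t}$, $1_{\tau_2>t}$, $1_{\mathcal A_t}$ --- in particular the deliberate asymmetry $\tau_1\ge t$ versus $\tau_2>t$, which is exactly what lets $V_t$ register the overshoot at the instant $\tau_1$ while still collapsing, after one step, to the recursion of Lemma~\ref{lem: policy gradient recursive} --- and then confirm that the accumulated additive error $\Sigma$ stays below $\tfrac12\C(K_0)$, which is what the interlocking constraints on $N$ (through $N_1$) and on $\delta$ in~\eqref{eq: theorem parameter setup delta} are calibrated to guarantee.
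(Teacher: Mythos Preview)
Your proposal is correct and follows essentially the same strategy as the paper. For the first claim, the paper packages the same recursion as an explicit induction on the ansatz $\E[\Delta_t 1_{\tau>t}]\le\tfrac{\eps}{40}+\tfrac{N\C(K_0)}{t+N}$ (Sublemma~\ref{sublem: lqr_pg_inductive}), whereas you unroll and bound three telescoping sums; these are equivalent. For the second claim, both arguments split $\{\tau\le T\}$ into the $\tau_2$-first and $\tau_1$-first events, bound the former by $\delta T$ via a union bound over Lemma~\ref{lem: gradient estimate bounds}, and control the latter with a supermartingale. The paper's supermartingale is $Y_t=\Delta_{\tau_1\wedge t}\,1_{\tau_2>t}+\tfrac{4\phi_{\text{lqr}}\xi_4}{(1-\gamma)^2\mu_{\text{lqr}}^2(t+N)}$, where the per-step additive error is absorbed into the decrease of the $1/(t+N)$ correction (this is exactly Sublemma~\ref{sublem: prelim to martingale}), and then Doob/Ville's maximal inequality is applied. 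Your $\tilde M_t=V_t-\Sigma_t$ with optional stopping at $\tau_1\wedge T$ followed by Markov is an equivalent repackaging of the same near-supermartingale idea. One remark: your justification that $\Sigma\le\tfrac12\C(K_0)$ is a bit loose as stated; the clean way to see it is to sum the per-step inequality of Sublemma~\ref{sublem: prelim to martingale} (which the $\delta$-clauses in~\eqref{eq: theorem parameter setup delta} are precisely calibrated to guarantee), yielding the telescoping bound $\Sigma\le\tfrac{4\phi_{\text{lqr}}\xi_4}{(1-\gamma)^2\mu_{\text{lqr}}^2 N}\le\tfrac{\C(K_0)}{2}$ via $N\ge N_1$.
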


\begin{proof}
The following provides us with a stepping stone for proving the first claim: 
\begin{sublemma}
    \label{sublem: lqr_pg_inductive}
    Under the parameter settings of Theorem~\ref{thm: lqr policy gradient}, we have that
    \begin{equation}
        \E[\Delta_t 1_{\tau > t}] \leq \frac{\eps}{40} + \frac{N \C (K_0)}{t+N},
    \end{equation}
    for all $t \in [T]$.
\end{sublemma}
\emph{Proof of Sublemma~\ref{sublem: lqr_pg_inductive}.} We prove this result by induction on $t$ as follows:

\textbf{Base case ($t = 0$):} 
\begin{align*}
    \Delta_0 1_{\tau > 0} \leq \Delta_0 \leq \C (K_0) = \frac{N \C (K_0)}{0 + N} \leq \frac{\eps}{40} + \frac{N \C (K_0)}{0+N},
\end{align*}
which after taking expectation proves the claim for $t = 0$.

\textbf{Inductive step:} Let $k \in [T-1]$ be fixed and assume that
\begin{equation}
    \E[\Delta_k 1_{\tau > k}] \leq \frac{\eps}{40} + \frac{N \C (K_0)}{k+N} \label{eq: sublem_pg_hyp}
\end{equation}
holds (the inductive hypothesis). Observe that
\begin{align}
    \E[\Delta_{k+1} 1_{\tau > k+1}] &\overset{\mathrm{(i)}}{=} \E [\Delta_{k+1} 1_{\tau_1 > k+1} 1_{\tau_2 > k+1}] \cr
    &\overset{\mathrm{(ii)}}{\leq} \E [\Delta_{k+1} 1_{\tau_1 > k} 1_{\tau_2 > k} 1_{\mathcal{A}_k}] \cr
    &= \E [\E[\Delta_{k+1} 1_{\tau_1 > k} 1_{\tau_2 > k} 1_{\mathcal{A}_k} | \F_k]] \cr
    &\overset{\mathrm{(iii)}}{=} \E [\E[\Delta_{k+1} 1_{\mathcal{A}_k} | \F_k] 1_{\tau_1 > k} 1_{\tau_2 > k}], \label{eq: sublem_pg_conditional_expectation_bound}
\end{align}
where (i) follows from~\eqref{eq: tau indicator wrt tau_1-tau_2 indicators}, (ii) from equation~\eqref{eq: tau_2 - A_t} along with the fact that $1_{\tau_1 > k+1} \leq 1_{\tau_1 > k}$, and (iii) is due to $1_{\tau_2 > k}$ and $1_{\tau_1 > k}$ being determined by $\F_k$. By Lemma~\ref{lem: policy gradient recursive}, we have that
\begin{align}
    &\left( \E[\Delta_{k+1} 1_{\mathcal{A}_k} | \F_k] 1_{\tau_1 > k} \right) 1_{\tau_2 > k} \cr
    \leq & \left( \left( \left(1 - \mu_{\text{lqr}} \alpha_k \right) \Delta_k + \frac{3 \xi_3 \widetilde{c_{K_1}}}{1 - \gamma} \delta \left( \log{\frac{1}{\delta}} \right)^{3/2} \alpha_k + \frac{\phi_{\text{lqr}} \alpha_k^2}{2} \frac{\xi_4}{(1 - \gamma)^2} \right) 1_{\tau_1 > k} \right) 1_{\tau_2 > k} \cr
    \overset{\mathrm{(i)}}{=}& \left( \left(1 - \frac{2}{k+N} \right) \Delta_k + \frac{6 \xi_3 \widetilde{c_{K_1}} \delta \left( \log{\frac{1}{\delta}} \right)^{3/2}}{\mu_{\text{lqr}} (1 - \gamma)} \frac{1}{k + N} + \frac{2 \phi_{\text{lqr}} \xi_4}{(1 - \gamma)^2 \mu^2_{\text{lqr}}} \frac{1}{(k+N)^2} \right) 1_{\tau > k}, \label{eq: recursive prelim 1}
\end{align}
where (i) follows from\eqref{eq: tau indicator wrt tau_1-tau_2 indicators} along with replacing $\alpha_k$ with its value in~\eqref{eq: theorem parameter setup N}. Now due to the choice of $\delta$ in~\eqref{eq: theorem parameter setup delta}, we have that
\[
\delta \leq \left( \frac{\mu_{\text{lqr}} (1 - \gamma)}{240 \xi_3 \widetilde{c_{K_1}}} \right)^3 \eps^3,
\]
which after noting that $a^3 \left( \log{\frac{1}{a^3}} \right)^{3/2} \leq a$ for any $a \in (0,1)$ implies 
\begin{equation}
\delta \left( \log{\frac{1}{\delta}} \right)^{3/2} \leq \frac{\mu_{\text{lqr}} (1 - \gamma)}{240 \xi_3 \widetilde{c_{K_1}}} \eps \Rightarrow \frac{6 \xi_3 \widetilde{c_{K_1}} \delta \left( \log{\frac{1}{\delta}} \right)^{3/2}}{\mu_{\text{lqr}} (1 - \gamma)} \leq \frac{\eps}{40}. \label{eq: recursive prelim 2}
\end{equation}
Applying~\eqref{eq: recursive prelim 2} on \eqref{eq: recursive prelim 1} yields
\begin{align*}
    &\E[\Delta_{k+1} 1_{\mathcal{A}_k} | \F_k] 1_{\tau_1 > k} 1_{\tau_2 > k} \\
    \leq & \left( \left(1 - \frac{2}{k+N} \right) \Delta_k + \frac{\eps}{40} \frac{1}{k + N} + \frac{2 \phi_{\text{lqr}} \xi_4}{(1 - \gamma)^2 \mu^2_{\text{lqr}}} \frac{1}{(k+N)^2} \right) 1_{\tau > k} \\
    \leq & \left(1 - \frac{2}{k+N} \right) \Delta_k 1_{\tau > k} + \frac{\eps}{40} \frac{1}{k + N} + \frac{2 \phi_{\text{lqr}} \xi_4}{(1 - \gamma)^2 \mu^2_{\text{lqr}}} \frac{1}{(k+N)^2},
\end{align*}
which after taking expectation results in
\begin{align}
    &\E [\E[\Delta_{k+1} 1_{\mathcal{A}_k} | \F_k] 1_{\tau_1 > k} 1_{\tau_2 > k}] \cr
    \leq & \left(1 - \frac{2}{k+N} \right) \E[\Delta_k 1_{\tau > k}] + \frac{\eps}{40} \frac{1}{k + N} + \frac{2 \phi_{\text{lqr}} \xi_4}{(1 - \gamma)^2 \mu^2_{\text{lqr}}} \frac{1}{(k+N)^2}. \label{eq: recursive prelim 3}
\end{align}
Combining the hypothesis (inequality~\eqref{eq: sublem_pg_hyp}) and inequality~\eqref{eq: sublem_pg_conditional_expectation_bound} with~\eqref{eq: recursive prelim 3}, we obtain
\begin{align*}
    &\E[\Delta_{k+1} 1_{\tau > k+1}] \\
    \leq & \left( 1 - \frac{2}{k+N} \right) \left( \frac{\eps}{40} + \frac{N \C (K_0)}{k+N} \right) + \frac{\eps}{40} \frac{1}{k + N} + \frac{2 \phi_{\text{lqr}} \xi_4}{(1 - \gamma)^2 \mu^2_{\text{lqr}}} \frac{1}{(k+N)^2} \\
    \leq & \frac{\eps}{40} + \left( 1 - \frac{1}{k + N} \right) \frac{N \C (K_0)}{k + N} - \frac{1}{(k + N)^2} \left( N \C (K_0) - \frac{2 \phi_{\text{lqr}} \xi_4}{(1 - \gamma)^2 \mu^2_{\text{lqr}}} \right) \\
    \overset{\mathrm{(i)}}{\leq} &\frac{\eps}{40} + \left( \frac{k + N - 1}{(k + N)^2} \right) N \C (K_0) \\
    \leq & \frac{\eps}{40} + \frac{N \C (K_0)}{k+N+1},
\end{align*}
where (i) follows from the fact that
\begin{equation*}
    N \C (K_0) \geq N_1 \C (K_0) \geq \left( \frac{4 \phi_{\text{lqr}} \xi_4}{\mu_{\text{lqr}}^2 (1 - \gamma)^2} \frac{2}{\C(K_0)} \right) \C (K_0) = \frac{8 \phi_{\text{lqr}} \xi_4}{(1 - \gamma)^2 \mu^2_{\text{lqr}}} \geq \frac{2 \phi_{\text{lqr}} \xi_4}{(1 - \gamma)^2 \mu^2_{\text{lqr}}}.
\end{equation*}

This proves the claim for $k+1$, completing the inductive step. \oprocend

Now utilizing Sublemma~\ref{sublem: lqr_pg_inductive} and the choice of $T$ from \eqref{eq: theorem parameter setup T} in Theorem~\ref{thm: lqr policy gradient}, 
\begin{align*}
    \E[\Delta_T 1_{\tau > T}] \leq \frac{\eps}{40} + \frac{N \C (K_0)}{T+N} \leq \frac{\eps}{40} + \frac{N \C (K_0)}{T} = \frac{\eps}{20},
\end{align*}
which finishes the proof of the first claim of Proposition~\ref{prop: lqr policy gradient}. Now before moving on to the second claim, we introduce the following sublemma:
\begin{sublemma} \label{sublem: prelim to martingale}
    Under the parameter setup of Theorem~\ref{thm: lqr policy gradient}, we have that for all $t \in [T]$,
    \begin{equation}
        \frac{3 \xi_3 \widetilde{c_{K_1}}}{1 - \gamma} \delta \left( \log{\frac{1}{\delta}} \right)^{3/2} \alpha_t + \frac{\phi_{\text{lqr}} \xi_4}{2 (1 - \gamma)^2} \alpha_t^2 + \frac{4 \phi_{\text{lqr}} \xi_4}{(1 - \gamma)^2 \mu^2_{\text{lqr}}} \frac{1}{t + N + 1} \leq \frac{4 \phi_{\text{lqr}} \xi_4}{(1 - \gamma)^2 \mu^2_{\text{lqr}}} \frac{1}{t + N}. \label{eq: sublemma prelim to martingale}
    \end{equation}
\end{sublemma}
\emph{Proof of Sublemma~\ref{sublem: prelim to martingale}.} First, substituting $\alpha_t$ with its value in~\eqref{eq: theorem parameter setup N}, inequality~\eqref{eq: sublemma prelim to martingale} becomes
\begin{align}
    &\frac{6 \xi_3 \widetilde{c_{K_1}} \delta \left( \log{\frac{1}{\delta}} \right)^{3/2}}{(1 - \gamma) \mu_{\text{lqr}}} \frac{1}{t + N} + \frac{2 \phi_{\text{lqr}} \xi_4}{(1 - \gamma)^2 \mu^2_{\text{lqr}}} \left( \frac{1}{(t+N)^2} + \frac{2}{t + N + 1} \right) \leq \frac{2 \phi_{\text{lqr}} \xi_4}{(1 - \gamma)^2 \mu^2_{\text{lqr}}} \left( \frac{2}{t + N} \right) \cr
    &\iff \frac{6 \xi_3 \widetilde{c_{K_1}} \delta \left( \log{\frac{1}{\delta}} \right)^{3/2}}{(1 - \gamma) \mu_{\text{lqr}}} \frac{1}{t + N} \leq \frac{2 \phi_{\text{lqr}} \xi_4}{(1 - \gamma)^2 \mu^2_{\text{lqr}}} \left( \frac{2}{t+N} - \frac{2}{t+N+1} - \frac{1}{(t+N)^2} \right) \cr
    &\iff \frac{6 \xi_3 \widetilde{c_{K_1}} \delta \left( \log{\frac{1}{\delta}} \right)^{3/2}}{(1 - \gamma) \mu_{\text{lqr}}} \frac{1}{t + N} \leq \frac{2 \phi_{\text{lqr}} \xi_4}{(1 - \gamma)^2 \mu^2_{\text{lqr}}} \left( \frac{2}{(t+N)(t+N+1)} - \frac{1}{(t+N)^2} \right) \cr
    &\iff \frac{6 \xi_3 \widetilde{c_{K_1}} \delta \left( \log{\frac{1}{\delta}} \right)^{3/2}}{(1 - \gamma) \mu_{\text{lqr}}} \frac{1}{t + N} \leq \frac{2 \phi_{\text{lqr}} \xi_4}{(1 - \gamma)^2 \mu^2_{\text{lqr}}} \left( \frac{t + N - 1}{(t+N)^2 (t+N+1)} \right) \cr
    &\iff \delta \left( \log{\frac{1}{\delta}} \right)^{3/2} \leq \frac{\phi_{\text{lqr}} \xi_4}{3 \xi_3 \widetilde{c_{K_1}} (1 - \gamma) \mu_{\text{lqr}}} \left( \frac{t + N - 1}{(t+N) (t+N+1)} \right). \label{eq: sublemma prelim to martingale pre1}
\end{align}
Note that for the right-hand side of~\eqref{eq: sublemma prelim to martingale pre1}, we have for all $t \in [T]$ that
\begin{align}
    \frac{\phi_{\text{lqr}} \xi_4}{3 \xi_3 \widetilde{c_{K_1}} (1 - \gamma) \mu_{\text{lqr}}} \left( \frac{t + N - 1}{t+N} \frac{1}{t+N+1} \right) &\overset{\mathrm{(i)}}{\geq} \frac{\phi_{\text{lqr}} \xi_4}{6 \xi_3 \widetilde{c_{K_1}} (1 - \gamma) \mu_{\text{lqr}}} \left( \frac{1}{t+N+1} \right) \cr
    &\geq \frac{\phi_{\text{lqr}} \xi_4}{6 \xi_3 \widetilde{c_{K_1}} (1 - \gamma) \mu_{\text{lqr}}} \left( \frac{1}{T+N+1} \right) \cr
    &\overset{\mathrm{(ii)}}{\geq} \frac{\phi_{\text{lqr}} \xi_4}{12 \xi_3 \widetilde{c_{K_1}} (1 - \gamma) \mu_{\text{lqr}}} \left( \frac{1}{T} \right), \label{eq: sublemma prelim to martingale pre2}
\end{align}
where (i) follows from the fact that $\frac{t+N-1}{t+N} \geq \frac{1}{2}$ which is due to $N \geq 2$ (see \eqref{eq: theorem parameter setup N} and \eqref{eq: theorem parameter setup N_1}), and (ii) from $\C (K_0) \geq \frac{\eps}{20}$ under the settings of Theorem~\ref{thm: lqr policy gradient}, which results in
\[
T = \frac{40}{\eps} N \C (K_0) \geq 2 N \geq N+1 \Rightarrow \frac{1}{T+N+1} \geq \frac{1}{2T}.
\]
As a result of \eqref{eq: sublemma prelim to martingale pre1} and \eqref{eq: sublemma prelim to martingale pre2}, in order to conclude the proof Sublemma~\ref{sublem: prelim to martingale}, it would suffice to show that
\begin{align}
    \delta \left( \log{\frac{1}{\delta}} \right)^{3/2} \leq & \frac{\phi_{\text{lqr}} \xi_4}{12 \xi_3 \widetilde{c_{K_1}} (1 - \gamma) \mu_{\text{lqr}}} \left( \frac{1}{T} \right) \cr
    =& \frac{\phi_{\text{lqr}} \xi_4}{12 \xi_3 \widetilde{c_{K_1}} (1 - \gamma) \mu_{\text{lqr}}} \frac{\eps}{40} \frac{1}{N \C (K_0)} \cr
    =& \frac{\phi_{\text{lqr}} \xi_4}{12 \xi_3 \widetilde{c_{K_1}} (1 - \gamma) \mu_{\text{lqr}}} \frac{\eps}{40} \frac{1}{\max\left\{N_1 \C (K_0), \frac{2 \C (K_0)}{\mu_{\text{lqr}}} \frac{\xi_3 \left(\log{\frac{1}{\delta}}\right)^{3/2}}{(1 - \gamma)\omega_{\text{lqr}}}\right\}} \cr
    =& \frac{\phi_{\text{lqr}} \xi_4}{12 \xi_3 \widetilde{c_{K_1}} (1 - \gamma) \mu_{\text{lqr}}} \frac{\eps}{40} \min \left\{ \frac{1}{N_1 \C (K_0)}, \frac{\mu_{\text{lqr}} \omega_{\text{lqr}} (1 - \gamma)}{2 \C (K_0) \xi_3 \left( \log{\frac{1}{\delta}} \right)^{3/2}} \right\}. \label{eq: sublemma prelim to martingale pre3}
\end{align}
For~\eqref{eq: sublemma prelim to martingale pre3} to hold, we need two inequalities to hold as a result of the $\min \{.,.\}$ operator. First, we require
\begin{equation}
    \delta \left( \log{\frac{1}{\delta}} \right)^{3/2} \leq \frac{\phi_{\text{lqr}} \xi_4}{480 \xi_3 \widetilde{c_{K_1}} (1 - \gamma) \mu_{\text{lqr}} N_1 \C (K_0)} \eps. \label{eq: sublemma prelim to martingale pre4}
\end{equation}
Now since $a^3 \left( \log{\frac{1}{a^3}} \right)^{3/2} \leq a$ for all $a \in (0,1)$ and the choice of $\delta$ in~\eqref{eq: theorem parameter setup delta}, i.e., 
\[
\delta \leq \left( \frac{\phi_{\text{lqr}} \xi_4}{480 \xi_3 \widetilde{c_{K_1}} (1 - \gamma) \mu_{\text{lqr}} N_1 \C (K_0)} \right)^3 \eps^3,
\]
we conclude that~\eqref{eq: sublemma prelim to martingale pre4} holds for the parameter setup of Theorem~\ref{thm: lqr policy gradient}.

Secondly, it needs to hold that
\begin{align}
    &\delta \left( \log{\frac{1}{\delta}} \right)^{3/2} \leq \frac{\phi_{\text{lqr}} \xi_4 \omega_{\text{lqr}}}{960 \xi_3^2 \widetilde{c_{K_1}} \C (K_0) \left( \log{\frac{1}{\delta}} \right)^{3/2}} \eps \cr
    &\iff \delta \left( \log{\frac{1}{\delta}} \right)^{3} \leq \frac{\phi_{\text{lqr}} \xi_4 \omega_{\text{lqr}}}{960 \xi_3^2 \widetilde{c_{K_1}} \C (K_0)} \eps. \label{eq: sublemma prelim to martingale pre5}
\end{align}
Now if
\[
\frac{\phi_{\text{lqr}} \xi_4 \omega_{\text{lqr}}}{960 \xi_3^2 \widetilde{c_{K_1}} \C (K_0)} \eps \leq 0.028,
\]
for any $\delta \leq \left( \frac{\phi_{\text{lqr}} \xi_4 \omega_{\text{lqr}}}{960 \xi_3^2 \widetilde{c_{K_1}} \C (K_0)}\right)^3 \eps^3$, we have that
\[
 \delta \left( \log{\frac{1}{\delta}} \right)^{3} \leq \frac{\phi_{\text{lqr}} \xi_4 \omega_{\text{lqr}}}{960 \xi_3^2 \widetilde{c_{K_1}} \C (K_0)} \eps,
\]
and if 
\[
\frac{\phi_{\text{lqr}} \xi_4 \omega_{\text{lqr}}}{960 \xi_3^2 \widetilde{c_{K_1}} \C (K_0)} \eps > 0.028,
\]
it would suffice to have that
\[
\delta \left( \log{\frac{1}{\delta}} \right)^{3} \leq 0.028,
\]
which would hold for any $\delta \leq 2 \times 10^{-5}$. As a result, due to the choice of $\delta$ in~\eqref{eq: theorem parameter setup delta}, i.e., 
\[
\delta \leq \min \left\{2 \times 10^{-5}, \left( \frac{\phi_{\text{lqr}} \xi_4 \omega_{\text{lqr}}}{960 \xi_3^2 \widetilde{c_{K_1}} \C (K_0)}\right)^3 \eps^3 \right\},
\]
we have that~\eqref{eq: sublemma prelim to martingale pre5} will also hold under the parameter setup of Theorem~\ref{thm: lqr policy gradient}. Finally, since both~\eqref{eq: sublemma prelim to martingale pre4} and \eqref{eq: sublemma prelim to martingale pre5} hold for $\delta$ as chosen in~\eqref{eq: theorem parameter setup delta}, inequality~\eqref{eq: sublemma prelim to martingale pre3} is satisfied, finishing the proof. \oprocend

We now prove the second claim. Even though our proof strategy mimics the one in~\cite{DM-AP-KB-KK-PLB-MJW:18}, the structure of the stopping times in~\eqref{eq: tau_1 definition} and \eqref{eq:stopping-times} makes the arguments more involved. Note that this difference in the definition of the stopping time (and subsequently the stopped process) can be attributed to the fact that in contrast to \cite{DM-AP-KB-KK-PLB-MJW:18}'s one scenario (leaving the stable region) which may lead their algorithm to fail, there are two possible scenarios that may cause the failure of our algorithm. 
We start by introducing the stopped process
\begin{equation}
    Y_t : = \Delta_{\tau_1 \wedge t} 1_{\tau_2 > t} + \frac{4 \phi_{\text{lqr}} \xi_4}{(1 - \gamma)^2 \mu^2_{\text{lqr}}} \frac{1}{t+N} \quad \text{for each } t \in [T].
\end{equation}
We next show that this process is a supermartingale. First, we have that
\begin{align}
    &\E[Y_{t+1} | \F_t] \cr
    =& \E[\Delta_{\tau_1 \wedge t+1} 1_{\tau_2 > t+1} | \F_t] + \frac{4 \phi_{\text{lqr}} \xi_4}{(1 - \gamma)^2 \mu^2_{\text{lqr}}} \frac{1}{t+N+1} \cr
    =& \E[\Delta_{\tau_1 \wedge t+1} 1_{\tau_2 > t+1} \left( 1_{\tau_1 \leq t} + 1_{\tau_1 > t} \right)| \F_t] + \frac{4 \phi_{\text{lqr}} \xi_4}{(1 - \gamma)^2 \mu^2_{\text{lqr}}} \frac{1}{t+N+1} \cr
    =& \E[\Delta_{\tau_1 \wedge t+1} 1_{\tau_2 > t+1} 1_{\tau_1 \leq t} | \F_t] + \E[\Delta_{\tau_1 \wedge t+1} 1_{\tau_2 > t+1} 1_{\tau_1 > t} | \F_t] + \frac{4 \phi_{\text{lqr}} \xi_4}{(1 - \gamma)^2 \mu^2_{\text{lqr}}} \frac{1}{t+N+1}. \label{eq: martingale prelim 1}
\end{align}
Then for the first term on the right-hand side of~\eqref{eq: martingale prelim 1}, it holds that
\begin{align}
    \E[\Delta_{\tau_1 \wedge t+1} 1_{\tau_2 > t+1} 1_{\tau_1 \leq t} | \F_t] \leq & \E[\Delta_{\tau_1 \wedge t+1} 1_{\tau_2 > t} 1_{\tau_1 \leq t} | \F_t] \cr
    =& 1_{\tau_2 > t} \E[\Delta_{\tau_1 \wedge t+1} 1_{\tau_1 \leq t} | \F_t] \cr
    =& 1_{\tau_2 > t} \E[\Delta_{\tau_1 \wedge t} 1_{\tau_1 \leq t} | \F_t] \cr
    =& \Delta_{\tau_1 \wedge t} 1_{\tau_2 > t} 1_{\tau_1 \leq t}. \label{eq: pg_martingale_first_term}
\end{align}
As for the second term, we have
\begin{align}
    &\E[\Delta_{\tau_1 \wedge t+1} 1_{\tau_2 > t+1} 1_{\tau_1 > t} | \F_t] \cr
    \overset{\mathrm{(i)}}{=} &\E[\Delta_{\tau_1 \wedge t+1} 1_{\tau_1 > t} 1_{\tau_2 > t} 1_{\mathcal{A}_t} | \F_t] \cr
    =& \E[\Delta_{t+1} 1_{\tau_1 > t} 1_{\tau_2 > t} 1_{\mathcal{A}_t} | \F_t] \cr
    =& \E[\Delta_{t+1} 1_{\mathcal{A}_t} | \F_t] 1_{\tau_1 > t} 1_{\tau_2 > t} \cr
    \overset{\mathrm{(ii)}}{\leq} &\left( \left(1 - \mu_{\text{lqr}} \alpha_t \right) \Delta_t + \frac{3 \xi_3 \widetilde{c_{K_1}}}{1 - \gamma} \delta \left( \log{\frac{1}{\delta}} \right)^{3/2} \alpha_t + \frac{\phi_{\text{lqr}} \alpha_t^2}{2} \frac{\xi_4}{(1 - \gamma)^2} \right) 1_{\tau_1 > t} 1_{\tau_2 > t} \cr
    =& \left( \left(1 - \frac{2}{t+N} \right) \Delta_t + \frac{3 \xi_3 \widetilde{c_{K_1}}}{1 - \gamma} \delta \left( \log{\frac{1}{\delta}} \right)^{3/2} \alpha_t + \frac{\phi_{\text{lqr}} \alpha_t^2}{2} \frac{\xi_4}{(1 - \gamma)^2} \right) 1_{\tau_1 > t} 1_{\tau_2 > t} \cr
    \overset{\mathrm{(iii)}}{\leq} &\Delta_t 1_{\tau_1 > t} 1_{\tau_2 > t} + \frac{3 \xi_3 \widetilde{c_{K_1}}}{1 - \gamma} \delta \left( \log{\frac{1}{\delta}} \right)^{3/2} \alpha_t + \frac{\phi_{\text{lqr}} \alpha_t^2}{2} \frac{\xi_4}{(1 - \gamma)^2} \cr
    \overset{\mathrm{(iv)}}{=} &\Delta_{\tau_1 \wedge t} 1_{\tau_1 > t} 1_{\tau_2 > t} + \frac{3 \xi_3 \widetilde{c_{K_1}}}{1 - \gamma} \delta \left( \log{\frac{1}{\delta}} \right)^{3/2} \alpha_t + \frac{\phi_{\text{lqr}} \alpha_t^2}{2} \frac{\xi_4}{(1 - \gamma)^2}, \label{eq: pg_martingale_second_term}
\end{align}
where (i) follows from~\eqref{eq: tau_2 - A_t}, (ii) from Lemma~\ref{lem: policy gradient recursive}, (iii) from $1_{\tau_1 > t} 1_{\tau_2 > t} \leq 1$ along with the fact that $\frac{2}{t + N} \leq 1$ for all $t \in [T]$, and (iv) from $\Delta_t 1_{\tau_1 > t} = \Delta_{\tau_1 \wedge t} 1_{\tau_1 > t}$.

Combining \eqref{eq: martingale prelim 1}, \eqref{eq: pg_martingale_first_term}, and \eqref{eq: pg_martingale_second_term}, we obtain that for all $t \in [T]$,
\begin{align*}
    \E[Y_{t+1} | \F_t] \leq & \Delta_{\tau_1 \wedge t} 1_{\tau_2 > t} 1_{\tau_1 \leq t} + \Delta_{\tau_1 \wedge t} 1_{\tau_1 > t} 1_{\tau_2 > t} \\
    &+ \frac{3 \xi_3 \widetilde{c_{K_1}}}{1 - \gamma} \delta \left( \log{\frac{1}{\delta}} \right)^{3/2} \alpha_t + \frac{\phi_{\text{lqr}} \alpha_t^2}{2} \frac{\xi_4}{(1 - \gamma)^2} + \frac{4 \phi_{\text{lqr}} \xi_4}{(1 - \gamma)^2 \mu^2_{\text{lqr}}} \frac{1}{t+N+1} \\
    \overset{\mathrm{(i)}}{\leq} &\Delta_{\tau_1 \wedge t} 1_{\tau_2 > t} (1_{\tau_1 \leq t} + 1_{\tau_1 > t}) + \frac{4 \phi_{\text{lqr}} \xi_4}{(1 - \gamma)^2 \mu^2_{\text{lqr}}} \frac{1}{t + N} \\
    =& \Delta_{\tau_1 \wedge t} 1_{\tau_2 > t} + \frac{4 \phi_{\text{lqr}} \xi_4}{(1 - \gamma)^2 \mu^2_{\text{lqr}}} \frac{1}{t + N} \\
    =& Y_t,
\end{align*}
where (i) follows from Sublemma~\ref{sublem: prelim to martingale}. This proves the claim that $Y_t$ is a supermartingale. Moreover, define the following events:
\begin{align}
    \mathcal{E}_1 &:= \{ \tau_2 \leq \tau_1 \; \text{and} \; \tau_2 \in [T] \} \\
    \mathcal{E}_2 &:= \{ \tau_1 < \tau_2 \; \text{and} \; \tau_1 \in [T] \} \\
    \mathcal{E}_3 &:= \left\{ \max_{t \in [T]} \Delta_{\tau_1 \wedge t} 1_{\tau_2 > t} \geq 10 \C (K_0) \right\},
\end{align}
and hence, we have $\PP \{ \tau \leq T \} = \PP (\mathcal{E}_1) + \PP (\mathcal{E}_2)$. Now since $\tau_2 \leq \tau_1$ in $\mathcal{E}_1$ suggests that $\lVert \widehat{\nabla \C} (K_{\tau_2 - 1}) \rVert_F > \frac{\xi_3}{1 - \gamma} \left(\log \frac{1}{\delta}\right)^{3/2}$ despite $\Delta_{\tau_2 - 1} \leq 10 \C(K_0)$ (which implies $\K_{\tau_2 - 1} \in \G^{\text{lqr}}$), after applying union bound on the result of Lemma~\ref{lem: gradient estimate bounds}, we have
\begin{equation}
    \PP (\mathcal{E}_1) \leq \delta T. \label{eq: probability bound on E_1}
\end{equation}
Furthermore, note that $\tau_1 < \tau_2$ in $\mathcal{E}_2$ implies that $\Delta_{\tau_1 \wedge \tau_1} 1_{\tau_2 > \tau_1} = \Delta_{\tau_1}$ and since $\tau_1 \in [T]$, it holds that
\begin{equation*}
    \max_{t \in [T]} \Delta_{\tau_1 \wedge t} 1_{\tau_2 > t} \geq  \Delta_{\tau_1 \wedge \tau_1} 1_{\tau_2 > \tau_1} = \Delta_{\tau_1} \overset{\mathrm{(i)}}{>} 10 \C(K_0),
\end{equation*}
where (i) follows the definition of $\tau_1$. As a result of this, we have that $\mathcal{E}_2$ implies $\mathcal{E}_3$, and consequently, $\PP (\mathcal{E}_2) \leq \PP (\mathcal{E}_3)$. Finally, since $Y_t \geq \Delta_{\tau_1 \wedge t} 1_{\tau_2 > t}$ for all $t \in [T]$, we have that
\begin{align}
    \PP (\mathcal{E}_2) \leq & \PP(\mathcal{E}_3) \cr
    =& \PP \left\{ \max_{t \in [T]} \Delta_{\tau_1 \wedge t} 1_{\tau_2 > t} \geq 10 \C(K_0) \right\} \cr
    \leq & \PP \left\{ \max_{t \in [T]} Y_t \geq 10 \C(K_0) \right\} \cr
    \overset{\mathrm{(i)}}{\leq} &\frac{\E[Y_0]}{10 \C(K_0)} \cr
    =& \frac{\Delta_{\tau_1 \wedge 0} 1_{\tau_2 > 0} + \frac{4 \phi_{\text{lqr}} \xi_4}{(1 - \gamma)^2 \mu^2_{\text{lqr}}} \frac{1}{N}}{10 \C(K_0)} \cr
    \overset{\mathrm{(ii)}}{\leq} &\frac{\Delta_0 + \C(K_0)/2}{10 \C(K_0)} \cr
    \leq & \frac{\C(K_0) + \C(K_0)/2}{10 \C(K_0)} \cr
    =& \frac{3}{20}, \label{eq: probability bound on E_2}
\end{align}
where (i) follows from applying Doob/Ville's inequality for supermartingales, and (ii) from the condition on the choice of $N$ in Theorem~\ref{thm: lqr policy gradient}. Utilizing the acquired probability bounds \eqref{eq: probability bound on E_1} and \eqref{eq: probability bound on E_2}, we observe that
\begin{align*}
    \PP \{ \tau \leq T \} =& \PP (\mathcal{E}_1) + \PP (\mathcal{E}_2) \\
    \leq & \delta T + \frac{3}{20},
\end{align*}
which verifies the second claim of Proposition~\ref{prop: lqr policy gradient}, concluding the proof.
\end{proof}

The proof of our main result is a straightforward corollary: 
\begin{proof}[Proof of Theorem~\ref{thm: lqr policy gradient}]
    We now show how Proposition~\ref{prop: lqr policy gradient} can be employed to validate the claims of Theorem~\ref{thm: lqr policy gradient}. Note that 
\begin{align*}
    \PP \left\{ \Delta_T \geq \eps \right\} \leq & \PP \left\{ \Delta_T 1_{\tau > T} \geq \eps \right\} + \PP \left\{ 1_{\tau \leq T} = 1 \right\} \\
    \overset{\mathrm{(i)}}{\leq} &\frac{1}{\eps} \E[\Delta_T 1_{\tau > T}] + \PP \left\{ \tau \leq T \right\} \\
    \overset{\mathrm{(ii)}}{\leq} &\frac{1}{20} + \frac{3}{20} + \delta T = \frac{1}{5} + \delta T,
\end{align*}
where (i) follows from Markov's inequality and (ii) follows from Proposition~\ref{prop: lqr policy gradient}.
\end{proof}

In the next section, we present a brief simulation study using two representative examples from~\cite{DM-AP-KB-KK-PLB-MJW:18} to empirically validate our theoretical guarantees and compare convergence rates.

\section{Simulation studies}

We now revisit several examples introduced from the previous literature (specifically from ~\cite{DM-AP-KB-KK-PLB-MJW:18}) and show empirically that our performance does indeed match $\tilde{O}(\epsilon^{-1})$ guaranteed by our theoretical results.

We begin with the following LQR problem:
\begin{align*}
    A = \begin{bmatrix}
        1 & 0 & -10 \\ -1 & 1 & 0 \\ 0 & 0 & 1
    \end{bmatrix}, \;
    B = \begin{bmatrix}
        1 & -10 & 0 \\ 0 & 1 & 0 \\ -1 & 0 & 1
    \end{bmatrix}, \;
    Q = \begin{bmatrix}
        2 & -1 & 0 \\ -1 & 2 & -1 \\ 0 & -1 & 2
    \end{bmatrix}, \;
    R = \begin{bmatrix}
        5 & -3 & 0 \\ -3 & 5 & -2 \\ 0 & -2 & 5
    \end{bmatrix},
\end{align*}
under the random initialization setting, where the initial state \( x_0 \) is sampled uniformly from the set of signed canonical basis vectors, yielding a mean-zero distribution. The discount factor is set to \( \gamma = 0.9 \). We initialize with a policy \( K_0 \) satisfying \( \Cinit(K_0) - \Cinit(K^*) = 11.716 \), use a constant step-size of \( 10^{-4} \), and set the batch size to \( N_s = 10^3 \).
This example was previously considered in~\cite{DM-AP-KB-KK-PLB-MJW:18} under a two-point gradient estimation scheme, where an empirical sample complexity of approximately \( \mathcal{O}(\epsilon^{-1}) \) was observed (see their Figure~2~(b)). As shown in Figure~\ref{fig:sample_complexity_ex1}, our method achieves a fitted rate of approximately \( \mathcal{O}(\epsilon^{-1.03}) \), in line with our theoretical guarantees of \( \widetilde{\mathcal{O}}(\epsilon^{-1}) \), with the small discrepancy likely due to logarithmic factors.

We next consider a second example from~\cite{DM-AP-KB-KK-PLB-MJW:18}, this time under the noisy dynamics setting:
\[
    A = 0.1 I_3, \quad B = 0.01 I_3, \quad Q = 100 I_3, \quad R = 100 I_3,
\]
with the discount factor again set to \( \gamma = 0.9 \). The system is subject to additive Gaussian noise with zero mean and covariance \( \frac{1}{25} I_3 \). We initialize with a policy \( K_0 \) such that \( \Cdyn(K_0) = \Cdyn(K^*) + 3.12 \), and apply a time-varying step-size given by
\[
    \alpha_t = \max\left( \frac{1}{60t + 2000}, 2 \cdot 10^{-5} \right),
\]
along with a batch size of \( N_s = 3000 \). This choice allows us to apply the time-varying step-size scheme from Theorem~\ref{thm: lqr policy gradient} (or, equivalently, from Corollary~\ref{cor: lqr policy gradient - noisy} in Appendix~\ref{app: extension_to_noisy_dynamics} for the noisy dynamics setting), although we note that a constant step-size performs similarly well in practice. The same problem was studied in~\cite{DM-AP-KB-KK-PLB-MJW:18} under a one-point estimation scheme, where a sample complexity of approximately \( \mathcal{O}(\epsilon^{-2}) \) was observed (see their Figure~2~(c)). As can be seen from Figure~\ref{fig:sample_complexity_ex2}, our empirical rate is approximately \( \mathcal{O}(\epsilon^{-0.87}) \), satisfying our theoretical guarantee of at most \( \widetilde{\mathcal{O}}(\epsilon^{-1}) \).

\begin{figure}[t]
    \centering
    \subfigure[Random initialization setting.]{
        \includegraphics[width=0.47\linewidth]{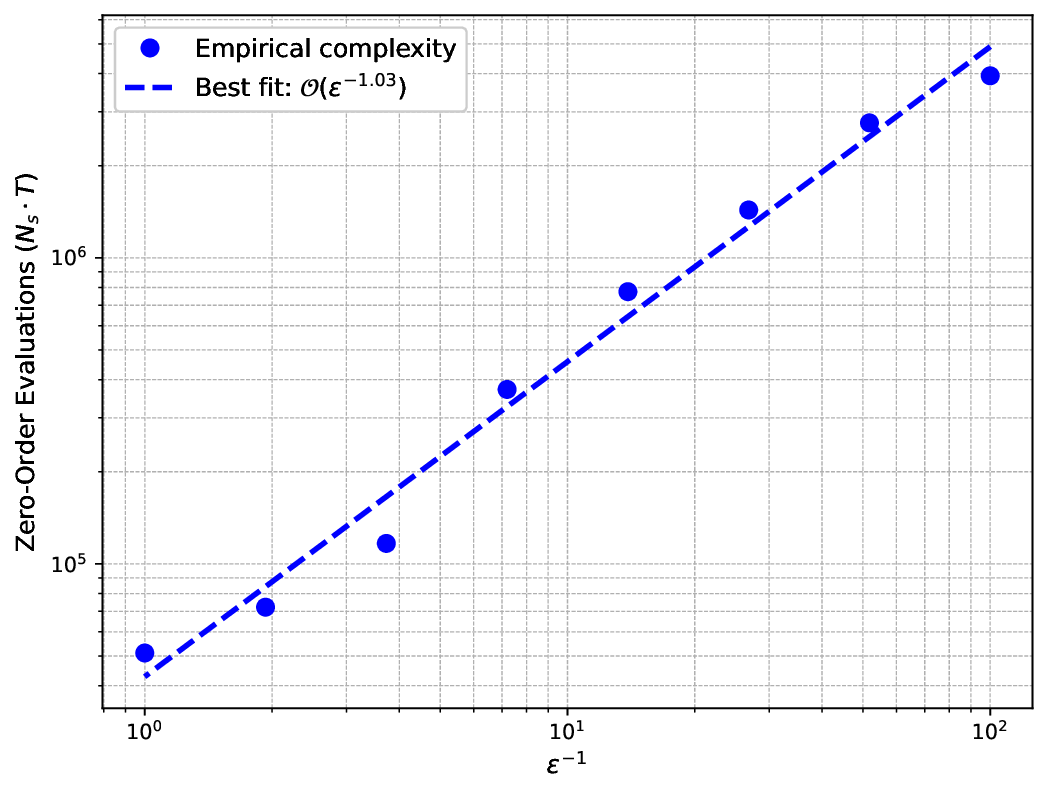}
        \label{fig:sample_complexity_ex1}
    }
    \hfill
    \subfigure[Noisy dynamics setting.]{
        \includegraphics[width=0.47\linewidth]{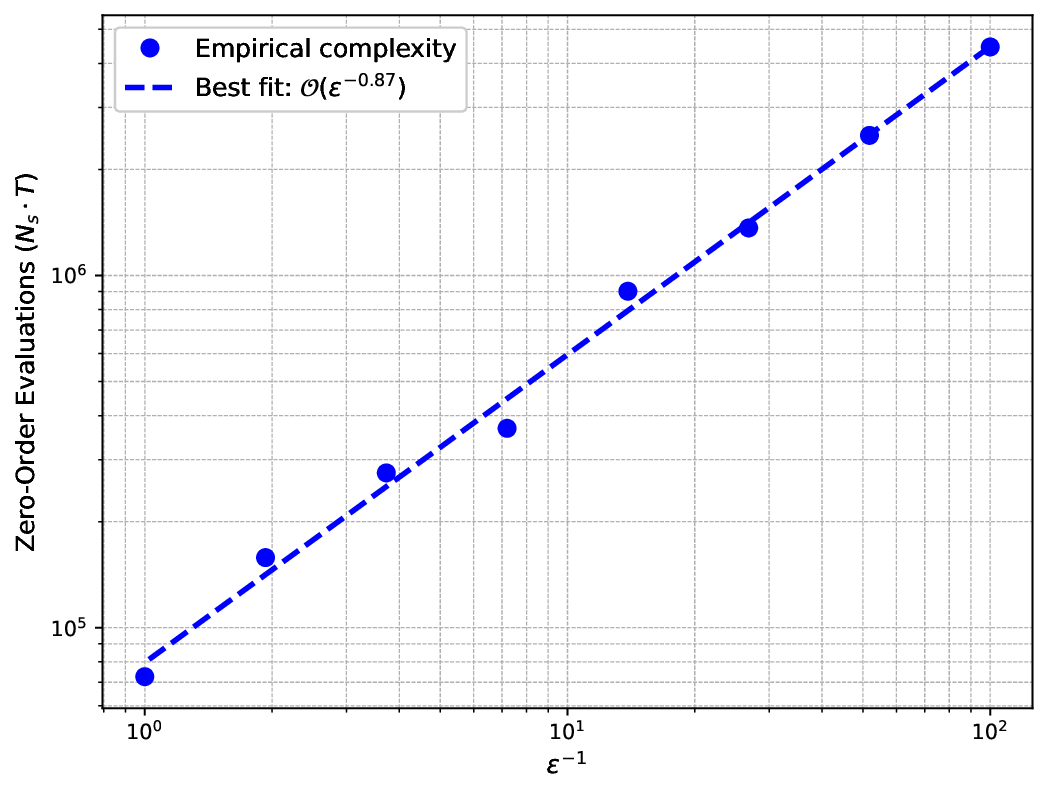}
        \label{fig:sample_complexity_ex2}
    }
    \caption{Empirical zero-order evaluations required by the policy gradient method to achieve $\epsilon$-optimality. Dashed lines indicate the best-fit lines in the log-log scale. The plots were generated by averaging 20 runs of Algorithm~\ref{alg: LQR_policy_gradient}.}
    \label{fig:sample_complexity_comparison}
\end{figure}

\section{Summary and discussion}
We have provided an algorithm with $\eps$-optimality guarantees with a provable convergence rate of $\widetilde{\mathcal{O}} (1/ \eps)$ for the discounted discrete-time LQR problem in the model-free setting. This was made possible by employing a gradient estimation technique inspired by REINFORCE, combined with a time-varying step-size. Our results contrast from the ones obtained by 
two-point methods---which 
make the stronger assumption of access to cost for two different policies with the same realization of all system randomness---as well as results that assume stability of the obtained policies throughout the algorithm.

An interesting future direction would be to investigate an actor-critic approach that could maintain the rate without requiring further assumptions. Moreover, one could consider an extension of the presented results for the undiscounted case; in particular, the current analysis of gradient estimation with one zero-order evaluation per iteration 
heavily relies on sampling from a distribution whose definition relies on the discount factor be strictly less that one. 
\appendix
\section{Probabilty of failure argument} \label{app: prob of failure}

We dedicate this section to addressing our constant probability guarantees in Theorem~\ref{thm: lqr policy gradient}. To that end, and inspired by the approach in~\cite[Appendix E]{DM-AP-KB-KK-PLB-MJW:18}, we propose a mini-batched gradient estimation method, in which we average a sufficiently large number of i.i.d. copies of our original gradient estimate to obtain a more accurate approximation of the true gradient with high probability. Consider the mini-batch gradient estimate
\begin{equation}
\overline{\nabla \C}_{N_s} (K) := \frac{1}{N_s} \sum_{i=1}^{N_s} \widehat{\nabla \C}_i (K),
\label{eq: minibatch_gradient_estimate}
\end{equation}
where each $\widehat{\nabla \C}_i (K)$ is an i.i.d. copy of $\widehat{\nabla \C} (K)$ in~\eqref{eq: gradient estimate practical formulation}. We provide the following lemma regarding the concentration of this averaged estimate around its expectation, which is equal to the actual gradient as shown in Propositon~\ref{prop: gradient estimate expectation}.
\begin{lemma} \label{lem: averaged_estimate_error_bound}
Suppose $K \in \G^{\text{lqr}}$, $\gamma$ is chosen as in Lemma~\ref{lem: gamma condition}, and $\delta > 0$ chosen to satisfy
\begin{equation}
    \delta \leq \min \left\{ e^{-3/2}, \frac{ 1-\gamma}{3 \xi_3} \sqrt{\frac{\mu_{\text{lqr}} \eps}{8}} \right\}. \label{eq: delta_choice_in_appendix}
\end{equation}
If $N_s$ is selected such that 
\begin{align}
    N_s \geq \Bigg\lceil \max \Bigg\{ 5000, 8 \left( \log \frac{2}{\delta} \right)^3, &\frac{2048 \xi_3^2}{9 (1-\gamma)^2 \mu_{\text{lqr}}} \frac{1}{\eps} \left( \log \frac{2(mn+1)}{\delta} \right)^2, \cr
    &\frac{128 \xi_4}{\mu_{\text{lqr}} (1-\gamma)^2} \frac{1}{\eps} \log \frac{2(mn+1)}{\delta} \Bigg\} \Bigg\rceil = \widetilde{\mathcal{O}} \left(\frac{1}{\eps}\right), \label{eq: N_s choice}
\end{align}
then the mini-batch averaged estimate~\eqref{eq: minibatch_gradient_estimate} satisfies
\begin{equation*}
    \| \overline{\nabla \C}_{N_s} (K) - \nabla \C(K) \|_F \leq \sqrt{\frac{\mu_{\text{lqr}} \eps}{8}},
\end{equation*}
with probability at least $1-\delta$.
\end{lemma}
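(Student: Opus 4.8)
The plan is to treat $\overline{\nabla \C}_{N_s}(K) - \nabla \C(K) = \frac{1}{N_s}\sum_{i=1}^{N_s} Z_i$, where $Z_i := \widehat{\nabla \C}_i(K) - \nabla \C(K)$ are i.i.d.\ and mean-zero (the centering being exact by Proposition~\ref{prop: gradient estimate expectation}), and to apply a matrix concentration inequality. The obstruction is that $Z_i$ is \emph{not} bounded almost surely: by~\eqref{eq: gradient estimate norm bound before LM} its norm scales like $\lVert \eta_i\rVert^3$, so Lemma~\ref{lem: gradient estimate bounds} only furnishes a sub-Weibull (exponent $2/3$) tail, which precludes a direct use of the bounded matrix Bernstein inequality. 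I would therefore truncate. Fix a threshold $\tau := \frac{\xi_3}{1-\gamma}(\log \tfrac{1}{\delta_0})^{3/2}$ with $\delta_0$ a small parameter to be pinned down later, set $\mathcal{A}_i := \{\lVert \widehat{\nabla \C}_i(K)\rVert_F \le \tau\}$, and decompose
\[
\frac{1}{N_s}\sum_i Z_i \;=\; \underbrace{\frac{1}{N_s}\sum_i\bigl(Z_i 1_{\mathcal{A}_i} - \E[Z_i 1_{\mathcal{A}_i}]\bigr)}_{\text{(I)}} \;+\; \underbrace{\E[Z_i 1_{\mathcal{A}_i}]}_{\text{(II)}} \;+\; \underbrace{\frac{1}{N_s}\sum_i Z_i 1_{\mathcal{A}_i^c}}_{\text{(III)}},
\]
the aim being to bound each of (I), (II), (III) in Frobenius norm by $\tfrac{1}{3}\sqrt{\mu_{\text{lqr}}\eps/8}$ on an event of probability at least $1-\delta$.

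For (I), the summands $Z_i 1_{\mathcal{A}_i}$ are i.i.d., satisfy $\lVert Z_i 1_{\mathcal{A}_i}\rVert_F \le \tau + \widetilde{c_{K_1}}$ (using $\lVert \nabla \C(K)\rVert_F \le \widetilde{c_{K_1}}$ from Remark~\ref{rem: regularity parameters}), and have per-term variance proxy $\E\lVert Z_i\rVert_F^2 = \E\lVert \widehat{\nabla \C}_i(K)\rVert_F^2 - \lVert \nabla \C(K)\rVert_F^2 \le \frac{\xi_4}{(1-\gamma)^2}$ by~\eqref{eq: gradient_estimate_variance_bound}. I would vectorise each centered truncated matrix into $\R^{mn}$, pass to its self-adjoint dilation of size $(mn+1)\times(mn+1)$ — so that the operator norm of the dilation equals the Euclidean norm of the vectorised matrix, hence the Frobenius norm of (I) — and invoke the matrix Bernstein inequality, obtaining with probability at least $1-\delta/3$ a bound of order
\[
\sqrt{\frac{\xi_4\,\log\frac{mn+1}{\delta}}{(1-\gamma)^2\, N_s}} \;+\; \frac{(\tau + \widetilde{c_{K_1}})\,\log\frac{mn+1}{\delta}}{N_s}.
\]
Forcing the first term below $\tfrac{1}{3}\sqrt{\mu_{\text{lqr}}\eps/8}$ produces the requirement $N_s \gtrsim \frac{\xi_4}{\mu_{\text{lqr}}(1-\gamma)^2 \eps}\log\frac{mn+1}{\delta}$ (the fourth entry of~\eqref{eq: N_s choice}); forcing the second below it, after substituting the form of $\tau$, produces a requirement of the type $N_s \gtrsim \frac{\xi_3^2}{(1-\gamma)^2\mu_{\text{lqr}}\eps}\left(\log\frac{mn+1}{\delta}\right)^2$ (the third entry), the stated $1/\eps$ form being a clean sufficient condition once $\eps$ is small.

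For (II), since $\E Z_i = 0$ we have $\lVert\E[Z_i 1_{\mathcal{A}_i}]\rVert_F = \lVert\E[Z_i 1_{\mathcal{A}_i^c}]\rVert_F$, which is — up to a negligible $\widetilde{c_{K_1}}\delta_0$ correction — exactly the truncation bias controlled in Lemma~\ref{lem: gradient estimate conditioned}, namely $\le \frac{3\xi_3}{1-\gamma}\delta_0(\log\tfrac{1}{\delta_0})^{3/2}$. For (III), on the event $\bigcap_i \mathcal{A}_i$ every summand vanishes, so (III)$=0$; by Lemma~\ref{lem: gradient estimate bounds} and a union bound this event has probability at least $1 - N_s\delta_0$, which prompts the choice $\delta_0 := \delta/(3N_s)$, fixing $\tau = \frac{\xi_3}{1-\gamma}(\log\tfrac{3N_s}{\delta})^{3/2}$ and making (III) vanish with probability $\ge 1-\delta/3$. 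It then remains to verify that with this $\delta_0$ the bias (II) is indeed $\le \tfrac13\sqrt{\mu_{\text{lqr}}\eps/8}$, and here the hypothesis~\eqref{eq: delta_choice_in_appendix} is essential: it forces $\delta \lesssim \sqrt{\eps}$, hence $\log\tfrac{1}{\eps} \lesssim \log\tfrac{1}{\delta}$ and (since $N_s = \widetilde{\mathcal{O}}(1/\eps)$) also $\log N_s \lesssim \log\tfrac{1}{\delta}$, so every occurrence of $\log\tfrac{1}{\delta_0}$ and $\log N_s$ in $\tau$ and in the displays above collapses into $\log\tfrac{1}{\delta}$ (resp.\ $\log\tfrac{mn+1}{\delta}$) — which is why~\eqref{eq: N_s choice} carries no $\eps$ or $N_s$ inside its logarithms; the absolute lower bounds ``$5000$'' and ``$8(\log\tfrac{2}{\delta})^3$'' are the mild extra requirements making the bounded Bernstein term genuinely lower order than the variance term and keeping these simplifications valid.

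The proof then concludes by a union bound over the three failure events (total probability $\le \delta$) and by adding the three norm bounds to obtain $\lVert \overline{\nabla \C}_{N_s}(K) - \nabla \C(K)\rVert_F \le \sqrt{\mu_{\text{lqr}}\eps/8}$. The main obstacle is the one already flagged: the only-$\psi_{2/3}$ tail of the estimate forces the truncation, and the truncation level $\tau$ must be selected jointly with $N_s$ — it enters $N_s$ through the Bernstein deviation term, while $N_s$ enters $\tau$ through the union bound over the batch — so one must check that this circular dependency closes; it does, because $\tau$ depends on $N_s$ only logarithmically, which is what keeps $N_s = \widetilde{\mathcal{O}}(1/\eps)$. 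The remaining effort, namely tracking the numerical constants and logarithmic arguments so that they land exactly on the explicit thresholds of~\eqref{eq: N_s choice}, is routine but tedious.
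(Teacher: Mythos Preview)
Your proposal is correct and follows essentially the same truncate-then-Bernstein route as the paper: the paper's three terms are exactly your (III), (I), and (II), with truncation threshold $\frac{\xi_3}{1-\gamma}\bigl(\log\frac{2N_s}{\delta}\bigr)^{3/2}$ and the same vectorisation/dilation trick to obtain a Frobenius-norm matrix Bernstein bound with dimensional factor $mn+1$. The only cosmetic difference is that the paper splits the failure probability as $\delta/2 + \delta/2$ (since the bias term (II) is deterministic, as you also observe) rather than into thirds.
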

\begin{proof}
Let us define the following event
\[
\mathcal{B}_i = \left\{ \| \widehat{\nabla \C}_i (K) \|_F \leq \frac{\xi_3}{1 - \gamma} \left(\log \frac{2 N_s}{\delta}\right)^{3/2} \right\},
\]
which holds with probability at least $1 - \frac{\delta}{2 N_s}$ for each $i$. As a result, following Lemma~\ref{lem: gradient estimate conditioned}, we have for all $i \in \{ 1,2,\dotsc,N_s \}$ that
\begin{equation}
\| \E [\widehat{\nabla \C}_i (K) 1_{\mathcal{B}_i}] - \nabla \C(K) \|_F \leq \frac{3 \xi_3}{1 - \gamma} \frac{\delta}{2 N_s} \left(\log \frac{2 N_s}{\delta}\right)^{3/2}, \label{eq: gradient_est_conditioned_recall}
\end{equation}
where $\E [\widehat{\nabla \C}_i (K) 1_{\mathcal{B}_i}]$ holds the same value for all $i$. Moreover, note that
\begin{align}
    \overline{\nabla \C}_{N_s} (K) - \nabla \C(K) &= \frac{1}{N_s} \sum_{i=1}^{N_s} \left( \widehat{\nabla \C}_i (K) 1_{\mathcal{B}^c_i} + \widehat{\nabla \C}_i (K) 1_{\mathcal{B}_i} - \nabla \C(K) \right) \cr
    &= \frac{1}{N_s} \sum_{i=1}^{N_s} \left( \widehat{\nabla \C}_i (K) 1_{\mathcal{B}^c_i} + \widehat{\nabla \C}_i (K) 1_{\mathcal{B}_i} - \E [\widehat{\nabla \C}_i (K) 1_{\mathcal{B}_i}] \right) \cr
    &\quad + \frac{1}{N_s} \sum_{i=1}^{N_s} \E [\widehat{\nabla \C}_i (K) 1_{\mathcal{B}_i}] - \nabla \C(K) \cr
    &= \frac{1}{N_s} \sum_{i=1}^{N_s} \left( \widehat{\nabla \C}_i (K) 1_{\mathcal{B}^c_i} + \widehat{\nabla \C}_i (K) 1_{\mathcal{B}_i} - \E [\widehat{\nabla \C}_i (K) 1_{\mathcal{B}_i}] \right) \cr
    &\quad + \E [\widehat{\nabla \C}_1 (K) 1_{\mathcal{B}_1}] - \nabla \C(K). \label{eq: Bernstein_prelim1}
\end{align}
Let us now define
\[
S_i := \widehat{\nabla \C}_i (K) 1_{\mathcal{B}_i} - \E [\widehat{\nabla \C}_i (K) 1_{\mathcal{B}_i}].
\]
so we can utilize~\eqref{eq: Bernstein_prelim1} to write
\begin{align}
    &\| \overline{\nabla \C}_{N_s} (K) - \nabla \C(K) \|_F \cr
    \leq &\frac{1}{N_s} \sum_{i=1}^{N_s} \|\widehat{\nabla \C}_i (K) 1_{\mathcal{B}^c_i} \|_F + \| \frac{1}{N_s} \sum_{i=1}^{N_s} S_i \|_F + \| \E [\widehat{\nabla \C}_1 (K) 1_{\mathcal{B}_1}] - \nabla \C(K) \|_F \cr
    \overset{\mathrm{(i)}}{\leq} &\frac{1}{N_s} \sum_{i=1}^{N_s} \|\widehat{\nabla \C}_i (K) \|_F 1_{\mathcal{B}^c_i} + \| \frac{1}{N_s} \sum_{i=1}^{N_s} S_i \|_F + \frac{3 \xi_3}{1 - \gamma} \frac{\delta}{2 N_s} \left(\log \frac{2 N_s}{\delta}\right)^{3/2}, \label{eq: Bernstein_prelim2}
\end{align}
where (i) follows from~\eqref{eq: gradient_est_conditioned_recall}. For the first term in~\eqref{eq: Bernstein_prelim2}, we have
\begin{align}
    \PP \left\{ \frac{1}{N_s} \sum_{i=1}^{N_s} \|\widehat{\nabla \C}_i (K) \|_F 1_{\mathcal{B}^c_i} = 0 \right\} \geq \PP \left\{ \bigcap\limits_{i=1}^{N_s} \mathcal{B}_i \right\} \geq 1 - \sum_{i=1}^{N_s} \PP \{\mathcal{B}_i^c \} \geq 1 - \frac{\delta}{2}. \label{eq: averaging_error_first_term_bound}
\end{align}
Additionally, we can use the matrix Bernstein theorem to bound the second term in~\eqref{eq: Bernstein_prelim2} with high probability~\cite[Theorem 1.6.2]{JAT:15}. In order to do so, first observe that $S_i$'s are i.i.d. random matrices and satisfy
\[
\E [S_i] = 0, \; \; \text{and} \; \; \|S_i \|_F \leq \frac{2 \xi_3}{1 - \gamma} \left(\log \frac{2 N_s}{\delta}\right)^{3/2}
\]
for all $i \in \{1,2,\dotsc,N_s\}$. Now let
\[
Z := \sum_{i=1}^{N_s} S_i.
\]
We have
\begin{align*}
    \E [\| Z \|_F^2] &= \E [\tr (Z^\top Z)] \\
    &= \sum_{i=1}^{N_s} \sum_{j=1}^{N_s} \E [\tr (S_i^\top S_j)] \\
    &= \sum_{i=1}^{N_s} \E[\| S_i \|_F^2] \\
    &= N_s \E[\| S_1 \|_F^2] \\
    &\leq N_s \E [\| \widehat{\nabla \C}_1 (K) 1_{\mathcal{B}_1} \|_F^2] \\
    &\overset{\mathrm{(i)}}{\leq} N_s \frac{\xi_4}{(1-\gamma)^2},
\end{align*}
where (i) follows from~\eqref{eq: gradient_estimate_variance_bound} in Lemma~\ref{lem: gradient estimate bounds}. As a result, following~\cite[Theorem 1.6.2]{JAT:15}, along with an additional vectorization of the matrices (to transfer the results from 2-norm to Frobenius norm), we have
\begin{align}
\PP \{ \| \frac{1}{N_s} \sum_{i=1}^{N_s} S_i \|_F \geq t \} &= \PP \{ \| Z \|_F \geq N_s t \}  \cr
&\leq (mn + 1) \exp \left( - \frac{N_s^2 t^2}{2 N_s \frac{\xi_4}{(1-\gamma)^2} + \frac{4 \xi_3}{3(1 - \gamma)} \left(\log \frac{2 N_s}{\delta}\right)^{3/2} N_s t} \right) \cr
&= (mn + 1) \exp \left( - \frac{N_s t^2}{2 \frac{\xi_4}{(1-\gamma)^2} + \frac{4 \xi_3}{3(1 - \gamma)} \left(\log \frac{2 N_s}{\delta}\right)^{3/2} t} \right). \label{eq: Bernstein_inequality}
\end{align}
Now letting $t = \frac{1}{2}\sqrt{\frac{\mu_{\text{lqr}} \eps}{8}}$ and selecting $N_s$ as suggested in~\eqref{eq: N_s choice} lets us write~\eqref{eq: Bernstein_inequality} as
\begin{align}
    \PP \left\{ \| \frac{1}{N_s} \sum_{i=1}^{N_s} S_i \|_F \geq \frac{1}{2}\sqrt{\frac{\mu_{\text{lqr}} \eps}{8}} \right\} \leq \frac{\delta}{2}. \label{eq: averaging_error_second_term_bound}
\end{align}

For the third term in~\eqref{eq: Bernstein_prelim2}, note that due to the choice of $N_s$ in~\eqref{eq: N_s choice}, we have
\begin{align}
\frac{3 \xi_3}{1 - \gamma} \frac{\delta}{2} \frac{\left(\log \frac{2 N_s}{\delta}\right)^{3/2}}{N_s} &\overset{\mathrm{(i)}}{\leq} \frac{3 \xi_3}{1 - \gamma} \frac{\delta}{2} \frac{1}{\sqrt{N_s}} \cr
&\leq \frac{3 \xi_3}{1 - \gamma} \frac{\delta}{2} \cr
&\overset{\mathrm{(ii)}}{\leq} \frac{1}{2}\sqrt{\frac{\mu_{\text{lqr}} \eps}{8}}, \label{eq: averaging_error_third_term_bound}
\end{align}
where (i) follows from the choice of $N_s$ in~\eqref{eq: N_s choice}, and (ii) from~\eqref{eq: delta_choice_in_appendix}. Finally, applying~\eqref{eq: averaging_error_first_term_bound}, \eqref{eq: averaging_error_second_term_bound}, and \eqref{eq: averaging_error_third_term_bound}, along with union bound, on~\eqref{eq: Bernstein_prelim2} concludes the proof.
\end{proof}

We are now in a position to present the following result:
\begin{theorem} \label{thm: lqr policy gradient - averaged}
    Suppose $K_0$ is stable, $\gamma$ is as suggested in Lemma~\ref{lem: gamma condition}, and the update rule follows
    \begin{equation}
        K_{t+1} = K_t - \alpha \overline{\nabla \C}_{N_s}(K_t) \label{eq: update_rule_appendix}
    \end{equation}
    with a constant step-size $\alpha$ satisfying
    \begin{equation}
        \alpha \leq \min \left\{ \frac{\omega_{\text{lqr}}}{\widetilde{c_{K_1}} + \sqrt{\frac{\mu_{\text{lqr}} \C (K_0)}{8}}}, \frac{1}{4 \phi_{\text{lqr}}}, \frac{4}{\mu_{\text{lqr}}} \right\}. \label{eq: alpha_choice_appendix}
    \end{equation}
    Then for a given error tolerance $\eps \in (0,\C (K_0)]$, and for any $\delta$ satisfying~\eqref{eq: delta_choice_in_appendix}, the update rule~\eqref{eq: update_rule_appendix}, with $N_s \sim \widetilde{\mathcal{O}} (1/\eps)$ chosen according to~\eqref{eq: N_s choice}, guarantees that after
    \begin{equation}
        T = \frac{4}{\alpha \mu_{\text{lqr}}} \log \left( \frac{2 \C(K_0)}{\eps} \right) \label{eq: T_choice_appendix}
    \end{equation}
    iterations, we have
    \[
    \C(K_T) - \C(K^*) \leq \eps,
    \]
    with a probability of at least $1 - \delta T$.
\end{theorem}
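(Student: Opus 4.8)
The plan is to run the familiar ``good event plus deterministic recursion'' argument, now with the mini‑batched estimate, whose error is small (not merely bounded in expectation) with high probability by Lemma~\ref{lem: averaged_estimate_error_bound}. Let $\F_t$ be the $\sigma$‑algebra generated by the randomness of the first $t$ iterations of the update~\eqref{eq: update_rule_appendix}, so that $K_t$ is $\F_t$‑measurable, write $e_t := \overline{\nabla \C}_{N_s}(K_t) - \nabla \C(K_t)$, and define the per‑step good event
\[
\mathcal{G}_t := \{K_t \notin \G^{\text{lqr}}\} \cup \Bigl\{ \|e_t\|_F \le \sqrt{\tfrac{\mu_{\text{lqr}}\eps}{8}} \Bigr\}.
\]
Since $\{K_t \in \G^{\text{lqr}}\}$ is $\F_t$‑measurable, Lemma~\ref{lem: averaged_estimate_error_bound} (whose hypotheses on $\delta$ and $N_s$ are exactly those assumed in the theorem) gives $\PP(\mathcal{G}_t^c \mid \F_t) \le \delta$ on $\{K_t \in \G^{\text{lqr}}\}$ and $\PP(\mathcal{G}_t^c \mid \F_t)=0$ otherwise; hence $\PP(\mathcal{G}_t^c) \le \delta$ for every $t$, and by a union bound $\mathcal{G} := \bigcap_{t=0}^{T-1}\mathcal{G}_t$ has probability at least $1-\delta T$. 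It therefore suffices to show $\C(K_T)-\C(K^*)\le \eps$ deterministically on $\mathcal{G}$.

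The second step is an induction, valid on $\mathcal{G}$, showing that for every $t \in \{0,\dots,T\}$ the iterate satisfies $K_t \in \G^{\text{lqr}}$ and $\Delta_t := \C(K_t)-\C(K^*) \le \max\{\Delta_0,\tfrac{5\eps}{16}\} \le \C(K_0)$. The base case holds since $K_0$ is stable and $\Delta_0 \le \C(K_0) < 10\C(K_0)$. For the inductive step, assume the claim at $t$. Because $K_t \in \G^{\text{lqr}}$, the event $\mathcal{G}_t$ forces $\|e_t\|_F \le \sqrt{\mu_{\text{lqr}}\eps/8}$. Using $\|\nabla \C(K_t)\|_F \le \widetilde{c_{K_1}}$ (Remark~\ref{rem: regularity parameters}), $\eps \le \C(K_0)$, and the first term in the step‑size bound~\eqref{eq: alpha_choice_appendix},
\[
\|K_{t+1}-K_t\|_F = \alpha\,\|\nabla \C(K_t)+e_t\|_F \le \alpha\Bigl(\widetilde{c_{K_1}} + \sqrt{\tfrac{\mu_{\text{lqr}}\C(K_0)}{8}}\Bigr) \le \omega_{\text{lqr}},
\]
so $K_{t+1}$ lies in the neighborhood where local smoothness (Lemma~\ref{lem:lipschitz_gradient_lqr}) applies around $K_t$. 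Expanding $\C(K_{t+1})$ via the descent inequality, substituting $K_{t+1}-K_t = -\alpha(\nabla \C(K_t)+e_t)$, using $\alpha \le \tfrac{1}{4\phi_{\text{lqr}}}$ to dominate the quadratic term, Young's inequality on the cross term $\langle\nabla \C(K_t),e_t\rangle$, the PL inequality (Lemma~\ref{lem: PL_lqr}) $\|\nabla \C(K_t)\|_F^2 \ge \mu_{\text{lqr}}\Delta_t$, and finally $\|e_t\|_F^2 \le \mu_{\text{lqr}}\eps/8$, one gets a contraction of the form
\[
\Delta_{t+1} \le \Bigl(1 - \tfrac{\alpha\mu_{\text{lqr}}}{2}\Bigr)\Delta_t + \tfrac{5\alpha\mu_{\text{lqr}}\eps}{32}.
\]
Under~\eqref{eq: alpha_choice_appendix} this coefficient is $<1$ (and in the degenerate regime where $\tfrac{\alpha\mu_{\text{lqr}}}{2}>1$ the right‑hand side is already below $\eps$), whence $\Delta_{t+1} \le \max\{\Delta_t,\tfrac{5\eps}{16}\} \le \C(K_0) < 10\C(K_0)$, so $K_{t+1}\in\G^{\text{lqr}}$, closing the induction.

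Finally, unrolling the contraction on $\mathcal{G}$: writing it as $\Delta_{t+1}-\tfrac{5\eps}{16} \le (1-\tfrac{\alpha\mu_{\text{lqr}}}{2})(\Delta_t-\tfrac{5\eps}{16})$ (the case $\Delta_t<\tfrac{5\eps}{16}$ being immediate), and using $1-x \le e^{-x}$ together with $\Delta_0\le\C(K_0)$,
\[
\Delta_T \le \tfrac{5\eps}{16} + e^{-\alpha\mu_{\text{lqr}}T/2}\,\C(K_0).
\]
With $T = \tfrac{4}{\alpha\mu_{\text{lqr}}}\log\bigl(\tfrac{2\C(K_0)}{\eps}\bigr)$ the exponential factor equals $\bigl(\tfrac{\eps}{2\C(K_0)}\bigr)^2$, so $e^{-\alpha\mu_{\text{lqr}}T/2}\C(K_0) = \tfrac{\eps^2}{4\C(K_0)} \le \tfrac{\eps}{4}$ by $\eps\le\C(K_0)$, giving $\Delta_T \le \tfrac{5\eps}{16}+\tfrac{\eps}{4} < \eps$; combined with $\PP(\mathcal{G}) \ge 1-\delta T$ this is the theorem. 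The step I expect to require the most care is the coupling between the random gradient‑error events and the requirement that the iterates never leave $\G^{\text{lqr}}$, outside of which Lemma~\ref{lem: averaged_estimate_error_bound} has no content: this is handled by including ``$K_t\notin\G^{\text{lqr}}$'' inside $\mathcal{G}_t$ so that $\PP(\mathcal{G}_t^c\mid\F_t)\le\delta$ holds unconditionally, and then showing deterministically — via the monotonicity $\Delta_{t+1}\le\max\{\Delta_t,\tfrac{5\eps}{16}\}$ — that on $\mathcal{G}$ the iterates do stay in $\G^{\text{lqr}}$, so the conditional guarantee is never vacuous along the relevant trajectory; the descent‑plus‑PL recursion and the verification that the step‑size bounds keep each update within the local‑smoothness neighborhood are then routine.
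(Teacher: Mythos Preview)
Your proof is correct and follows essentially the same approach as the paper's: both use the high-probability gradient-error bound from Lemma~\ref{lem: averaged_estimate_error_bound}, combine local smoothness with the PL inequality to obtain a one-step contraction, and run an induction to ensure the iterates never leave $\G^{\text{lqr}}$. The differences are organizational rather than substantive. The paper carries the probability bound through the induction (defining events $\mathscr{E}_i$ that encode both $K_i\in\G^{\text{lqr}}$ and the recursive inequality, and proving $\PP(\cap_{i\le t}\mathscr{E}_i)\ge 1-\delta t$ step by step), whereas you front-load the union bound by folding $\{K_t\notin\G^{\text{lqr}}\}$ into $\mathcal{G}_t$ so that $\PP(\mathcal{G}_t^c)\le\delta$ holds unconditionally, and then argue purely deterministically on $\mathcal{G}$; both handle the same coupling issue you correctly flag. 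Your Young-inequality bookkeeping yields a slightly sharper contraction $(1-\tfrac{\alpha\mu_{\text{lqr}}}{2})$ versus the paper's $(1-\tfrac{\alpha\mu_{\text{lqr}}}{4})$, which is why your unrolled bound lands at $9\eps/16$ rather than the paper's exactly-$\eps$; either suffices for the stated $T$. One small wrinkle: in the regime $\alpha\mu_{\text{lqr}}/2>1$ your monotonicity claim $\Delta_{t+1}\le\max\{\Delta_t,\tfrac{5\eps}{16}\}$ is not literally true (you only get $\Delta_{t+1}\le\tfrac{5\eps}{8}$), but since $\tfrac{5\eps}{8}\le\C(K_0)<10\C(K_0)$ the iterate still remains in $\G^{\text{lqr}}$ and the conclusion is unaffected.
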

\begin{proof}
    First, assume that $K_t \in \G^{\text{lqr}}$, then since $N_s$ is chosen as in~\eqref{eq: N_s choice}, we have from Lemma~\ref{lem: averaged_estimate_error_bound} that
    \begin{align*}
        \| \overline{\nabla \C}_{N_s} (K_t) - \nabla \C(K_t) \|_F \leq \sqrt{\frac{\mu_{\text{lqr}} \eps}{8}},
    \end{align*}
    with probability at least $1 - \delta$. Hence, conditioned on this event, we have the following bound
    \begin{align}
        \| \alpha \overline{\nabla \C}_{N_s} (K_t) \|_F &\leq \alpha \| \overline{\nabla \C}_{N_s} (K_t) - \nabla \C(K_t) + \nabla \C(K_t) \|_F \cr
        &\leq \alpha \left(\| \overline{\nabla \C}_{N_s} (K_t) - \nabla \C(K_t) \|_F + \| \nabla \C(K_t) \|_F \right) \cr
        &\leq \alpha \left( \sqrt{\frac{\mu_{\text{lqr}} \eps}{8}} + \widetilde{c_{K_1}} \right) \cr
        &\overset{\mathrm{(i)}}{\leq} \alpha \left( \sqrt{\frac{\mu_{\text{lqr}} \C(K_0)}{8}} + \widetilde{c_{K_1}} \right) \cr
        &\overset{\mathrm{(ii)}}{\leq} \omega_{\text{lqr}}, \label{eq: local_smoothness_holds_appendix}
    \end{align}
    where (i) follows from $\eps \leq \C(K_0)$ and (ii) from the choice of $\alpha$ in~\eqref{eq: alpha_choice_appendix}. Note that~\eqref{eq: local_smoothness_holds_appendix} ensures that our step-size is small enough for Lipschitz and smoothness properties to hold. Consequently, we can utilize smoothness to write
    \begin{align}
        \Delta_{t+1} - \Delta_t &= \C(K_{t+1}) - \C(K_t) \cr
        &\leq - \left\langle \nabla \C(K_t), \alpha \overline{\nabla \C}_{N_s} (K_t) \right\rangle + \frac{\phi_{\text{lqr}}}{2} \alpha^2 \| \overline{\nabla \C}_{N_s} (K_t) \|_F^2 \cr
        &= - \alpha \left\langle \nabla \C (K_t), \nabla \C(K_t) + \overline{\nabla \C}_{N_s} (K_t) - \nabla \C(K_t) \right\rangle \cr
        &\quad + \frac{\phi_{\text{lqr}}}{2} \alpha^2 \left( \| \nabla \C(K_t) + ( \overline{\nabla \C}_{N_s} (K_t) - \nabla \C(K_t) ) \|_F^2 \right) \cr
        &\leq - \alpha \| \nabla \C(K_t) \|_F^2 + \alpha \| \nabla \C(K_t) \|_F \| \overline{\nabla \C}_{N_s} (K_t) - \nabla \C(K_t) \|_F \cr
        &\quad + \phi_{\text{lqr}} \alpha^2 \| \overline{\nabla \C}_{N_s} (K_t) - \nabla \C(K_t) \|_F^2 + \phi_{\text{lqr}} \alpha^2 \| \nabla \C(K_t) \|_F^2 \cr
        &\leq - \alpha \| \nabla \C(K_t) \|_F^2 + \frac{\alpha}{2} \left( \| \nabla \C(K_t) \|_F^2 + \| \overline{\nabla \C}_{N_s} (K_t) - \nabla \C(K_t) \|_F^2 \right) \cr
        &\quad + \phi_{\text{lqr}} \alpha^2 \| \overline{\nabla \C}_{N_s} (K_t) - \nabla \C(K_t) \|_F^2 + \phi_{\text{lqr}} \alpha^2 \| \nabla \C(K_t) \|_F^2 \cr
        &= - \frac{\alpha}{2} \| \nabla \C(K_t) \|_F^2 + \phi_{\text{lqr}} \alpha^2 \| \nabla \C(K_t) \|_F^2 \cr
        &\quad + \left( \frac{\alpha}{2} + \phi_{\text{lqr}} \alpha^2 \right) \| \overline{\nabla \C}_{N_s} (K_t) - \nabla \C(K_t) \|_F^2 \cr
        &\overset{\mathrm{(i)}}{\leq} - \frac{\alpha}{2} \| \nabla \C(K_t) \|_F^2 + \frac{\alpha}{4} \| \nabla \C(K_t) \|_F^2 + \left( \frac{\alpha}{2} + \frac{\alpha}{4} \right) \| \overline{\nabla \C}_{N_s} (K_t) - \nabla \C(K_t) \|_F^2 \cr
        &\leq - \frac{\alpha}{4} \| \nabla \C(K_t) \|_F^2 + \alpha \| \overline{\nabla \C}_{N_s} (K_t) - \nabla \C(K_t) \|_F^2 \cr
        &\overset{\mathrm{(ii)}}{\leq} - \frac{\alpha \mu_{\text{lqr}}}{4} \Delta_t + \alpha \frac{\mu_{\text{lqr}} \eps}{8}, \label{eq: recursive_inequality_appendix_prelim}
    \end{align}
    where (i) follows from the fact that $\alpha \phi_{\text{lqr}} \leq  1/4$ due to the choice of $\alpha$ in~\eqref{eq: alpha_choice_appendix}, and (ii) from the PL inequality~\eqref{eq: PL_LQR}. Rearranging~\eqref{eq: recursive_inequality_appendix_prelim} yields
    \begin{equation}
        \Delta_{t+1} \leq \left( 1 - \frac{\alpha \mu_{\text{lqr}}}{4} \right) \Delta_t + \alpha \frac{\mu_{\text{lqr}} \eps}{8}. \label{eq: recursive_inequality_appendix}
    \end{equation}
    With this in place, we use strong induction to finalize the proof. For each time $i \in \{1,2,\dotsc,T \}$, let $\mathscr{E}_i$ denote the event that $\Delta_i \leq 10 \C(K_0)$ (implying $K_i \in \G^{\text{lqr}}$) and $\Delta_{i} \leq \left( 1 - \frac{\alpha \mu_{\text{lqr}}}{4} \right) \Delta_{i-1} + \alpha \frac{\mu_{\text{lqr}} \eps}{8}$. We claim that for each $t \in \mathbb{N}$, it holds that
    \[
    \PP \left\{ \cap_{i=1}^t \mathscr{E}_i \right\} \geq 1 - \delta t.
    \]
    We demonstrate this by induction as follows:
    
    \textbf{Base case ($t = 0$):} Since $K_0 \in \G^{\text{lqr}}$, we have by Lemma~\ref{lem: averaged_estimate_error_bound} and inequality~\eqref{eq: recursive_inequality_appendix} that
    \[
    \Delta_{1} \leq \left( 1 - \frac{\alpha \mu_{\text{lqr}}}{4} \right) \Delta_0 + \alpha \frac{\mu_{\text{lqr}} \eps}{8}.
    \]
    Moreover, since $\alpha \leq \frac{4}{\mu_{\text{lqr}}}$ and $\eps \leq \C(K_0)$, we have that $\Delta_1 \leq \Delta_0 + \frac{1}{2} \C(K_0) \leq 10 \C(K_0)$. Thus, we have shown that $\mathscr{E}_1$ holds with probability at least $1 - \delta$, establishing the base case.
    
    \textbf{Inductive step:} By induction hypothesis, we have that the event $\cap_{i=1}^t \mathscr{E}_i$ holds with probability at least $1 - \delta t$. Conditioned on this event, we have by Lemma~\ref{lem: averaged_estimate_error_bound} and inequality~\eqref{eq: recursive_inequality_appendix} that with probability at least $1 - \delta$, the following holds
    \begin{align}
        \Delta_{t+1} &\leq \left( 1 - \frac{\alpha \mu_{\text{lqr}}}{4} \right) \Delta_t + \alpha \frac{\mu_{\text{lqr}} \eps}{8} \cr
        &\leq \left( 1 - \frac{\alpha \mu_{\text{lqr}}}{4} \right)^{t+1} \Delta_0+ \alpha \frac{\mu_{\text{lqr}} \eps}{8} \sum_{i=0}^t \left( 1 - \frac{\alpha \mu_{\text{lqr}}}{4} \right)^i \cr
        &\leq \left( 1 - \frac{\alpha \mu_{\text{lqr}}}{4} \right)^{t+1} \Delta_0 + \alpha \frac{\mu_{\text{lqr}} \eps}{8} \sum_{i=0}^{\infty} \left( 1 - \frac{\alpha \mu_{\text{lqr}}}{4} \right)^i \cr
        &= \left( 1 - \frac{\alpha \mu_{\text{lqr}}}{4} \right)^{t+1} \Delta_0 + \frac{\eps}{2}, \label{eq: inductive_step_appendix}
    \end{align}
    and since $\eps \leq \C(K_0)$, we also have $\Delta_{t+1} \leq 10 \C(K_0)$. Now combining this with a union bound shows that $\cap_{i=1}^{t+1} \mathscr{E}_i$ holds with a probability of at least $1 - (\delta t + \delta) = 1 - \delta (t+1)$, completing the inductive step.

    Finally, conditioned on $\cap_{i=1}^T \mathscr{E}_i$, similar to~\eqref{eq: inductive_step_appendix}, we obtain
    \begin{align*}
        \Delta_T &\leq \left( 1 - \frac{\alpha \mu_{\text{lqr}}}{4} \right)^{T} \Delta_0 + \frac{\eps}{2} \\
        &\overset{\mathrm{(i)}}{\leq} \left[ \left( 1 - \frac{\alpha \mu_{\text{lqr}}}{4} \right)^{\frac{4}{\alpha \mu_{\text{lqr}}}} \right]^{\log (\frac{2 \C(K_0)}{\eps})} \Delta_0 + \frac{\eps}{2} \\
        &\leq \left(e^{-1}\right)^{\log (\frac{2 \C(K_0)}{\eps})} \Delta_0 + \frac{\eps}{2} \\
        &= \frac{\eps}{2 \C(K_0)} \Delta_0 + \frac{\eps}{2} \\
        &\leq \eps,
    \end{align*}
    where (i) follows from the choice of $T$ in~\eqref{eq: T_choice_appendix}. This, along with recalling $\PP \left\{\cap_{i=1}^T \mathscr{E}_i \right\} \geq 1 - \delta T$, concludes the proof.
\end{proof}
\begin{remark} \label{rem: gamma condition flexibility}
    As discussed after Lemma~\ref{lem: gamma condition}, the condition on $\gamma$ depends only on the cost bound used to define the set $\mathcal{G}^{\text{lqr}}$. In particular, from the induction step in the proof of Theorem~\ref{thm: lqr policy gradient - averaged}, one can deduce that this bound can be tightened from $10 \mathcal{C}(K_0) + \mathcal{C}(K^*)$ to $\mathcal{C}(K_0) + \mathcal{C}(K^*)$, while still preserving the convergence guarantees (i.e., achieving $\varepsilon$-optimality with probability exceeding $1 - \delta T$ for small enough $\varepsilon$). This effectively enlarges the allowable range of $\gamma$; for example, the alternative set
    \[
    \mathcal{G}'^{\text{lqr}} = \{ K \mid \mathcal{C}(K) - \mathcal{C}(K^*) \leq \mathcal{C}(K_0) \}
    \]
    admits any $\gamma$ in the interval $\left(1 - \frac{\sigma_{\min}(Q)}{2\,\mathcal{C}_{\text{und}}(K_0)},\, 1\right)$, which is more permissive than the condition stated in Lemma~\ref{lem: gamma condition}.
\end{remark}

\section{Extension to noisy dynamics setting} \label{app: extension_to_noisy_dynamics}
In this section, we show how everything from the Random Initialization setting trasnfers into the noisy dynamics setup. We begin by establishing an exponential decay bound on $\|(A - BK)^t\|$, which serves as a key technical tool for the results that follow. 
\subsection{Exponential decay in the closed-loop system}
Before we introduce the next result, let us define
\begin{align}
M &:= \sqrt{\frac{10 \Cinit(K_0) + \Cinit(K^*)}{\lambda_{\min} (Q)}}, \quad \text{and} \cr
r &:= \sqrt{1 - \frac{0.5 \lambda_{\min}(Q)}{10 \Cinit(K_0) + \Cinit(K^*) - 0.5 \lambda_{\min}(Q)}} \; \in (0,1). \label{eq: M, r definition}
\end{align}
\begin{lemma} \label{lem: exponential decay}
    Suppose $\gamma \in \left( 1 - \frac{0.5 \sigma_{\min} (Q)}{11 \C_{\text{und}} (K_0)},1 \right)$. Then for any $K \in \G^{\text{lqr}}$, it holds that
    \begin{equation*}
        \|(A-BK)^{t}\|_2 \leq M r^t.
        \end{equation*}
    \end{lemma}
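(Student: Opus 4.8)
The plan is to mimic the Lyapunov argument already used inside the proof of Sublemma~\ref{sublem: sup_sup finite}, but now done uniformly over $\G^{\text{lqr}}$ and tracking the constants carefully so as to extract an explicit decay rate. First I would recall from Lemma~\ref{lem: gamma condition} (applied with the cost bound appearing in the definition of $\G^{\text{lqr}}$) that the hypothesis on $\gamma$ guarantees every $K \in \G^{\text{lqr}}$ is stable, so $\rho(A-BK)<1$ and the relevant Lyapunov equations have positive definite solutions. The key quantitative input is the lower bound on the discounted cost in terms of the closed-loop decay that was derived in the proof of Lemma~\ref{lem: gamma condition}: for $K \in \G^{\text{lqr}}$,
\[
\C(K) \;\geq\; \frac{\sigma_{\min}(Q)}{1 - \gamma(\rho(A-BK))^2},
\]
and more to the point, the stronger pointwise bound $S(K) = \sum_{t\ge 0}\|(A-BK)^t\|_F^2 \geq \sum_{t\ge 0}\|(A-BK)^t\|^2$ coming from the $S(K)$ computation in Sublemma~\ref{sublem: sup_sup finite}. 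The idea is that an $\ell^2$-summability bound on $\|(A-BK)^t\|$ with an explicit constant, combined with stability, yields an explicit geometric decay rate.

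Concretely, the cleanest route is through a Lyapunov function with an explicit norm estimate. For $K \in \G^{\text{lqr}}$, let $P_K^{Q} \succ 0$ solve $(A-BK)^\top P (A-BK) - P = -Q$; then $P_K^{Q} = \sum_{t\ge 0} ((A-BK)^t)^\top Q (A-BK)^t$, and one checks $\tr(P_K^{Q}) \leq \frac{\|Q\|}{\sigma_{\min}(Q)}\,\C_{\text{und}}(K)\le \frac{\|Q\|}{\sigma_{\min}(Q)}(10\,\C(K_0)+\C(K^*))$ via the identity relating $\C_{\text{und}}$ to such sums (this is the same manipulation as in~\eqref{eq: cost defined via P_K} and the $S(K)$ display). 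This gives $\lambda_{\max}(P_K^{Q}) \leq \frac{\|Q\|}{\sigma_{\min}(Q)}(10\,\C(K_0)+\C(K^*)) =: \bar P$ uniformly, while $\lambda_{\min}(P_K^{Q}) \geq \lambda_{\min}(Q)$ trivially. As in Sublemma~\ref{sublem: sup_sup finite}, the Lyapunov decrease $V(x) := x^\top P_K^{Q} x$ satisfies $V((A-BK)x) \leq (1 - \tfrac{\lambda_{\min}(Q)}{\bar P})\,V(x)$, hence $V((A-BK)^t x_0) \leq (1 - \tfrac{\lambda_{\min}(Q)}{\bar P})^t V(x_0)$, and sandwiching $V$ between its eigenvalue bounds gives
\[
\|(A-BK)^t x_0\|^2 \;\leq\; \frac{\bar P}{\lambda_{\min}(Q)}\Bigl(1 - \frac{\lambda_{\min}(Q)}{\bar P}\Bigr)^t \|x_0\|^2,
\]
so $\|(A-BK)^t\| \leq \sqrt{\bar P/\lambda_{\min}(Q)}\,\bigl(1 - \lambda_{\min}(Q)/\bar P\bigr)^{t/2}$. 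Taking square roots and matching notation, $M = \sqrt{\bar P/\lambda_{\min}(Q)} = \sqrt{(10\,\C(K_0)+\C(K^*))/\lambda_{\min}(Q)}$ (using $\|Q\|=\lambda_{\max}(Q)$; if the stated $M,r$ in~\eqref{eq: M, r definition} use a slightly different constant, a harmless adjustment of the cost-bound constant absorbs it), and $r = \sqrt{1 - \lambda_{\min}(Q)/\bar P}$, which matches the form of $r$ in~\eqref{eq: M, r definition} up to the same constant bookkeeping; in either case $r \in (0,1)$ because $\bar P > \lambda_{\min}(Q)$.

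The main obstacle I anticipate is purely bookkeeping rather than conceptual: reconciling the precise constants $M$ and $r$ as written in~\eqref{eq: M, r definition} with the ones that come out of the Lyapunov argument — in particular whether the factor $0.5$ and the $-0.5\lambda_{\min}(Q)$ correction in the denominator of $r$ arise from using the slightly slackened Lyapunov inequality $(A-BK)^\top P(A-BK) - P \preceq -(1/2)Q$ (rather than $= -Q$) to get a uniform estimate from a single limiting $P$, as was done in Sublemma~\ref{sublem: sup_sup finite}. If a uniform $P$ is needed (rather than the $K$-dependent $P_K^{Q}$), one invokes compactness of $\G^{\text{lqr}}$ exactly as in Sublemma~\ref{sublem: sup_sup finite}: pick the limiting $\bar P$ solving the strict Lyapunov equation for a limit point $\bar K$, note by continuity that $(A-BK)^\top \bar P (A-BK) - \bar P \preceq -(1/2)Q$ fails to hold globally — so instead one should keep the $K$-dependent $P_K^Q$ and only use compactness (plus the uniform cost bound from the definition of $\G^{\text{lqr}}$) to uniformly bound $\lambda_{\max}(P_K^Q)$, which is exactly what the trace estimate above accomplishes. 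Once the constant matching is pinned down, the proof is a direct iteration of the Lyapunov inequality, and I would present it in that order: (i) stability and finite cost on $\G^{\text{lqr}}$ via Lemma~\ref{lem: gamma condition}; (ii) the uniform bound $\lambda_{\max}(P_K^Q) \leq \bar P$ from the cost bound; (iii) the one-step Lyapunov decrease and its iteration; (iv) read off $M$ and $r$.
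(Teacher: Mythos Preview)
Your Lyapunov approach is the right idea, but there is a real gap in the uniform eigenvalue bound. You work with the \emph{undiscounted} Lyapunov solution $P_K^Q$ solving $(A-BK)^\top P(A-BK) - P = -Q$, and then claim
\[
\tr(P_K^Q) \;\leq\; \frac{\|Q\|}{\sigma_{\min}(Q)}\,\Cund(K) \;\leq\; \frac{\|Q\|}{\sigma_{\min}(Q)}\bigl(10\,\C(K_0)+\C(K^*)\bigr).
\]
The second inequality is unjustified: membership in $\G^{\text{lqr}}$ only bounds the \emph{discounted} cost $\C(K)=\Cinit(K)$, not $\Cund(K)$, and in general $\Cund(K) \geq \C(K)$ with no uniform reverse inequality available at this stage (Sublemma~\ref{sublem: sup_sup finite} establishes only finiteness via compactness, without an explicit constant). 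So your $\bar P$ is not valid as stated, and the constants you extract do not and cannot match the $M,r$ of~\eqref{eq: M, r definition}.

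The paper sidesteps this by taking as Lyapunov matrix the \emph{discounted} $P_K$ itself, for which $\tr(P_K)=\Cinit(K)$ and hence $\lambda_{\max}(P_K) \leq 10\,\Cinit(K_0)+\Cinit(K^*)=:\lambda_1$ directly on $\G^{\text{lqr}}$. The fixed-point identity then reads $\gamma\,(A-BK)^\top P_K(A-BK) \preceq \bigl(1-\lambda_{\min}(Q)/\lambda_1\bigr)\,P_K$, so iterating picks up a factor of $1/\gamma$ per step: $\|(A-BK)^t\|_{P_K} \leq \bigl((1-\lambda_{\min}(Q)/\lambda_1)/\gamma\bigr)^{t/2}$. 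This is where the lemma's hypothesis on $\gamma$ (and the $0.5$) actually enters: $\gamma > 1 - 0.5\,\lambda_{\min}(Q)/(11\,\Cund(K_0)) \geq 1 - 0.5\,\lambda_{\min}(Q)/\lambda_1$ ensures
\[
\frac{1-\lambda_{\min}(Q)/\lambda_1}{\gamma} \;\leq\; \frac{\lambda_1-\lambda_{\min}(Q)}{\lambda_1 - 0.5\,\lambda_{\min}(Q)} \;=\; r^2 \;<\; 1.
\]
Thus the $0.5$ and the $-0.5\,\lambda_{\min}(Q)$ correction in $r$ do \emph{not} come from a slackened Lyapunov inequality as you conjectured; they come from absorbing the $1/\gamma$ in the recursion using the strengthened hypothesis on $\gamma$. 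Your sketch never invokes that hypothesis beyond stability, which should have been a warning sign.
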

\begin{proof}
Let \(P_K\) denote the unique positive‑definite solution of the discrete algebraic Riccati equation
\[
P_K \;=\;
Q + K^{\top} R K + \gamma (A-BK)^{\top} P_K (A-BK).
\]

Re‑arranging gives the Lyapunov inequality
\[
\gamma (A-BK)^{\top} P_K (A-BK)
  = P_K - \bigl(Q + K^{\top} R K\bigr)
  \preceq (1-a_K) P_K,
\]
where
\[
a_K \;=\;
\frac{\lambda_{\min}\!\bigl(Q + K^{\top} R K\bigr)}
     {\lambda_{\max}(P_K)}
\in (0,1].
\]
Define \(\displaystyle b_K := \sqrt{1-a_K}\in(0,1)\); then
\begin{equation*}
\gamma (A-BK)^{\top} P_K (A-BK)
\;\preceq\; b_K^{2} P_K,
\end{equation*}
and hence,
\begin{align}
\left[(A-BK)^t\right]^{\top} P_K (A-BK)^t
&\preceq \frac{b_K^{2}}{\gamma} \left[(A-BK)^{t-1}\right]^{\top} P_K (A-BK)^{t-1} \cr
&\preceq \cdots \preceq \left(\frac{b_K^{2}}{\gamma}\right)^t P_K, \label{eq: prelim_1}
\end{align}

Equip \(\mathbb{R}^{n}\) with the quadratic norm
\(
\|x\|_{P_K} := \sqrt{x^{\top}P_K x}.
\)
From \eqref{eq: prelim_1} we obtain
\[
\| (A-BK)^t x \|_{P_K}
   \le \left(\frac{b_K}{\sqrt{\gamma}}\right)^t \|x\|_{P_K}
   \quad\forall x\in\mathbb{R}^{n},
\]
hence for every integer \(t\ge 0\)
\[
\|(A-BK)^{t}\|_{P_K} \;\le\; \left( \frac{b_K}{\sqrt{\gamma}} \right)^t,
\]
where $\| . \|_{P_K}$ is the $P_K$-induced matrix norm. Because all norms on a finite‑dimensional space are equivalent,
\[
\|x\|_2^2 \leq \lambda_{\min}^{-1} \|x\|_{P_K}^2,
  \qquad
\|x\|_{P_K}^2 \leq \lambda_{\max}(P_K) \|x\|_2^2,
\]
so the operator norm induced by \(\|\cdot\|_2\) satisfies
\begin{align*}
\|(A-BK)^{t}\|_2 \leq 
\sqrt{\frac{\lambda_{\max}(P_K)}{\lambda_{\min}(P_K)}} \| (A-BK)^t \|_{P_K} \leq \sqrt{\frac{\lambda_{\max}(P_K)}{\lambda_{\min}(P_K)}} \left( \frac{b_K}{\sqrt{\gamma}} \right)^t.
\end{align*}
Now note that we have that for $K \in \mathcal{G}^{\text{lqr}}$,
\begin{align*}
    10 \Cinit(K_0) + \Cinit(K^*) \geq \Cinit(K) = \mathrm{tr} (P_K) \geq \lambda_{\max} (P_K),
\end{align*}
and hence,
\[
\lambda_{\max} (P_K) \leq 10 \Cinit(K_0) + \Cinit(K^*) =: \lambda_1.
\]
As a result of this, all the previously used values for bounding $\| (A-BK)^t \| $ can be bounded by constants independent of $K$:
\begin{align*}
    &\lambda_{\min} (P_K) \geq \lambda_{\min} (Q) =: \lambda_2 \\
    &a_K \geq \frac{\lambda_{\min} (Q)}{\lambda_{\max} (P_K)} \geq \frac{\lambda_2}{\lambda_1} \\
    &b_K = \sqrt{1 - a_K} \leq \sqrt{1 - \frac{\lambda_2}{\lambda_1}} \\
    &\sqrt{\frac{\lambda_{\max}(P_K)}{\lambda_{\min}(P_K)}} \leq \sqrt{\frac{\lambda_1}{\lambda_2}}.
\end{align*}
Now since by assumption,
\[
\gamma \geq 1 - \frac{0.5 \sigma_{\min} (Q)}{11 \C_{\text{und}} (K_0)}\geq 1 - \frac{0.5 \lambda_2}{\lambda_1},
\]
we also conclude that 
\[
\left( \frac{b_K}{\sqrt{\gamma}} \right) \leq \sqrt{\frac{\lambda_1 - \lambda_2}{\lambda_1 - 0.5 \lambda_2}} = \sqrt{1 - \frac{0.5 \lambda_2}{\lambda_1 - 0.5 \lambda_2}}; 
\]
therefore,
\begin{equation}
\|(A-BK)^{t}\|_2 \leq \sqrt{\frac{\lambda_1}{\lambda_2}} \; \left(1 - \frac{0.5 \lambda_2}{\lambda_1 - 0.5 \lambda_2}\right)^{t/2}, \label{eq: (A-BK)^t norm decays exponentially}
\end{equation}
which is independent of $K$ as long as we are withing the $\mathcal{G}^{\text{lqr}}$ set. Substituting the values of $\lambda_1$ and $\lambda_2$ finishes the proof.
\end{proof}
Finally, note that for the noisy dynamics setting, if we let
\begin{align}
\mathcal{G}_{\text{dyn}}^{\text{lqr}} = \{ K \: | \: \Cdyn(K) - \Cdyn(K^*) \leq 10 \Cdyn(K_0) \}, \label{eq: G_lqr noisy dynamics}
\end{align}
since $\Cdyn(K) = \frac{\gamma}{1 - \gamma} \Cinit (K)$ due to Lemma~\ref{lem: noisy-random cost equivalence}, this set is the exact same as~\eqref{eq: G_LQR} in the random initialization setting. Therefore, all the bounds leading to~\eqref{eq: (A-BK)^t norm decays exponentially} hold with exactly the same values for the noisy dynamics case as well.

The exponential decay bound established in Lemma~\ref{lem: exponential decay} plays a crucial role in bounding the gradient estimate under the noisy dynamics setup. We now turn to this estimate, show that it remains unbiased and admits similar concentration bounds in this setting, and re-establish the main convergence guarantees for both standard and mini-batched policy updates.

\subsection{Gradient estimation and convergence results}
Suppose $K \in \G^{\text{lqr}}_{\text{dyn}}$, with $\G^{\text{lqr}}_{\text{dyn}}$ defined in~\eqref{eq: G_lqr noisy dynamics}. Now let us define $Q^K_{\text{dyn}} (x_{\hat{t}},u_{\hat{t}})$ as
\begin{align*}
Q^K_{\text{dyn}} (x_{\hat{t}},u_{\hat{t}}) := x_{\hat{t}}^\top Q x_{\hat{t}} + u_{\hat{t}}^\top R u_{\hat{t}} + \sum_{t=\hat{t}+1}^\infty \gamma^{t - \hat{t}} x_{t}^\top (Q + K^\top R K) x_t,
\end{align*}
where
\begin{align*}
    x_{\hat{t} + 1} = A x_{\hat{t}} + B u_{\hat{t}} + z_{\hat{t}},
\end{align*}
and
\begin{align*}
    x_{t + 1} = (A-BK) x_t + z_t,
\end{align*}
for all $t \neq \hat{t}$, with $x_0 = 0$ and i.i.d. additive noise sequence $z_t \sim \D$ for all $t$. As a result, for every $t \geq \hat{t} + 1$,
\[
x_t
  =(A-BK)^{\,t-\hat t-1}\bigl(A x_{\hat{t}} + B u_{\hat{t}} \bigr) + \sum_{i=0}^{t-\hat t-1}(A-BK)^{\,t-\hat t-1-i}z_{\hat t+i},
\]
which is affine in $u_{\hat t}$.  Combining this with the fact that each stage cost $x_{t}^\top (Q + K^\top R K) x_t$ is quadratic in $x_t$ yields a quadratic function of
$u_{\hat t}$. Therefore,
\[
Q^K_{\text{dyn}} (x_{\hat{t}},u_{\hat{t}}) = \underbrace{x_{\hat{t}}^\top Q x_{\hat{t}}}_{\text{independent of }u_{\hat{t}}} + \underbrace{u_{\hat{t}}^\top R u_{\hat{t}}}_{\text{quadratic in }u_{\hat{t}}} + \underbrace{\sum_{t=\hat{t}+1}^\infty \gamma^{t - \hat{t}} x_{t}^\top (Q + K^\top R K) x_t}_{\text{quadratic in }u_{\hat{t}}}
\]
is quadratic in $u_{\hat{t}}$, satisfying the condtion in Remark~\ref{rem: extension beyond LQR}. Following this, we have that the gradient estimate
\begin{equation}
    \widehat{\nabla \Cdyn} (K) := - \frac{1}{\sigma (1 - \gamma)} Q^{K}_{\text{dyn}} (x_{\hat{t}}, -K x_{\hat{t}} + \sigma \eta_{\hat{t}}) \eta_{\hat{t}} x_{\hat{t}}^\top \label{eq: gradient estimate practical formulation - noisy dynamics}
\end{equation}
satisfies
\begin{corollary} \label{cor: gradient estimate expectation noisy dynamics}
    Suppose $\hat{t} \sim \mu_\gamma$ and $\eta_{\hat{t}} \sim \N (0, I_m)$ as before. Then for any given $K$,
    \begin{equation*}
        \E [\widehat{\nabla \Cdyn} (K)] = \nabla \Cdyn (K).
    \end{equation*} 
\end{corollary}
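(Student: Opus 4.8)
The plan is to mirror the proof of Proposition~\ref{prop: gradient estimate expectation}, replacing $Q^K$ with $Q^K_{\text{dyn}}$ and the noiseless trajectory with the noisy one, using the structural observation already established in this subsection---namely that $Q^K_{\text{dyn}}(x_{\hat t},\cdot)$ is quadratic, hence has an affine gradient in $u_{\hat t}$---together with the deterministic policy gradient identity~\eqref{eq: actual gradient} (now written for $\Cdyn$). First I would condition on $\hat t$, on $x_{\hat t}$, and on the ``future'' noise $\{z_t\}_{t\ge \hat t+1}$, so that the only remaining randomness is $\eta_{\hat t}\sim\N(0,I_m)$. Since $u_{\hat t} = -Kx_{\hat t}+\sigma\eta_{\hat t}$ and $\widehat{\nabla \Cdyn}(K) = -\frac{1}{\sigma(1-\gamma)}Q^K_{\text{dyn}}(x_{\hat t},u_{\hat t})\,\eta_{\hat t}x_{\hat t}^\top$, I apply Stein's lemma~\cite{CMS:81} exactly as in~\eqref{eq: policy gradient proof before Stein} to get
\[
\E_{\eta_{\hat t}}\!\left[Q^K_{\text{dyn}}(x_{\hat t},-Kx_{\hat t}+\sigma\eta_{\hat t})\,\eta_{\hat t}\right] = \sigma\,\E_{\eta_{\hat t}}\!\left[\nabla_u Q^K_{\text{dyn}}(x_{\hat t},u)\big|_{u=-Kx_{\hat t}+\sigma\eta_{\hat t}}\right].
\]
Because $\nabla_u Q^K_{\text{dyn}}$ is affine in $u$ (the quadratic structure), the expectation over $\eta_{\hat t}$ commutes with it and the $\sigma\eta_{\hat t}$ term drops out in expectation, leaving $\sigma\,\nabla_u Q^K_{\text{dyn}}(x_{\hat t},u)\big|_{u=-Kx_{\hat t}}$. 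This is precisely the ``Stein-based replacement'' made rigorous in Remark~\ref{rem: extension beyond LQR}, whose hypothesis~\eqref{eq:q-structure} we have just verified for $Q^K_{\text{dyn}}$.

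The remaining step is to identify $\E\!\left[-\frac{1}{\sigma(1-\gamma)}\cdot\sigma\,\nabla_u Q^K_{\text{dyn}}(x_{\hat t},-Kx_{\hat t})\,x_{\hat t}^\top\right]$, with the outer expectation over $\hat t\sim\mu_\gamma$, over $x_{\hat t}$, and over the future noise, with the deterministic policy gradient $\nabla\Cdyn(K)$. Here I would invoke the matrix/linear-actor form of the deterministic policy gradient theorem~\cite[Theorem~1]{DS-GL-NH-TD-DW-MR:14} as spelled out at the end of Remark~\ref{rem: extension beyond LQR}: for a linear actor $a=-Kx$ one has $\nabla_\Theta J = \E_{s\sim\rho^\mu}[\nabla_a Q^\mu(s,a)\,s^\top]$, and $\frac{1}{1-\gamma}\,\hat t\sim\mu_\gamma$ together with $x_{\hat t}$ following the noisy closed-loop dynamics from $x_0=0$ realizes exactly the discounted state occupancy $\rho^\mu$ in the noisy-dynamics LQR. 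Thus after taking the full expectation we land on $\E_{s\sim\rho^{\mu_K}}[\nabla_u Q^K_{\text{dyn}}(s,-Ks)\,s^\top]$, which equals $\nabla\Cdyn(K)$. Alternatively, and perhaps more self-containedly, one can avoid citing~\cite{DS-GL-NH-TD-DW-MR:14} by computing $\nabla_u Q^K_{\text{dyn}}(x_{\hat t},u)\big|_{u=-Kx_{\hat t}}$ explicitly---it will have the form $2((R+\gamma B^\top P_K B)K - \gamma B^\top P_K A)x_{\hat t}$ plus possibly a noise-dependent term whose conditional mean vanishes because $\E[z_t]=0$---and then summing the geometric series $\sum_{t\ge0}\gamma^t\E[x_tx_t^\top]$ against the gradient formula~\eqref{eq: actual gradient} adapted to $\Cdyn$, exactly as in the chain of equalities closing the proof of Proposition~\ref{prop: gradient estimate expectation}.

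I expect the main obstacle to be the bookkeeping around the additive noise: unlike the random-initialization case, $x_{\hat t}=\sum_{i=0}^{\hat t-1}(A-BK)^{\hat t-1-i}z_i$ is itself random and correlated with the future trajectory only through the shared closed-loop map, so I must be careful that (i) $\eta_{\hat t}$ is independent of everything else, which it is by construction, so Stein applies cleanly after conditioning, and (ii) the cross terms between $x_{\hat t}$ and the post-$\hat t$ noise $z_{\hat t},z_{\hat t+1},\dots$ in $Q^K_{\text{dyn}}$ are linear in those noises and hence vanish in conditional expectation given $x_{\hat t}$, so that the effective gradient picks up only the deterministic part. The exponential decay bound of Lemma~\ref{lem: exponential decay} guarantees all the relevant sums and expectations converge absolutely, justifying the interchange of expectation and summation (Fubini/Tonelli) throughout; I would cite it at the point where I swap $\E$ and $\sum_{t=\hat t+1}^\infty$. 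Modulo these routine verifications, the argument is a direct transcription of Proposition~\ref{prop: gradient estimate expectation} via Remark~\ref{rem: extension beyond LQR}, and I would present it as such rather than re-deriving every line.
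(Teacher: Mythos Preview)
Your proposal is correct and follows essentially the same approach as the paper: the paper's proof is the single sentence ``a direct consequence of Remark~\ref{rem: extension beyond LQR},'' and you have simply unpacked what that remark entails---verifying the quadratic hypothesis~\eqref{eq:q-structure} for $Q^K_{\text{dyn}}$, applying Stein's lemma after conditioning, and invoking the deterministic policy gradient identity for the linear actor. Your additional care with the noise bookkeeping and the alternative explicit computation are more detail than the paper supplies, but the core argument is identical.
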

The proof of this is a direct consequence of Remark~\ref{rem: extension beyond LQR}. We now introduce a result similar to~\ref{lem: gradient estimate bounds} where we provide some bounds on this gradient estimate in the noisy dynamics setting.
\begin{lemma} \label{lem: gradient estimate bounds - noisy}
    Suppose $\delta \in (0, \frac{1}{e}]$, and $\gamma$ is chosen as in Lemma~\ref{lem: exponential decay}. 
    Then for any $K \in \G^{\text{lqr}}$, we have that
    \begin{equation*}
        \lVert \widehat{\nabla \Cdyn} (K) \rVert_F \leq \frac{\tilde{\xi}_3}{1 - \gamma} \left(\log \frac{1}{\delta}\right)^{3/2}
    \end{equation*}
    with probability at least $1-\delta$, where 
$\tilde{\xi}_1, \tilde{\xi}_2, \tilde{\xi}_3 \in \real$ are given by
    \begin{align*}
        \tilde{\xi}_1 &:= \frac{M^3 C_m^{3/2}}{(1-r)^3} \left( \lVert Q \rVert + 2 \lVert R \rVert \widetilde{c_{K_1}}^2 + 2 \gamma \left(\lVert Q \rVert + \lVert R \rVert \widetilde{c_{K_1}}^2\right) \frac{(M^2 r+2)^2}{1 - \gamma} \right) \\
        \tilde{\xi}_2 &:= \frac{2M C_m^{1/2}}{1-r} \left( \lVert R \rVert + \gamma \left(\lVert Q \rVert + \lVert R \rVert \widetilde{c_{K_1}}^2\right) \frac{M^2 \|B\|^2}{1 - \gamma}\right) \\
        \tilde{\xi}_3 &:= \frac{1}{\sigma} \left(\tilde{\xi}_1 5^{1/2} m^{1/2}\right) + \sigma \left( \tilde{\xi}_2 5^{3/2} m^{3/2} \right), 
    \end{align*}
    where $M$ and $r$ are defined in~\eqref{eq: M, r definition}. Moreover, 
    \begin{equation*}
        \E \lVert \widehat{\nabla \C} (K) \rVert_F^2 \leq \frac{\tilde{\xi}_4}{(1 - \gamma)^2},
    \end{equation*}
    where    
    \begin{align*}
        \tilde{\xi}_4 &:= \frac{1}{\sigma^2} \tilde{\xi}_1^2 m + 2 \tilde{\xi}_1 \tilde{\xi}_2 m (m+2) + \sigma^2 \tilde{\xi}_2^2 m (m+2) (m+4).
    \end{align*}
\end{lemma}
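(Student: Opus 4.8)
The plan is to follow the proof of Lemma~\ref{lem: gradient estimate bounds} essentially verbatim, with the Kreiss-constant control of $\sup_{t\ge 0}\lVert(A-BK)^t\rVert$ replaced everywhere by the explicit exponential-decay bound $\lVert(A-BK)^t\rVert_2\le M r^t$ of Lemma~\ref{lem: exponential decay}, which is uniform over $K\in\G^{\text{lqr}}$ (and $\G^{\text{lqr}}_{\text{dyn}}=\G^{\text{lqr}}$, as already noted). Starting from~\eqref{eq: gradient estimate practical formulation - noisy dynamics} and submultiplicativity of the Frobenius norm gives
\[
\lVert\widehat{\nabla\Cdyn}(K)\rVert_F\le\frac{1}{\sigma(1-\gamma)}\,\lVert\eta_{\hat t}\rVert\,\lVert x_{\hat t}\rVert\,\bigl|Q^{K}_{\text{dyn}}(x_{\hat t},-Kx_{\hat t}+\sigma\eta_{\hat t})\bigr|,
\]
so the whole task is to bound $\lVert x_{\hat t}\rVert$ and $Q^{K}_{\text{dyn}}$. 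Since $x_0=0$ and $x_{t+1}=(A-BK)x_t+z_t$ for $t<\hat t$, we have $x_{\hat t}=\sum_{i=0}^{\hat t-1}(A-BK)^{\hat t-1-i}z_i$; Lemma~\ref{lem: exponential decay} together with $\lVert z_i\rVert\le C_m^{1/2}$ a.s. and the geometric series gives $\lVert x_{\hat t}\rVert\le M C_m^{1/2}/(1-r)$, the analogue of~\eqref{eq: state_norm_bound}.

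The only genuinely new computation is the bound on $Q^{K}_{\text{dyn}}$. Substituting $u_{\hat t}=-Kx_{\hat t}+\sigma\eta_{\hat t}$, the terms $x_{\hat t}^\top Qx_{\hat t}$ and $u_{\hat t}^\top Ru_{\hat t}$ are handled as in~\eqref{eq: Q-function bound}, using $\lVert{-}Kx_{\hat t}+\sigma\eta_{\hat t}\rVert^2\le 2\widetilde{c_{K_1}}^2\lVert x_{\hat t}\rVert^2+2\sigma^2\lVert\eta_{\hat t}\rVert^2$ and $\lVert K\rVert\le\widetilde{c_{K_1}}$. For the tail $\sum_{t\ge\hat t+1}\gamma^{t-\hat t}x_t^\top(Q+K^\top RK)x_t$ one first bounds the perturbed state $x_{\hat t+1}=(A-BK)x_{\hat t}+\sigma B\eta_{\hat t}+z_{\hat t}$: the decay bound and $\lVert z_{\hat t}\rVert\le C_m^{1/2}$ give $\lVert x_{\hat t+1}\rVert\le (M^2 r+1)C_m^{1/2}/(1-r)+\sigma\lVert B\rVert\lVert\eta_{\hat t}\rVert$; then propagating via $\lVert x_{\hat t+1+s}\rVert\le M r^{s}\lVert x_{\hat t+1}\rVert+M C_m^{1/2}/(1-r)$ and separating the $\eta_{\hat t}$-free part (bounded by $\tfrac{M C_m^{1/2}}{1-r}(r^{s}(M^2 r+1)+1)\le \tfrac{M C_m^{1/2}(M^2 r+2)}{1-r}$) from the part proportional to $\lVert\eta_{\hat t}\rVert$ (bounded by $M r^{s}\sigma\lVert B\rVert\lVert\eta_{\hat t}\rVert$). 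Applying $\lVert v+w\rVert^2\le 2\lVert v\rVert^2+2\lVert w\rVert^2$, $\lVert Q+K^\top RK\rVert\le\lVert Q\rVert+\lVert R\rVert\widetilde{c_{K_1}}^2$, and the geometric sums $\sum_{s\ge0}\gamma^{s+1}=\gamma/(1-\gamma)$ and $\sum_{s\ge0}(\gamma r^2)^s\le 1/(1-\gamma)$, then collecting the terms free of $\eta_{\hat t}$ and those proportional to $\lVert\eta_{\hat t}\rVert^2$, one arrives at $Q^{K}_{\text{dyn}}\le \tilde A\,\frac{M^2 C_m}{(1-r)^2}+2\tilde B\,\sigma^2\lVert\eta_{\hat t}\rVert^2$, where $\tilde A=\lVert Q\rVert+2\lVert R\rVert\widetilde{c_{K_1}}^2+\frac{2\gamma(\lVert Q\rVert+\lVert R\rVert\widetilde{c_{K_1}}^2)(M^2 r+2)^2}{1-\gamma}$ and $\tilde B=\lVert R\rVert+\frac{\gamma(\lVert Q\rVert+\lVert R\rVert\widetilde{c_{K_1}}^2)M^2\lVert B\rVert^2}{1-\gamma}$. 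Multiplying by $\frac{1}{\sigma(1-\gamma)}\lVert\eta_{\hat t}\rVert\lVert x_{\hat t}\rVert$ and inserting $\lVert x_{\hat t}\rVert\le M C_m^{1/2}/(1-r)$ yields the analogue of~\eqref{eq: gradient estimate norm bound before LM}, namely $\lVert\widehat{\nabla\Cdyn}(K)\rVert_F\le\frac{1}{1-\gamma}\bigl(\frac1\sigma\tilde{\xi}_1\lVert\eta_{\hat t}\rVert+\sigma\tilde{\xi}_2\lVert\eta_{\hat t}\rVert^3\bigr)$ with $\tilde{\xi}_1=\tilde A\,M^3 C_m^{3/2}/(1-r)^3$ and $\tilde{\xi}_2=2\tilde B\,M C_m^{1/2}/(1-r)$, which are exactly the constants in the statement (the factor $(M^2 r+2)^2$ being precisely the residue of the step $r^{s}(M^2 r+1)+1\le M^2 r+2$).

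From this point the argument is identical to the end of the proof of Lemma~\ref{lem: gradient estimate bounds}. Since $\lVert\eta_{\hat t}\rVert^2\sim\chi^2(m)$, the Laurent--Massart inequality~\eqref{eq: Lauren-Massart Chi concentration} with $y=m\log\frac1\delta\ge m$ (using $\delta\le 1/e$) gives $\lVert\eta_{\hat t}\rVert\le 5^{1/2}m^{1/2}(\log\frac1\delta)^{1/2}$ and $\lVert\eta_{\hat t}\rVert^3\le 5^{3/2}m^{3/2}(\log\frac1\delta)^{3/2}$ with probability at least $1-\delta$, yielding the first claim with $\tilde{\xi}_3=\frac1\sigma\tilde{\xi}_1 5^{1/2}m^{1/2}+\sigma\tilde{\xi}_2 5^{3/2}m^{3/2}$; and squaring, taking expectations, and using $\E\lVert\eta_{\hat t}\rVert^2=m$, $\E\lVert\eta_{\hat t}\rVert^4=m(m+2)$, $\E\lVert\eta_{\hat t}\rVert^6=m(m+2)(m+4)$ gives the second-moment bound with $\tilde{\xi}_4$. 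I expect the tail-sum estimate to be the only real obstacle: one must propagate the closed-loop decay carefully enough to keep the exploration noise $\eta_{\hat t}$ separated from the additive noise and to reproduce the precise constants; everything else transfers with no change from Lemma~\ref{lem: gradient estimate bounds}.
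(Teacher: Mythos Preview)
Your proposal is correct and follows essentially the same approach as the paper's proof: both bound $\lVert x_{\hat t}\rVert$ via the exponential-decay estimate and the geometric series, propagate to $\lVert x_{\hat t+1}\rVert$ and then to $\lVert x_t\rVert$ for $t>\hat t$, split the bound on $Q^K_{\text{dyn}}$ into an $\eta_{\hat t}$-free part and a part proportional to $\lVert\eta_{\hat t}\rVert^2$, and finish with the chi-squared tail and moment identities exactly as in Lemma~\ref{lem: gradient estimate bounds}. The only cosmetic difference is that the paper drops the $r^s$ factor in $\lVert x_t\rVert$ immediately (bounding $M r^{t-\hat t-1}\le M$) whereas you carry it a step further before discarding it via $\sum_{s\ge 0}(\gamma r^2)^s\le 1/(1-\gamma)$; both routes yield the stated constants $\tilde\xi_1,\tilde\xi_2$.
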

\begin{proof}
    First, note that since $x_0=0$ in this setting, it holds that
    \[
    x_{\hat{t}} = \sum_{i=0}^{\hat{t} - 1} (A-BK)^i z_{\hat{t} - 1 - i},
    \]
    and hence,
    \begin{align}
        \| x_{\hat{t}} \| \leq \sum_{i=0}^{\hat{t} - 1} \| (A-BK)^i \| \|z_{\hat{t} - 1 - i} \| \overset{\mathrm{(i)}}{\leq} \sum_{i=0}^{\hat{t} - 1} (M r^i) C_m^{1/2} \leq M C_m^{1/2} \sum_{i=0}^{\infty} r^i = \frac{M C_m^{1/2}}{1-r}, \label{eq: x_t_hat size bound}
    \end{align}
    where (i) follows from Lemma~\ref{lem: exponential decay} and assumption~\eqref{eq: noise_assumption} on the additive noise. Moreover, we have
    \[
    x_{\hat{t}+1} = (A-BK) x_{\hat{t}} + \sigma B \eta_{\hat{t}} + z_{\hat{t}},
    \]
    and thus,
    \begin{align}
        \| x_{\hat{t}+1} \| &\leq \| A-BK\| \| x_{\hat{t}} \| + \sigma \| B \| \|\eta_{\hat{t}} \| + C_m^{1/2} \leq (Mr) \frac{M C_m^{1/2}}{1-r} + \sigma \| B \| \|\eta_{\hat{t}} \| + C_m^{1/2}. \label{eq: x_t_hat_+1 size bound}
    \end{align}
    Additionally, for all $t \geq \hat{t} + 1$, we can write
    \[
    x_t = (A-BK)^{t-\hat{t}-1} x_{\hat{t}+1} + \sum_{i=0}^{t-\hat{t}-2} (A-BK)^i z_{t-1-i},
    \]
    and hence,
    \begin{align}
        \| x_t \| &\leq M \| x_{\hat{t}+1} \| + \sum_{i=0}^{t-\hat{t}-2} (M r^i) C_m^{1/2} \cr
        &\overset{\mathrm{(i)}}{\leq} M \left( \frac{M^2 r C_m^{1/2}}{1-r} + \sigma \|B\| \|\eta_{\hat{t}}\|+C_m^{1/2} \right) + \frac{M C_m^{1/2}}{1-r} \cr
        &\leq \frac{M C_m^{1/2} (M^2 r + 2)}{1-r} + \sigma M \|B\| \| \eta_{\hat{t}} \|, \label{eq: x_t size bound after t_hat}
    \end{align}
    where (i) follows from~\eqref{eq: x_t_hat_+1 size bound}. We are now in a position to show the following upper bound:
    \begin{align}
        &Q^{K}_{\text{dyn}} (x_{\hat{t}}, -K x_{\hat{t}} + \sigma \eta_{\hat{t}}) \cr
        = &x_{\hat{t}}^\top Q x_{\hat{t}} + (-K x_{\hat{t}} + \sigma \eta_{\hat{t}})^\top R (-K x_{\hat{t}} + \sigma \eta_{\hat{t}}) + \sum_{t=\hat{t}+1}^\infty \gamma^{t - \hat{t}} x_{t}^\top (Q + K^\top R K) x_t \cr
        \leq &\|Q\| \|x_{\hat{t}}\|^2  + \|R\| \|-K x_{\hat{t}} + \sigma \eta_{\hat{t}} \|^2 + \sum_{t=\hat{t}+1}^\infty \gamma^{t - \hat{t}} \left(\|Q\| + \|R\| \widetilde{c_{K_1}}^2 \right) \|x_t\|^2 \cr
        \overset{\mathrm{(i)}}{\leq} &\|Q\| \|x_{\hat{t}}\|^2  + 2 \|R\|\left( \|K\|^2 \| x_{\hat{t}} \|^2 + \sigma^2 \|\eta_{\hat{t}}\|^2 \right) \cr
        &+ \sum_{t=\hat{t}+1}^\infty \gamma^{t - \hat{t}} \left(\|Q\| + \|R\| \widetilde{c_{K_1}}^2 \right) \left( \frac{M C_m^{1/2} (M^2 r + 2)}{1-r} + \sigma M \|B\| \| \eta_{\hat{t}} \| \right)^2 \cr
        \overset{\mathrm{(ii)}}{\leq} &\|Q\| \frac{M^2 C_m}{(1-r)^2} + 2 \|R\| \left( \widetilde{c_{K_1}}^2 \frac{M^2 C_m}{(1-r)^2} + \sigma^2 \|\eta_{\hat{t}}\|^2 \right) \cr
        &+ 2 \left(\|Q\| + \|R\| \widetilde{c_{K_1}}^2 \right) \left( \frac{M^2 C_m (M^2 r + 2)^2}{(1-r)^2} + \sigma^2 M^2 \|B\|^2 \| \eta_{\hat{t}} \|^2 \right) \sum_{t=\hat{t}+1}^\infty \gamma^{t - \hat{t}} \cr
        = &\frac{M^2 C_m}{(1-r)^2} \left(\|Q\| + 2 \|R\| \widetilde{c_{K_1}}^2 + 2 \left(\|Q\| + \|R\| \widetilde{c_{K_1}}^2 \right) (M^2 r + 2)^2 \frac{\gamma}{1-\gamma} \right) \cr
        &+ 2 \sigma^2 \left( \|R\| + \left(\|Q\| + \|R\| \widetilde{c_{K_1}}^2 \right) M^2 \|B\|^2 \frac{\gamma}{1-\gamma} \right) \|\eta_{\hat{t}}\|^2, \label{eq: Q-function bound noisy dynamics}
    \end{align}
    where (i) follows from~\eqref{eq: x_t size bound after t_hat} and (ii) from~\eqref{eq: x_t_hat size bound}. Combining~\eqref{eq: x_t_hat size bound} and~\eqref{eq: Q-function bound noisy dynamics}, we have
    \begin{align}
        &\lVert \widehat{\nabla \Cdyn} (K) \rVert_F \cr
        \leq &\frac{1}{\sigma (1-\gamma)} Q^{K}_{\text{dyn}} (x_{\hat{t}}, -K x_{\hat{t}} + \sigma \eta_{\hat{t}}) \| x_{\hat{t}} \| \|\eta_{\hat{t}}\| \cr
        \overset{\mathrm{(i)}}{\leq} &\frac{1}{\sigma (1-\gamma)} \frac{M^3 C_m^{3/2}}{(1-r)^3} \left( \lVert Q \rVert + 2 \lVert R \rVert \widetilde{c_{K_1}}^2 + 2 \gamma \left(\lVert Q \rVert + \lVert R \rVert \widetilde{c_{K_1}}^2\right) \frac{(M^2 r+2)^2}{1 - \gamma} \right) \|\eta_{\hat{t}}\| \cr
        &+ \frac{\sigma}{1-\gamma} \frac{2 M C_m^{1/2}}{1-r} \left( \lVert R \rVert + \gamma \left(\lVert Q \rVert + \lVert R \rVert \widetilde{c_{K_1}}^2\right) \frac{M^2 \|B\|^2}{1 - \gamma}\right) \|\eta_{\hat{t}}\|^3 \cr
        = &\frac{1}{1-\gamma} \left( \frac{1}{\sigma} \tilde{\xi}_1 \|\eta_{\hat{t}}\| + \sigma \tilde{\xi}_2 \|\eta_{\hat{t}}\|^3 \right)
    \end{align}
    which resembles the expression of the bound~\eqref{eq: gradient estimate norm bound before LM} shown for the random initialization setting. Therefore, the rest of the proof follows exactly like that of Lemma~\ref{lem: gradient estimate bounds}, after substituting $\xi_1,\xi_2$ with $\tilde{\xi}_1, \tilde{\xi}_2$ respectively.
\end{proof}
Now note that as a consequence of Lemma~\ref{lem: noisy-random cost equivalence}, and as also pointed out in~\cite{DM-AP-KB-KK-PLB-MJW:18}, $\Cdyn (K)$ is also $(\frac{\gamma}{1-\gamma} \phi_{K}, \beta_K)$ locally smooth, $(\frac{\gamma}{1-\gamma} \lambda_K, \zeta_{K})$ locally Lipschitz, and globally $\frac{\gamma}{1-\gamma} \mu_{\text{lqr}}$-PL. Now similar to before, we recall $\omega_K = \min \{ \beta_K, \zeta_K \}$ and define the quantities
\begin{align*}
    &\phi_\text{dyn} := \sup_{K \in \mathcal{G}^\text{lqr}_{\text{dyn}}} \frac{\gamma}{1-\gamma} \phi_K = \frac{\gamma}{1-\gamma} \sup_{K \in \mathcal{G}^\text{lqr}} \phi_K = \frac{\gamma}{1-\gamma} \phi_{\text{lqr}}, \\ &\lambda_\text{dyn} := \sup_{K \in \mathcal{G}^\text{lqr}_{\text{dyn}}} \frac{\gamma}{1-\gamma} \lambda_K = \frac{\gamma}{1-\gamma} \sup_{K \in \mathcal{G}^\text{lqr}} \lambda_K = \frac{\gamma}{1-\gamma} \lambda_{\text{lqr}}, \\ &\omega_\text{dyn} := \inf_{K \in \mathcal{G}^\text{lqr}_{\text{dyn}}} \omega_K = \inf_{K \in \mathcal{G}^\text{lqr}} \omega_K = \omega_{\text{lqr}}, \\
    &\mu_{\text{dyn}} := \frac{\gamma}{1-\gamma} \mu_{\text{lqr}},
\end{align*}
where the equalities in the first three lines follow from $\G^{\text{lqr}}_{\text{dyn}} = \G^{\text{lqr}}$, which holds due to the cost equivalence shown in Lemma~\ref{lem: noisy-random cost equivalence}. Building on this, along with utilizing Corollary~\ref{cor: gradient estimate expectation noisy dynamics} and Lemma~\ref{lem: gradient estimate bounds - noisy}, we have all the necessary tools to provide the equivalent convergence result of Theorem~\ref{thm: lqr policy gradient} for the noisy dynamics setting.
\begin{corollary}
    \label{cor: lqr policy gradient - noisy}
    Suppose $K_0$ is stable and $\gamma$ is as suggested in Lemma~\ref{lem: exponential decay}, and the update rule follows
    \begin{equation}
        K_{t+1} = K_t - \alpha_t  \widehat{\nabla \Cdyn} (K_t). \label{eq: update rule - noisy dynamics sgd}
    \end{equation}
    If the step-size $\alpha_t$ is chosen as
    \begin{align*}
        \alpha_t = \frac{2}{\mu_{\text{dyn}}} \frac{1}{t+N} \quad &\text{for} \quad N = \max\left\{N_1, \frac{2}{\mu_{\text{dyn}}} \frac{\tilde{\xi}_3 \left(\log{\frac{1}{\delta}}\right)^{3/2}}{(1 - \gamma)\omega_{\text{dyn}}}\right\},
    \end{align*}
    where
    \begin{align*}
        N_1 = \max \left\{ 2,\frac{4 \phi_{\text{dyn}} \tilde{\xi}_4}{\mu_{\text{dyn}}^2 (1 - \gamma)^2} \frac{2}{\Cdyn(K_0)} \right\},
    \end{align*}
    then for a given error tolerance $\eps$ such that $\Cdyn (K_0) \geq \frac{\eps}{20}$, and $\delta$ chosen arbitrarily to satisfy
    \begin{align*}
        \delta \leq \min &\Bigg\{ 2 \times 10^{-5}, \left( \frac{\phi_{\text{dyn}} \tilde{\xi}_4 \omega_{\text{dyn}}}{960 \tilde{\xi}_3^2 \widetilde{c_{K_1}} \Cdyn (K_0)} \right)^3 \eps^3, \cr
        &\left( \frac{\phi_{\text{dyn}} \tilde{\xi}_4}{480 (1 - \gamma)\mu_{\text{dyn}} \tilde{\xi}_3 \widetilde{c_{K_1}} N_1 \Cdyn (K_0)} \right)^3 \eps^3, \left( \frac{\mu_{\text{dyn}} (1 - \gamma)}{240 \tilde{\xi}_3 \widetilde{c_{K_1}}} \right)^3 \eps^3 \Bigg\},
    \end{align*}
    the iterate $K_T$ of~\eqref{eq: update rule - noisy dynamics sgd} after 
    \begin{equation*}
        T = \frac{40}{\eps} N \Cdyn (K_0)
    \end{equation*}
    steps satisfies
    \begin{equation*} 
        \C(K_T) - \C(K^*) \leq \eps
    \end{equation*}
    with a probability of at least $4/5 - \delta T$.
\end{corollary}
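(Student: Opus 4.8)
The plan is to replicate, essentially verbatim, the chain of arguments behind Theorem~\ref{thm: lqr policy gradient}, replacing every random-initialization object by its noisy-dynamics counterpart. The observation that makes this transfer routine is that $\G^{\text{lqr}}_{\text{dyn}} = \G^{\text{lqr}}$ (Lemma~\ref{lem: noisy-random cost equivalence}, since the scaling $\frac{\gamma}{1-\gamma}$ sits on both sides of the defining inequality), so the feasible set is the same compact set, with the same radius $\omega_{\text{dyn}} = \omega_{\text{lqr}}$ and the same bound $\|K\|\le\widetilde{c_{K_1}}$ on it; moreover $\Cdyn$ is $(\phi_{\text{dyn}},\beta_K)$-locally smooth, $(\lambda_{\text{dyn}},\zeta_K)$-locally Lipschitz, and globally $\mu_{\text{dyn}}$-PL, with all three constants carrying the factor $\frac{\gamma}{1-\gamma}$; Corollary~\ref{cor: gradient estimate expectation noisy dynamics} gives $\E[\widehat{\nabla\Cdyn}(K)] = \nabla\Cdyn(K)$; and Lemma~\ref{lem: gradient estimate bounds - noisy} gives both the high-probability bound $\|\widehat{\nabla\Cdyn}(K)\|_F\le\frac{\tilde{\xi}_3}{1-\gamma}(\log\frac1\delta)^{3/2}$ and the second-moment bound $\E\|\widehat{\nabla\Cdyn}(K)\|_F^2\le\frac{\tilde{\xi}_4}{(1-\gamma)^2}$.

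First I would set $\Delta_t := \Cdyn(K_t) - \Cdyn(K^*)$, define the event $\mathcal{A}_t := \{\|\widehat{\nabla\Cdyn}(K_t)\|_F\le\frac{\tilde{\xi}_3}{1-\gamma}(\log\frac1\delta)^{3/2}\}$, and introduce the stopping times $\tau_1 := \min\{t : \Delta_t > 10\Cdyn(K_0)\}$, $\tau_2 := \min\{t\ge 1 : \|\widehat{\nabla\Cdyn}(K_{t-1})\|_F > \frac{\tilde{\xi}_3}{1-\gamma}(\log\frac1\delta)^{3/2}\}$, and $\tau := \min\{\tau_1,\tau_2\}$, exactly as in \eqref{eq: tau_1 definition}--\eqref{eq:stopping-times}. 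Then I would re-derive, word for word from their random-initialization proofs, the dyn-analogs of Lemma~\ref{lem: gradient estimate conditioned} (the bound $\|\E[\widehat{\nabla\Cdyn}(K_t)1_{\mathcal{A}_t}] - \nabla\Cdyn(K_t)\|_F\le\frac{3\tilde{\xi}_3}{1-\gamma}\delta(\log\frac1\delta)^{3/2}$, which uses only unbiasedness and the tail bound) and Lemma~\ref{lem: policy gradient recursive} (the one-step recursion, using local smoothness with $\phi_{\text{dyn}}$, the PL inequality with $\mu_{\text{dyn}}$, the variance bound with $\tilde{\xi}_4$, a Cauchy--Schwarz step with $\|\nabla\Cdyn(K_t)\|_F$ bounded on $\G^{\text{lqr}}$, and the conditioned-bias bound just obtained); the step-size requirement $\alpha_t\le\omega_{\text{dyn}}/(\frac{\tilde{\xi}_3}{1-\gamma}(\log\frac1\delta)^{3/2})$ needed there is supplied by the stated choice of $N$.

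Next I would reproduce Proposition~\ref{prop: lqr policy gradient} in this setting: the inductive Sublemma~\ref{sublem: lqr_pg_inductive} giving $\E[\Delta_t 1_{\tau>t}]\le\frac{\eps}{40} + \frac{N\Cdyn(K_0)}{t+N}$, which at $t = T = \frac{40}{\eps}N\Cdyn(K_0)$ yields $\E[\Delta_T 1_{\tau>T}]\le\frac{\eps}{20}$; the auxiliary Sublemma~\ref{sublem: prelim to martingale} (the step-size algebra), whose validity is exactly what the four $\delta$-thresholds in the corollary's hypothesis encode; the supermartingale $Y_t := \Delta_{\tau_1\wedge t}1_{\tau_2>t} + \frac{4\phi_{\text{dyn}}\tilde{\xi}_4}{(1-\gamma)^2\mu_{\text{dyn}}^2}\frac1{t+N}$; and the decomposition $\PP\{\tau\le T\} = \PP(\mathcal{E}_1) + \PP(\mathcal{E}_2)$ with $\PP(\mathcal{E}_1)\le\delta T$ (union bound over Lemma~\ref{lem: gradient estimate bounds - noisy}) and $\PP(\mathcal{E}_2)\le\PP(\mathcal{E}_3)\le\frac{3}{20}$ (Doob/Ville applied to $Y_t$, using $\E[Y_0]\le\Delta_0 + \Cdyn(K_0)/2$, which follows from $N\ge N_1$). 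Markov's inequality then gives $\PP\{\Delta_T\ge\eps\}\le\frac1{\eps}\E[\Delta_T 1_{\tau>T}] + \PP\{\tau\le T\}\le\frac1{20} + \frac3{20} + \delta T$, i.e.\ the claimed success probability $4/5 - \delta T$.

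The main obstacle -- the only part requiring care rather than transcription -- is tracking the $\frac{\gamma}{1-\gamma}$ factors consistently. Since $\mu_{\text{dyn}}$ and $\phi_{\text{dyn}}$ both carry this factor, the composite quantities $N_1\propto\phi_{\text{dyn}}\tilde{\xi}_4/\mu_{\text{dyn}}^2$, the step size $\alpha_t = \frac{2}{\mu_{\text{dyn}}(t+N)}$, the martingale drift $\propto\phi_{\text{dyn}}\tilde{\xi}_4/\mu_{\text{dyn}}^2$, and the $\delta$-thresholds must all be checked to make every inequality in the two Sublemmas close; in particular one must confirm that the bound $\|\nabla\Cdyn(K)\|_F\le\frac{\gamma}{1-\gamma}\widetilde{c_{K_1}}$ on $\G^{\text{lqr}}$ is compatible with the $\widetilde{c_{K_1}}$ appearing in the corollary's $\delta$-conditions, and if not, simply shrink $\delta$ by the corresponding constant -- harmless, since $T$ depends on $\delta$ only through $\log\frac1\delta$ and the rate stays $\widetilde{\mathcal{O}}(1/\eps)$. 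Beyond this bookkeeping, the argument is identical to the proof of Theorem~\ref{thm: lqr policy gradient}.
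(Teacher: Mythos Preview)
Your proposal is correct and matches the paper's approach exactly: the paper does not supply a separate proof for this corollary but simply asserts, after establishing Corollary~\ref{cor: gradient estimate expectation noisy dynamics} and Lemma~\ref{lem: gradient estimate bounds - noisy} together with the identification $\G^{\text{lqr}}_{\text{dyn}}=\G^{\text{lqr}}$ and the $\tfrac{\gamma}{1-\gamma}$-scaled regularity constants, that ``we have all the necessary tools to provide the equivalent convergence result of Theorem~\ref{thm: lqr policy gradient} for the noisy dynamics setting.'' Your plan to rerun Lemmas~\ref{lem: gradient estimate conditioned}--\ref{lem: policy gradient recursive}, Proposition~\ref{prop: lqr policy gradient} (with its two sublemmas and the supermartingale $Y_t$), and the final Markov step under the substitutions $(\mu_{\text{lqr}},\phi_{\text{lqr}},\omega_{\text{lqr}},\xi_3,\xi_4,\Cinit)\mapsto(\mu_{\text{dyn}},\phi_{\text{dyn}},\omega_{\text{dyn}},\tilde{\xi}_3,\tilde{\xi}_4,\Cdyn)$ is precisely what the paper intends, and your flag about tracking the $\tfrac{\gamma}{1-\gamma}$ factor on $\|\nabla\Cdyn(K)\|_F$ is the only nontrivial bookkeeping point---one the paper itself leaves implicit.
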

Furthermore, we can also extend the mini-batched gradient estimation argument to this setting. Let
\begin{equation}
\overline{\nabla \Cdyn}_{N_s} (K) := \frac{1}{N_s} \sum_{i=1}^{N_s} \widehat{\nabla \Cdyn}_i (K),
\label{eq: minibatch_gradient_estimate - noisy dynamics}
\end{equation}
where each $\widehat{\nabla \Cdyn}_i (K)$ is an i.i.d. copy of $\widehat{\nabla \Cdyn} (K)$ in~\eqref{eq: gradient estimate practical formulation - noisy dynamics}. We are now in a position to provide a convergence result similar to Theorem~\ref{thm: lqr policy gradient - averaged} for the noisy dynamics setting.
\begin{corollary} \label{cor: lqr policy gradient - averaged - noisy dynamics}
    Suppose $K_0$ is stable, $\gamma$ is as suggested in Lemma~\ref{lem: exponential decay}, and the update rule follows
    \begin{equation}
        K_{t+1} = K_t - \alpha \overline{\nabla \Cdyn}_{N_s}(K_t) \label{eq: update_rule_appendix - noisy dynamics}
    \end{equation}
    with a constant step-size $\alpha$ satisfying
    \begin{equation*}
        \alpha \leq \min \left\{ \frac{\omega_{\text{dyn}}}{\widetilde{c_{K_1}} + \sqrt{\frac{\mu_{\text{dyn}} \Cdyn (K_0)}{8}}}, \frac{1}{4 \phi_{\text{dyn}}}, \frac{4}{\mu_{\text{dyn}}} \right\}.
    \end{equation*}
    Then for a given error tolerance $\eps \in (0,\Cdyn (K_0)]$, and for any $\delta \leq \min \left\{ e^{-3/2}, \frac{ 1-\gamma}{3 \tilde{\xi}_3} \sqrt{\frac{\mu_{\text{dyn}} \eps}{8}} \right\}$, the update rule~\eqref{eq: update_rule_appendix - noisy dynamics}, with $N_s \sim \widetilde{\mathcal{O}} (1/\eps)$ chosen according to
    \begin{align*}
    N_s \geq \Bigg\lceil \max \Bigg\{ 5000, 8 \left( \log \frac{2}{\delta} \right)^3, &\frac{2048 \tilde{\xi}_3^2}{9 (1-\gamma)^2 \mu_{\text{dyn}}} \frac{1}{\eps} \left( \log \frac{2(mn+1)}{\delta} \right)^2, \cr
    &\frac{128 \tilde{\xi}_4}{\mu_{\text{dyn}} (1-\gamma)^2} \frac{1}{\eps} \log \frac{2(mn+1)}{\delta} \Bigg\} \Bigg\rceil = \widetilde{\mathcal{O}} \left(\frac{1}{\eps}\right),
\end{align*}
    guarantees that after
    \begin{equation*}
        T = \frac{4}{\alpha \mu_{\text{dyn}}} \log \left( \frac{2 \Cdyn(K_0)}{\eps} \right)
    \end{equation*}
    iterations, we have
    \[
    \Cdyn(K_T) - \Cdyn(K^*) \leq \eps,
    \]
    with a probability of at least $1 - \delta T$.
\end{corollary}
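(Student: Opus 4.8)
The plan is to transfer the mini-batched argument of Theorem~\ref{thm: lqr policy gradient - averaged} to the noisy dynamics setting, using the cost-equivalence scaling $\Cdyn = \frac{\gamma}{1-\gamma}\Cinit$ of Lemma~\ref{lem: noisy-random cost equivalence} together with the noisy-dynamics versions of the two core building blocks already in hand: unbiasedness of the estimate (Corollary~\ref{cor: gradient estimate expectation noisy dynamics}) and its high-probability norm bound plus second-moment bound (Lemma~\ref{lem: gradient estimate bounds - noisy}). Since $\G^{\text{lqr}}_{\text{dyn}} = \G^{\text{lqr}}$ and, by the discussion preceding the corollary, the smoothness, Lipschitz, and PL constants for $\Cdyn$ on this set are exactly $\phi_{\text{dyn}},\lambda_{\text{dyn}},\mu_{\text{dyn}}$ with $\omega_{\text{dyn}} = \omega_{\text{lqr}}$, the only systematic change relative to the random-initialization proof is the replacement of the quantities $(\xi_3,\xi_4,\mu_{\text{lqr}},\phi_{\text{lqr}},\omega_{\text{lqr}})$ by $(\tilde\xi_3,\tilde\xi_4,\mu_{\text{dyn}},\phi_{\text{dyn}},\omega_{\text{dyn}})$ throughout.

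First I would prove the noisy-dynamics analog of Lemma~\ref{lem: averaged_estimate_error_bound}: for $K \in \G^{\text{lqr}}$ and $N_s$ chosen as in the statement, $\|\overline{\nabla\Cdyn}_{N_s}(K) - \nabla\Cdyn(K)\|_F \le \sqrt{\mu_{\text{dyn}}\eps/8}$ with probability at least $1-\delta$. The argument is verbatim that of Lemma~\ref{lem: averaged_estimate_error_bound}: introduce the truncation events $\mathcal B_i = \{\|\widehat{\nabla\Cdyn}_i(K)\|_F \le \frac{\tilde\xi_3}{1-\gamma}(\log\frac{2N_s}{\delta})^{3/2}\}$, each of probability at least $1-\frac{\delta}{2N_s}$ by Lemma~\ref{lem: gradient estimate bounds - noisy}; decompose $\overline{\nabla\Cdyn}_{N_s}(K) - \nabla\Cdyn(K)$ into a truncation-miss term (which vanishes on $\cap_i\mathcal B_i$, hence contributes nothing with probability at least $1-\delta/2$ by a union bound), a centered average $\frac{1}{N_s}\sum_i S_i$ with $S_i := \widehat{\nabla\Cdyn}_i(K)1_{\mathcal B_i} - \E[\widehat{\nabla\Cdyn}_i(K)1_{\mathcal B_i}]$, and a conditional-bias term bounded by the noisy-dynamics analog of Lemma~\ref{lem: gradient estimate conditioned} (an immediate consequence of Corollary~\ref{cor: gradient estimate expectation noisy dynamics} and Lemma~\ref{lem: gradient estimate bounds - noisy} via the same $\erfc$-integral computation). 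The centered average is controlled by the matrix Bernstein inequality \cite[Theorem 1.6.2]{JAT:15}, with variance proxy $N_s\frac{\tilde\xi_4}{(1-\gamma)^2}$ from the second-moment bound of Lemma~\ref{lem: gradient estimate bounds - noisy} and almost-sure norm bound $\frac{2\tilde\xi_3}{1-\gamma}(\log\frac{2N_s}{\delta})^{3/2}$, after vectorizing to pass from the operator to the Frobenius norm. The stated $N_s = \widetilde{\mathcal O}(1/\eps)$ is precisely what forces each of the three terms below $\frac12\sqrt{\mu_{\text{dyn}}\eps/8}$ (or to zero), and the hypothesized bound on $\delta$ is what tames the conditional-bias term.

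Next I would run the deterministic recursion on a good event. Assuming $K_t \in \G^{\text{lqr}}$, on the event where the above concentration bound holds the step obeys $\|\alpha\overline{\nabla\Cdyn}_{N_s}(K_t)\|_F \le \alpha\big(\widetilde{c_{K_1}} + \sqrt{\mu_{\text{dyn}}\Cdyn(K_0)/8}\big) \le \omega_{\text{dyn}}$ by the choice of $\alpha$ (using the gradient-norm bound $\|\nabla\Cdyn(K_t)\|_F \le \widetilde{c_{K_1}}$ valid for $K_t \in \G^{\text{lqr}}$, read with the $\Cdyn$ scaling of the regularity constants), so local smoothness of $\Cdyn$ applies; expanding exactly as in~\eqref{eq: recursive_inequality_appendix_prelim} and using $\alpha\phi_{\text{dyn}} \le 1/4$ together with the $\mu_{\text{dyn}}$-PL inequality yields $\Delta_{t+1} \le (1 - \tfrac{\alpha\mu_{\text{dyn}}}{4})\Delta_t + \alpha\tfrac{\mu_{\text{dyn}}\eps}{8}$. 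A strong induction identical to the one in Theorem~\ref{thm: lqr policy gradient - averaged}---with $\mathscr E_i$ the event that $\Delta_i \le 10\Cdyn(K_0)$ (hence $K_i \in \G^{\text{lqr}}$) and the one-step bound holds---gives $\PP\{\cap_{i=1}^t\mathscr E_i\} \ge 1-\delta t$ via a per-step union bound; unrolling the recursion on $\cap_{i=1}^T\mathscr E_i$ and substituting $T = \frac{4}{\alpha\mu_{\text{dyn}}}\log\frac{2\Cdyn(K_0)}{\eps}$ drives the geometric term below $\eps/2$, leaving $\Delta_T \le \frac{\eps}{2\Cdyn(K_0)}\Delta_0 + \frac\eps2 \le \eps$ since $\Delta_0 \le \Cdyn(K_0)$.

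I do not expect a genuinely new obstacle: the step that would normally be hard---obtaining a $K$-uniform geometric decay $\|(A-BK)^t\| \le Mr^t$ over $\G^{\text{lqr}}$, which must replace the Kreiss-constant bound of the random-initialization analysis because $x_{\hat t}$ is now an infinite noise convolution rather than $(A-BK)^{\hat t}x_0$---has already been carried out in Lemma~\ref{lem: exponential decay}, and the quadratic-in-$u_{\hat t}$ structure of $Q^K_{\text{dyn}}$ needed to invoke Remark~\ref{rem: extension beyond LQR} (hence Corollary~\ref{cor: gradient estimate expectation noisy dynamics}) was verified just before Lemma~\ref{lem: gradient estimate bounds - noisy}. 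The only point requiring genuine care is bookkeeping: replacing every regularity constant from the random-initialization proof by its $\gamma/(1-\gamma)$-scaled counterpart consistently (the $1/\mu_{\text{dyn}}$ factors in the step-size and in $N$, the $\phi_{\text{dyn}}$ in the smoothness expansion), and checking that the three competing lower bounds on $N_s$---union bound over the $N_s$ truncations, matrix-Bernstein tail, and suppression of the truncation bias---are all met simultaneously by the single choice of $N_s$ stated in the corollary.
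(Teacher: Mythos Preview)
Your proposal is correct and matches the paper's intended approach: the paper states this corollary without proof, leaving it as the direct transcription of Theorem~\ref{thm: lqr policy gradient - averaged} (and its supporting Lemma~\ref{lem: averaged_estimate_error_bound}) to the noisy-dynamics setting via the systematic replacement $(\xi_3,\xi_4,\mu_{\text{lqr}},\phi_{\text{lqr}},\omega_{\text{lqr}}) \mapsto (\tilde\xi_3,\tilde\xi_4,\mu_{\text{dyn}},\phi_{\text{dyn}},\omega_{\text{dyn}})$, exactly as you describe. You have also correctly flagged the only substantive point---that the Kreiss-constant state bound must be replaced by the uniform exponential decay of Lemma~\ref{lem: exponential decay} because $x_{\hat t}$ is now a noise convolution---which the paper handles precisely where you locate it.
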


\end{document}